\DeclareMathOperator{\Span}{span}
\DeclareMathOperator{\Rowrk}{rowrk}
\DeclareMathOperator{\snr}{SNR}
\begin{document}
%
\title{MISO Networks with Imperfect CSIT: A Topological Rate-Splitting Approach}

\author{Chenxi Hao and Bruno Clerckx 
\thanks{Chenxi Hao and Bruno Clerckx are with the Communication and Signal Processing group of Department of Electrical and Electronic Engineering, Imperial College London. Chenxi Hao is also with Beijing Samsung Telecom R\&D Center. Bruno Clerckx is also with the School of Electrical Engineering, Korea University.}
}

\maketitle

\begin{abstract}
Recently, the Degrees-of-Freedom (DoF) region of multiple-input-single-output (MISO) networks with imperfect channel state information at the transmitter (CSIT) has attracted significant attentions. An achievable scheme is known as rate-splitting (RS) that integrates common-message-multicasting and private-message-unicasting. In this paper, focusing on the general $K$-cell MISO IC where the CSIT of each interference link has an arbitrary quality of imperfectness, we firstly identify the DoF region achieved by RS. Secondly, we introduce a novel scheme, so called Topological RS (TRS), whose novelties compared to RS lie in a multi-layer structure and transmitting multiple common messages to be decoded by groups of users rather than all users. The design of TRS is motivated by a novel interpretation of the $K$-cell IC with imperfect CSIT as a weighted-sum of a series of partially connected networks. We show that the DoF region achieved by TRS covers that achieved by RS. Also, we find the maximal sum DoF achieved by TRS via hypergraph fractional packing, which yields the best sum DoF so far. Lastly, for a realistic scenario where each user is connected to three dominant transmitters, we identify the sufficient condition where TRS strictly outperforms conventional schemes.
\end{abstract}
\newtheorem{myprop}{Proposition}
\newtheorem{mycoro}{Corollary}
\newtheorem{myremark}{Remark}
\newtheorem{mydef}{Definition}
\newtheorem{mylemma}{Lemma}

\section{Introduction}\label{sec:intro}
Channel state information at the transmitter (CSIT) is crucial to the downlink multi-user transmission strategies. However, acquiring accurate CSIT is challenging in practical systems. In wireless systems like LTE, the CSIT is obtained by uplink-downlink reciprocity in Time Division Duplex setup, or by user feedback in Frequency Division Duplex Setup. In multi-cell scenarios, the CSIT has to be shared among the transmitters in order to perform coordinated beamforming and/or joint transmission. Those procedures result in imperfect CSIT due to the channel estimation error, quantization error and the Doppler effect caused by the latency in the feedback link and backhaul link. Performing interference mitigation techniques designed for perfect CSIT using imperfect CSIT results in undesirable multi-user interferences, which deteriorate the system performance. Hence, the fundamental question that should therefore be addressed is how to design proper transmission strategies for the imperfect CSIT setting.

Recent work \cite{Ges12} found the optimal DoF region of a two-user multiple-input-single-output (MISO) broadcast channel (BC) with a mixture of perfect delayed CSIT and imperfect instantaneous CSIT. However, one corner point of the optimal DoF region is achieved by a rate-splitting (RS) approach which does not rely on delayed CSIT and is applicable to the scenario with only imperfect instantaneous CSIT. Reminiscent to Han-Kobayashi scheme \cite{HanKobayashi,Etkin08}, each user's message in RS is split into a common and a private part. The private messages are unicast to their respective intended users along Zero-Forcing (ZF) precoders using a fraction of the total power. The common messages are encoded into a super common message, and the super common message is multicast using the remaining power. At the receiver side, each user firstly decodes the super common message and proceeds to decode the desired private message afterwards using successive interference cancellation (SIC). This RS approach can be easily applied to the $K$-user MISO BC. Considering that the CSIT error of user $k$ decays with signal-to-noise-ratio (SNR) as $\snr^{-\alpha_k}$ where $0{\leq}\alpha_1{\leq}{,}\cdots{,}{\leq}\alpha_K{\leq}1$ is commonly termed as the CSIT qualities, the sum DoF achieved by RS is $1{+}\sum_{k{=}1}^{K{-}1}\alpha_k$. Based on the assumption of real input and channel vectors, the optimality of this result was shown in \cite{Davoodi14}.

Since then, there have been extensive researches on RS. The sum rate analysis in the presence of quantized CSIT and the precoder optimization for sum-rate maximization were investigated in \cite{RateAnalysis,HamdiRS}, respectively. Literature \cite{Mingbo_RS_mMIMO} extended the idea of RS into the massive Multiple-Input-Multiple-Output (MIMO) deployment and proposed a Hierarchical RS which exploits the spatial correlation matrices to effectively tackle the multiuser interferences. Other related works on MISO BC can be found in \cite{Tandon12,icc13freq,pimrc2013,Jinyuan_evolving_misobc,Borzoo_Kuser}. The application of RS to the two-cell MISO interference channel (IC) was firstly reported in \cite{xinping_miso_ic}. The scheme was later on extended to the two-cell MIMO IC with asymmetric number of antennas in \cite{Hao15MIMODoF}.

However, designing a scheme suitable for the $K$-cell IC is a non-trivial step, because the interference overheard by a single user come through $K{-}1$ different links and the CSIT of each link may have a particular quality of imperfectness. A promising idea can be drawn from the HRS designed under massive MIMO setting \cite{Mingbo_RS_mMIMO}. In HRS, users are clustered based on the similarity of their transmit correlation matrices. Then, the users in different groups are separated by statistical Zero-Forcing beamforming (ZFBF) using long term CSIT, while the users in the same group are separated by ZFBF using instantaneous CSIT. Due to the imperfect grouping and imperfect instantaneous CSIT, there exists residual intra- and inter-group interference that impacts the system performance. To deal with this problem, RS is evolved to HRS by integrating an outer RS and an inner RS. The outer RS tackles the inter-group interference by multicasting a system common message to be decoded by all users, while the inner RS tackles the intra-group interference by transmitting a group common message for each group. Using SIC, each user decodes the system common message, the group common message of the corresponding group and the desired private message sequentially.

A similar problem occurs in the $K$-cell IC if the users can be categorized into groups such that there are identical intra-group CSIT qualities, and the intra-group CSIT quality is smaller than the inter-group CSIT quanlities. Then, the users belonging to the same and different groups are separated by ZFBF using intra- and inter-group CSIT, respectively. The residual inter- and intra-group interference is tackled by the outer- and inner-layer RS, respectively. Although such a user-grouping method is only applicable to a very limited class of CSIT quality topologies, the concepts of transmitting group common messages and multi-layer structure shed light on the essential point of establishing the transmission block for the general $K$-cell IC with arbitrary CSIT quality topology. The main contributions are stated as follows.

\subsubsection{Achievable DoF region of RS}
Focusing on the $K$-cell MISO IC where the CSIT of each interference link has an arbitrary quality of imperfectness, we firstly consider a logical extension of the RS designed for two-cell MISO IC. Each transmitter divides the message intended for the corresponding user into a common and a private part. Each private message is unicast using an arbitrary fraction of the total power, while the remaining power at each transmitter is employed to multicast the common message to be decoded by all users. We characterize the resultant DoF region and show that it covers the DoF region achieved by conventional ZFBF (private message transmission only) with power control.
\subsubsection{Topological RS with weighted-sum interpretation}
We propose a novel scheme so called Topological RS (TRS), that is suitable for the general $K$-cell MISO IC with arbitrary CSIT quality topology. Unlike RS, each user' message in TRS is split into $N$ parts, i.e., $\mathcal{W}_k{\triangleq}\{w_k^1{,}w_k^2{,}\cdots{,}w_k^N\}$, where $w_k^1$ is a private message to be decoded by user $k$, while $w_k^i{,}i{\geq}2$, is a common message to be decoded by a group of users $\mathcal{R}_k^i$. The power allocated to the common messages and the user group $\mathcal{R}_k^i$ are determined based on the specific CSIT quality topology, so that the group common message $w_k^i$ is drowned into the noise at other users via ZFBF. We show that the DoF region achieved by TRS covers that achieved by RS.

The TRS scheme is inspired by a novel interpretation of the $K$-cell MISO IC with imperfect CSIT as a weighted-sum of a series of partially connected networks with different topologies. The weights of the partially connected networks stand for their separations in the power domain. This weighted-sum interpretation explicitly shows whether or not a user is interfered with one another, thus helping us generating group common messages. Moreover, the DoF region achieved by TRS is interpreted as a weighted-sum of that achieved in those networks, thus allowing us to employ methodologies applicable for partially connected networks to analyze the DoF region achieved with imperfect CSIT.

\subsubsection{Sum DoF using graph theory tools}
As a consequence of the weighted-sum interpretation, studying the sum DoF achieved by TRS is equivalent to studying the sum DoF in each obtained partially connected network. Then, for each partially connected network, we propose two common message groupcasting methods from a graph theory perspective. These two methods called \emph{orthogonal groupcasting} and \emph{maximal groupcasting}\footnote{When a common message is to be decoded by a subset of all users, it is referred as a common message groupcasting.} are respectively built upon the packing and fractional packing of the hypergraph defined by the network topology. The maximal groupcasting method yields the maximal sum DoF in each partially connected network, thus giving the maximal sum DoF achieved by TRS. This sum DoF result is no less than that achieved by RS and ZFBF with power control.

\subsubsection{Results in realistic scenarios}
As it has been shown that in many practical deployments each user has two dominant interferers \cite{BrunoBook}, we consider a realistic setting where each user is connected to its closest three transmitters. We design TRS for a class of CSIT quality topology, which is featured by that the two incoming interference links associated with each user have unequal CSIT qualities $a$ and $b$ where $0{\leq}a{\leq}b{\leq}1$. With maximal groupcasting, we characterize the achievable sum DoF by TRS and show that it is within the range $\left[\frac{K}{3}(1{+}\frac{b}{2}{+}\frac{3a}{2}){,}\frac{K}{3}(1{+}b{+}a)\right]$. For a cyclic CSIT quality topology, we find that the proposed TRS approach strictly outperforms ZFBF with power control as long as $b{+}3a{>}\frac{6}{K}\lfloor\frac{K}{2}\rfloor{-}2$, where $\lfloor\frac{K}{2}\rfloor$ is the maximum integer that is not greater than $\frac{K}{2}$.

The rest of the paper is organized as follows. The system model is introduced in Section \ref{sec:SM}. In Section \ref{sec:pre}, we revisit ZFBF with power control and characterize the DoF region achieved by RS with common message multicasting. In Section \ref{sec:TRS}, we propose the generalized framework of TRS approach together with its weighted-sum interpretation, and study its achievable DoF region and sum DoF performance. Section \ref{sec:real} studies the sum DoF achieved by TRS in realistic scenarios. Section \ref{sec:conclusion} concludes the paper.

Notations: Bold upper and lower letters denote matrices and vectors respectively. A symbol not in bold font denotes a scalar. $({\cdot})^H$, $({\cdot})^T$ and $({\cdot})^\bot$ respectively denote the Hermitian, transpose and the null space of a matrix or vector. $\parallel{\cdot}\parallel$ refers to the norm of a vector. $\Rowrk(\mathbf{A})$ stands for the row rank of matrix $\mathbf{A}$, while $\Span(\mathbf{A})$ refers to the subspace spanned by $\mathbf{A}$. The term $\mathbf{1}_M$ refers to a $M{\times}1$ vector with all $1$ entries. For a set $\mathcal{A}$, $|\mathcal{A}|$ represents its cardinality; for a complex number $a$, $|a|$ stands for its absolute value. The term $1_{C}$ is the indicator function, it is equal to $1$ if condition $C$ holds; otherwise, it is equal to $0$. $\mathbb{E}\left[{\cdot}\right]$ refers to the statistical expectation. $(a)^+$ stands for $\max(a{,}0)$. ${a}\bmod{n}$ calculates the modulus of integer $a$ with the respect of integer $n$. $\lfloor{a}\rfloor$ refers to the maximal integer that is no greater than $a$. 

\section{System Model}\label{sec:SM}
\subsection{$K$-cell Interference Channel}\label{sec:KcellIC}
In this paper, we consider a $K$-cell interference channel, where each transmitter is serving one user in each cell. We assume that there is a sufficient number of antennas, i.e., $K$, at each transmitter in order to perform interference nulling strategies, such as ZFBF etc, while there is a single antenna at each user. The signal transmitted by a certain transmitter is denoted by $\mathbf{s}_k{\in}\mathbb{C}^{K{\times}1}{,}{\forall}k{\in}\mathcal{K}$ where $\mathcal{K}{\triangleq}\{1{,}\cdots{,}K\}$, and it is subject to the power constraint $P$. Then, the received signals write as
\begin{IEEEeqnarray}{rcl}
y_k&{=}&\sum_{j{=}1}^{K}g_{kj}\mathbf{h}_{kj}^H\mathbf{s}_j{+}n_k{,}{\forall}k{\in}\mathcal{K}{,}
\end{IEEEeqnarray}
where $n_k$ is the additive white Gaussian noise with zero mean and unit variance; $\mathbf{h}_{kj}{\in}\mathbb{C}^{K{\times}1}$ represents the channel between transmitter $j$ and user $k$, whose entries are i.i.d Gaussian with zero mean and unit variance; $g_{kj}{\in}\{0{,}1\}$, ${\forall}k{,}j{\in}\mathcal{K}$, is a binary variable. When $g_{kj}{=}1$, it means that transmitter $j$ is connected to user $k$. When $g_{kj}{=}0$, it means that the signal sent out by transmitter $j$ is drowned into the noise at user $k$ due to the path loss. For convenience, let us use $\mathcal{G}{\triangleq}\{g_{kj}\}_{{\forall}k{,}j{\in}\mathcal{K}}$ to denote the network topology.

Throughout the paper, we consider $g_{kk}{=}1$, ${\forall}k{\in}\mathcal{K}$, and thus $P$ is referred as the SNR. For the interference links, we consider that
\begin{itemize}
  \item in Section \ref{sec:pre} and \ref{sec:TRS}, we have $g_{kj}{=}1$, ${\forall}k{\in}\mathcal{K}$ and ${\forall}j{\in}\mathcal{K}{\setminus}j$. This indicates a fully connected network where the interference-to-noise-ratio (INR) is equal to SNR;
  \item in Section \ref{sec:real}, if $j{=}(k{-}1){\bmod}K{,}(k{+}1){\bmod}K$, we have $g_{kj}{=}1$; otherwise $g_{kj}{=}0$. This corresponds to a homogeneous cellular network where user $k$ is only connected to three dominant transmitters \cite{BrunoBook}, i.e., transmitter $k$, $k{-}1$ and $k{+}1$. Note that a cyclic setting is assumed such that user $1$ is connected to transmitter $K$, $1$ and $2$, while user $K$ is connected to transmitter $K{-}1$, $K$ and $1$.
\end{itemize}

\subsection{CSIT Quality Topology}
We consider that the channel vector is expressed as $\mathbf{h}_{kj}{=}\hat{\mathbf{h}}_{kj}{+}\tilde{\mathbf{h}}_{kj}$, where $\hat{\mathbf{h}}_{kj}$ is the imperfect CSIT and $\tilde{\mathbf{h}}_{kj}$ represents the CSIT error, drawn from a continuous distribution.

For the link with $g_{kj}{=}1$, following the classical model firstly introduced in \cite{Ges12,Gou12}, we define the CSIT quality as
\begin{IEEEeqnarray}{rcl}
a_{kj}&{\triangleq}&
-\lim_{P{\to}\infty}\frac{\log_2\mathbb{E}\left[|\mathbf{h}_{kj}^H\hat{\mathbf{h}}_{kj}^\bot|^2\right]}{\log_2P}{,}
{\forall}k{\in}\mathcal{K}{,}{\forall}j{\in}\mathcal{K}{\setminus}k{,}g_{kj}{=}1{,}\label{eq:akj}
\end{IEEEeqnarray}
where the quantity $\mathbb{E}\left[|\mathbf{h}_{kj}^H\hat{\mathbf{h}}_{kj}^\bot|^2\right]$ represents the strength of the residual interference resulted by ZFBF using imperfect CSIT. The expectation is taken over both the imperfect CSIT $\hat{\mathbf{h}}_{kj}$ and the channel vector $\mathbf{h}_{kj}$. This expression is equivalent to $\mathbb{E}\left[|\mathbf{h}_{kj}^H\hat{\mathbf{h}}_{kj}^\bot|^2\right]{=}P^{-a_{kj}}{+}o(P^{-a_{kj}})$ when $P{\to}\infty$. This quantity implies that if transmitter $j$ unicasts a ZF-precoded private message using power $P^{a_{kj}}$, then the residual interference at user $k$ is drowned into the noise. From a DoF perspective, when $a_{kj}{\geq}1$, it is equivalent to having perfect CSIT because the interference can be forced within the noise level and the full DoF $K$ can be achieved by ZFBF \cite{Ges12,Gou12}; when $a_{kj}{=}0$, it is equivalent to the case without CSIT \cite{Ges12,Gou12}, because the interference term is received with the same power level as the desired signal and the resultant sum DoF is $1$. Hence, in this paper, we only focus on the case $0{\leq}a_{kj}{\leq}1{,}{\forall}k{\in}\mathcal{K}{,}{\forall}j{\in}\mathcal{K}{\setminus}k{,}g_{kj}{=}1$.

However, in Section \ref{sec:real}, the CSIT quality of the link with $g_{kj}{=}0$ is not defined, as the strength of the signal sent by transmitter $j$ is drowned into the noise at user $k$ even without performing ZFBF.

Moreover, we consider that the CSIT qualities vary across the links. This leads to a CSIT topology defined by $\mathcal{A}{\triangleq}\{a_{kj}\}_{{\forall}k{\in}\mathcal{K}{,}{\forall}j{\in}\mathcal{K}{\setminus}k{,}g_{kj}{=}1}$. Note that the CSIT qualities of the direct links $a_{kk}$, ${\forall}{\in}\mathcal{K}$, are not included because their values only offer beamforming gain, which does not make a difference on the DoF performance. A CSIT quality topology $\mathcal{A}$ can be also defined using a table (see the fully connected IC in Figure \ref{fig:3user_het} for example), where each row stands for the CSIT qualities of the incoming links of a certain user, while each column represents the CSIT qualities of the outgoing links of a certain transmitter.

\subsection{Rate Splitting}
The message of each user is assumed to be split into $N$ parts, i.e., $\mathcal{W}_k{\triangleq}\{w_k^1{,}w_k^2{,}\cdots{,}w_k^N\}$, where $w_k^1$ is the private message to be decoded by user $k$ only, while $w_k^i{,}i{\geq}2$ is a common message to be decoded by a group of users $\mathcal{R}_k^i$. We consider that each transmitter only has the message intended for its corresponding user.  With imperfect local CSIT, the knowledge of network topology $\mathcal{G}$, and the CSIT quality topology $\mathcal{A}$, the encoding function for each transmitter can be expressed as
\begin{IEEEeqnarray}{rcl}
\mathbf{s}_k&{=}&f(\mathcal{W}_k{,}\hat{\mathbf{h}}_{kk}{,}\{\hat{\mathbf{h}}_{kj}\}_{{\forall}j{\in}\mathcal{K}{\setminus}k}{,} \mathcal{G}{,}\mathcal{A}){,}{\forall}k{\in}\mathcal{K}.
\end{IEEEeqnarray}

At the receiver side, we consider that there is \emph{perfect local CSIR}, namely user $k$ perfectly knows the effective channels, i.e., the multiplication of the precoders and the channel vectors, so as to decode the desired signal. Let $R_k^i$ denote the rate of message $w_k^i$. A rate tuple $\left(\{R_k^1\}_{k{\in}\mathcal{K}}{,}\cdots{,}\{R_k^N\}_{k{\in}\mathcal{K}}\right)$ is said achievable if private message $w_k^1$ is decoded by user $k$, and common message $w_k^i{,}i{\geq}2$ is decoded by the group of users $\mathcal{R}_k^i$, with arbitrary small error probability. Then, the achievable DoF of a certain message $w_k^i$ is defined as $d_k^i{\triangleq}\lim\limits_{P{\to}\infty}\frac{R_k^i}{{\log}_2P}$. The achievable DoF of user $k$ is computed by $d_k{=}\sum_{i{=}1}^{N}d_k^i$.

Throughout the paper, the terminology \emph{common message groupcasting} means that a common message $w_k^i$ is to be decoded by a group of users $\mathcal{R}_k^i$. When the group contains all users, i.e., $\mathcal{R}_k^i{=}\mathcal{K}$, the common message groupcasting becomes \emph{common message multicasting}. When the group is formed by only one user, i.e., $\mathcal{R}_k^i{=}\{k\}$, it actually refers to a \emph{private message unicasting}. 

\section{Preliminaries}\label{sec:pre}
In this section, focusing on a fully connected network with equal SNR and INR, we revisit two benchmark schemes, i.e., conventional ZFBF with power control and RS approach with common message multicasting. For RS, we also propose its achievable DoF region in the fully connected $K$-cell MISO IC with imperfect CSIT.

\subsection{ZFBF with power control}
In conventional ZFBF with power control, transmitter $k$ delivers a private message $w_k$ to the corresponding user using power $P^{r_k}$, $r_k{\leq}1$, along a ZF-precoder $\mathbf{p}_k{\subseteq}\Span(\{\hat{\mathbf{h}}_{jk}^\bot\}_{{\forall}j{\in}\mathcal{K}{\setminus}k})$. The signal received by user $k$ can be expressed as
\begin{IEEEeqnarray}{rcl}
y_k&{=}&\underbrace{\mathbf{h}_{kk}^H\mathbf{p}_kw_k}_{P^{r_k}}{+}
\sum_{j{\in}\mathcal{K}{\setminus}k}\underbrace{\mathbf{h}_{kj}^H\mathbf{p}_jw_j}_{P^{r_j{-}a_{kj}}}{+} \underbrace{n_k}_{P^0}{.}\label{eq:yk_zfbf}
\end{IEEEeqnarray}
By treating the undesired private message as noise, the DoF achieved by each private message writes as
\begin{IEEEeqnarray}{rcl}
d_k&{\leq}&\left(r_k{-}\max_{j{:}j{\in}\mathcal{K}{\setminus}k}(r_j{-}a_{kj})^+\right)^+{,}{\forall}k{\in}\mathcal{K}{.}\label{eq:dk_zfbf}
\end{IEEEeqnarray}
This expression specifies the DoF region achieved by ZFBF with power allocation policy $\mathbf{r}{\triangleq}(r_1{,}\cdots{,}r_K)$. The DoF region achieved by ZFBF with power control, denoted by $\mathcal{D}_{ZF}$, is the union of DoF regions achieved with all the possible power allocation $\mathbf{r}$ where $r_k{\leq}1{,}{\forall}k{\in}\mathcal{K}$.

Notably, by performing ZFBF, the expression in \eqref{eq:yk_zfbf} can be regarded as the received signal in an IC where the direct links have unit gain, while the strength of the interference link is $P^{-a_{kj}}$, ${\forall}k{\neq}j$. Hence, a concise expression of $\mathcal{D}_{ZF}$ by eliminating the variables $\mathbf{r}$ can be obtained using \cite[Theorem 5]{Geng15TIN}.

\subsection{Rate-Splitting with common message multicasting}
The RS approach was firstly introduced focusing on a $2$-cell MISO IC with a symmetric CSIT setting, i.e., $a_{12}{=}a_{21}{=}a$. In \cite{xinping_miso_ic}, one user's message is split into a common and a private part, while the other user's message has a private part only. By unicasting the private messages along ZF-precoders using power $P^a$, and multicasting the common message using the remaining power $P{-}P^a$, the sum DoF $1{+}2a$ is achievable. This result is optimal for the $2$-cell MISO IC as it is identical to the optimal sum DoF of a two-user MISO BC with symmetric CSIT quality $a$.

The beauty of RS lies in forcing the residual interference caused by ZFBF with imperfect CSIT to the very weak interference regime, while introducing a strong interference, i.e., the common message, which is decodable by treating the private messages as noise. However, the achievability of RS in the general $K$-cell MISO IC remains an open problem. Here, we propose a logical extension of RS to the $K$-cell MISO IC. We consider that a certain group $\mathcal{S}{\subseteq}\mathcal{K}$ of users are active, while the remaining users are made silent. This assumption allows us to obtain an achievable DoF region by taking the union of all the possible subsets $\mathcal{S}{\subseteq}\mathcal{K}$ of users.

We consider a general RS approach where each active user's message is split into a private part $w_k^p$ and a common part $w_k^c$, ${\forall}k{\in}\mathcal{S}$. These two messages are transmitted using power $P^{r_k}$ and $P{-}P^{r_k}$, respectively, where $r_k{\leq}1$. The common messages $\{w_k^c\}_{k{\in}\mathcal{S}}$ are to be decoded by all the active users. The transmitted signal and received signal are expressed as
\begin{IEEEeqnarray}{rcl}
\mathbf{s}_k&{=}&\underbrace{\mathbf{p}_k^cw_k^c}_{P{-}P^{r_k}}{+}\underbrace{\mathbf{p}_k^pw_k^p}_{P^{r_k}}{,}{\forall}k{\in}\mathcal{S}{,}\\
y_k&{=}&\sum_{{\forall}j{\in}\mathcal{S}}\underbrace{\mathbf{h}_{kj}^H\mathbf{p}_j^cw_j^c}_{P}{+}\underbrace{\mathbf{p}_k^pw_k^p}_{P^{r_k}}{+}
\sum_{{\forall}j{\in}\mathcal{S}{\setminus}k}\underbrace{\mathbf{h}_{kj}^H\mathbf{p}_j^pw_j^p}_{P^{r_j{-}a_{kj}}}{+}
\underbrace{n_k}_{P^0}{,}\label{eq:ykRS}
\end{IEEEeqnarray}
respectively, where $\mathbf{p}_k^p{\subseteq}\Span(\{\hat{\mathbf{h}}_{jk}^\bot\}_{{\forall}j{\in}\mathcal{S}{\setminus}k})$ are ZF-precoders, while $\mathbf{p}_k^c$ are random precoders.

Each user firstly decodes all the common messages, and secondly recovers the desired private message after removing the common messages using SIC. Then, the DoF tuple achieved by the private messages and the common messages, denoted by $(d_1^p{,}{\cdots}d_K^p)$ and $(d_1^c{,}{\cdots}d_K^c)$ respectively, are such that
\begin{IEEEeqnarray}{lcl}
\sum_{k{\in}\mathcal{S}}d_k^c{\leq}1{-}\max_{j{\in}\mathcal{S}}r_j{,}{\quad} d_k^p{\leq}\left(r_k{-}\max_{j{:}j{\in}\mathcal{S}{\setminus}k}(r_j{-}a_{kj})^+\right)^+{,} {\forall}k{\in}\mathcal{S}{;} \quad d_j^c{=}d_j^p{=}0{,}{\forall}j{\in}\mathcal{K}{\setminus}\mathcal{S}{.}\label{eq:DcpRS}
\end{IEEEeqnarray}
The achievable DoF region by RS with active user set $\mathcal{S}$ and power allocation policy $\mathbf{r}$, denoted by $\mathcal{D}_{RS}(\mathcal{S}{,}\mathbf{r})$, is the set of all DoF tuple $(d_1{,}\cdots{,}d_K){=}(d_1^c{,}\cdots{,}d_K^c){+}(d_1^p{,}\cdots{,}d_K^p)$, for which \eqref{eq:DcpRS} holds.

Then, the DoF region achieved by RS is resulted by the union of the DoF regions achieved with all possible subsets $\mathcal{S}$ and power allocation policy $\mathbf{r}$, i.e., $\mathcal{D}_{RS}{\triangleq}\bigcup_{{\forall}\mathcal{S}{\subseteq}\mathcal{K}{,} {\forall}\mathbf{r}} \mathcal{D}_{RS}(\mathcal{S}{,}\mathbf{r})$. The following proposition settles $\mathcal{D}_{RS}$.
\begin{myprop}\label{prop:DRS}
In a fully connected $K$-cell MISO IC with equal SNR and INR and with CSIT quality topology $\mathcal{A}$, the DoF region achieved by RS with common message multicasting is
\begin{IEEEeqnarray}{rcl}
\mathcal{D}_{RS}&{=}&\bigcup_{{\forall}\mathcal{U}{\subseteq}\mathcal{K}}\mathcal{D}_{RS}(\mathcal{U})
\end{IEEEeqnarray}
where $\mathcal{D}_{RS}(\mathcal{U})$ is the set of $(d_1{,}\cdots{,}d_K){=}(d_1^c{,}\cdots{,}d_K^c){+}(d_1^p{,}\cdots{,}d_K^p)$ such that
\begin{IEEEeqnarray}{ccl}
d_k^p{=}0{,}{\forall}k{\in}\mathcal{K}{\setminus}\mathcal{U}{;}\,
0{\leq}d_k^p{\leq}1{,}{\forall}k{\in}\mathcal{U}{;}\,\sum_{l{=}1}^{m}d_{i_l}^p{\leq}\sum_{l{=}1}^{m}a_{i_{l{-}1}i_l}{,}
{\forall}(i_1{,}\cdots{,}i_m){\in}\Pi_{\mathcal{U}}{;}\label{eq:DRS_KUp}\\
0{\leq}d_k^c{\leq}1{,}{\forall}k{\in}\mathcal{K}{;}
0{\leq}d_k^p{+}\sum_{j{\in}\mathcal{K}}d_j^c{\leq}1{,} {\forall}k{\in}\mathcal{U}{;}\,
\sum_{j{\in}\mathcal{S}}d_j^c{+}\sum_{l{=}1}^{m}d_{i_l}^p{\leq}1{+}\sum_{l{=}2}^{m}a_{i_{l{-}1}i_l}{,} {\forall}(i_1{,}\cdots{,}i_m){\in}\Pi_{\mathcal{U}}{,}\label{eq:DRS_KUpc}
\end{IEEEeqnarray}
and $\Pi_{\mathcal{U}}$ is the set of all possible cyclic sequences\footnote{A cyclic sequence is a cyclically ordered subset of user indices without repetitions \cite{Geng15TIN}. For a certain subset $(i_1{,}\cdots{,}i_m)$, there are $(m{-}1)!$ distinct cyclic orders. For a user set $\mathcal{U}$, there exist $\sum_{m{=}2}^{|\mathcal{U}|}{{|\mathcal{U}|} \choose {m}}$ different subset $(i_1{,}\cdots{,}i_m)$ with $m{\geq}2$. Hence, $\Pi_{\mathcal{U}}$ have $\sum_{m{=}2}^{|\mathcal{U}|}{|\mathcal{U}| \choose {m}}(m{-}1)!$ cyclic sequences. For instance, let $\mathcal{U}{=}\{1{,}2{,}3\}$, then $\Pi_{\mathcal{U}}{=}\{1{,}2\}{,}\{1{,}3\}{,}\{2{,}3\}{,}\{1{,}2{,}3\}{,}\{1{,}3{,}2\}$.} of all subsets of $\mathcal{U}$ with cardinality no less than $2$.
\end{myprop}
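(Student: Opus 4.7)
The plan is to prove this proposition by performing Fourier--Motzkin elimination (FME) of the auxiliary power exponents $r_k$ from the inner description $\mathcal{D}_{RS}(\mathcal{S},\mathbf{r})$ in \eqref{eq:DcpRS}, for each fixed active user set $\mathcal{S}\equiv\mathcal{U}$; the union over $\mathcal{U}$ then yields $\mathcal{D}_{RS}$. The basic achievability of \eqref{eq:DcpRS} has already been read off from \eqref{eq:ykRS} via the standard SIC argument: each user first decodes the super common message while the strongest undesired term sits at $P^{\max_j r_j}$, giving $\sum_{j\in\mathcal{S}}d_j^c\le 1-\max_{j\in\mathcal{S}}r_j$; after SIC, the residual floor on the desired private stream is at $P^{\max_{j\ne k}(r_j-a_{kj})^+}$, yielding $d_k^p\le(r_k-\max_{j\in\mathcal{S}\setminus k}(r_j-a_{kj})^+)^+$. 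What remains is to show that eliminating $\mathbf{r}\in[0,1]^{|\mathcal{U}|}$ produces exactly the polytope \eqref{eq:DRS_KUp}--\eqref{eq:DRS_KUpc}.

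The explicit inequalities emerge by exploiting the telescoping structure of \eqref{eq:DcpRS}. Relaxing both $(\cdot)^+$ operators gives, for any ordered pair $k,j\in\mathcal{U}$, the single-link bound $d_k^p\le r_k-r_j+a_{kj}$. For any cyclic sequence $(i_1,\ldots,i_m)\in\Pi_{\mathcal{U}}$ (with $i_0\equiv i_m$), summing this inequality over $l=1,\ldots,m$ with $(k,j)=(i_l,i_{l-1})$ makes the $r$ terms telescope to zero and delivers $\sum_l d_{i_l}^p\le\sum_l a_{i_{l-1}i_l}$, i.e.\ \eqref{eq:DRS_KUp}. The mixed bound in \eqref{eq:DRS_KUpc} follows by the same telescoping performed only for $l=2,\ldots,m$, adjoining the one-sided bound $d_{i_1}^p\le r_{i_1}$ (taking the inner max to zero) at one endpoint and the multicast constraint $\sum_{j\in\mathcal{S}}d_j^c\le 1-r_{i_m}$ at the other; the $r$ variables cancel and $1+\sum_{l=2}^m a_{i_{l-1}i_l}$ remains. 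The box-type bounds $d_k^p\le 1$, $d_k^c\le 1$, and $d_k^p+\sum_j d_j^c\le 1$ drop out of one- or two-term combinations of the same type.

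For the reverse inclusion, one must argue that FME generates no further facets. The purely-private subsystem is precisely the polymatroidal TIN region of \cite[Theorem 5]{Geng15TIN} applied to an auxiliary IC with gain profile $\{a_{kj}\}$, so its cyclic-sequence description is already established to be tight and can be cited directly. The common-message variables enter the auxiliary system only through the single coupling inequality $\sum_{j\in\mathcal{S}}d_j^c\le 1-\max_j r_j$, so any new facet FME can produce must use this inequality exactly once in its nonnegative combination, and the only admissible such combinations are precisely the mixed bounds in \eqref{eq:DRS_KUpc}.

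The hardest step is closing the converse constructively: given an arbitrary DoF tuple inside the polytope \eqref{eq:DRS_KUp}--\eqref{eq:DRS_KUpc}, exhibit an admissible $\mathbf{r}\in[0,1]^{|\mathcal{U}|}$ that realizes it through \eqref{eq:DcpRS}. I would adapt the primal construction from \cite{Geng15TIN}: identify a binding cyclic sequence from \eqref{eq:DRS_KUp}, fix $\max_{j\in\mathcal{S}}r_j$ so as to saturate the multicast bound, and assign the remaining $r_k$ recursively along that cycle so that each private-DoF inequality in \eqref{eq:DcpRS} attains the prescribed right-hand side. The main technical hurdle is verifying that the recursive assignment stays within $[0,1]$ for every target tuple in the polytope, which in turn reduces to checking that the CSIT-quality telescoping sums $\sum_l a_{i_{l-1}i_l}$ never force $r_k$ out of range—a check that is handled by the same cyclic-sequence reasoning used for the direct inclusion.
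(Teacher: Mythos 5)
Your overall route---eliminate the power exponents $\mathbf{r}$ from \eqref{eq:DcpRS}---is in spirit what the paper does, but the two steps you defer are precisely the substance of the proof, and both are left open. First, your forward telescoping step is not a valid implication of \eqref{eq:DcpRS}: from $d_k^p{\leq}\left(r_k{-}\max_{j}(r_j{-}a_{kj})^+\right)^+$ you cannot conclude $d_k^p{\leq}r_k{-}r_j{+}a_{kj}$, because when the outer clipping binds (i.e., $r_k{<}\max_j(r_j{-}a_{kj})^+$) one has $d_k^p{=}0$ while $r_k{-}r_j{+}a_{kj}$ can be strictly negative. Fourier--Motzkin applies to linear systems, whereas \eqref{eq:DcpRS} is piecewise linear, so one must branch on the sign pattern of the clipped expressions; the paper does exactly this through the polyhedral relaxation \eqref{eq:polyhedral} (which \emph{shrinks} each per-$\mathcal{U}$ region) and then needs a separate argument (Step 2 of Appendix B, via the auxiliary set $\mathcal{D}_{RS}^\prime$ with $d_k^p{>}0$ and the bound $\max_{i_j}(r_{i_j}{-}a_{i_li_j})^+{\geq}r_{i_{l{-}1}}{-}a_{i_li_{l{-}1}}$) to show nothing is lost in the union over $\mathcal{U}$; nothing in your write-up plays this role. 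Second, your two hardest claims---that elimination ``generates no further facets,'' and that any point of the polytope is realized by an admissible $\mathbf{r}{\in}[0{,}1]^{|\mathcal{U}|}$---are asserted rather than proven; you yourself flag the $[0,1]$ verification of the recursive assignment as the main hurdle and do not carry it out. The paper closes both issues in one stroke with the potential theorem \cite[Theorem 8.2]{Comb}: a tuple admits a feasible $\mathbf{r}$ (a potential $f(v_k){=}r_k$, $f(v_0){=}0$) if and only if the digraph on $\{v_0{,}v_1{,}\cdots{,}v_{|\mathcal{U}|}\}$ with arc lengths $l(v_j{,}v_k){=}a_{kj}{-}d_k^p$, $l(v_k{,}v_0){=}1{-}d^c$, $l(v_0{,}v_k){=}{-}d_k^p$ has no negative-length circuit, and enumerating circuits avoiding or passing through $v_0$ yields exactly \eqref{eq:DRS_KUp} and \eqref{eq:DRS_KUpc}---simultaneously the exactness of the facet list and the constructive existence of $\mathbf{r}$. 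Citing \cite[Theorem 5]{Geng15TIN} covers only the purely private subsystem; the mixed common--private facets require this potential (or an equivalent LP-duality/shortest-path) argument, which your proposal replaces with the unproven ``used exactly once'' assertion.

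Separately, your identification $\mathcal{S}{\equiv}\mathcal{U}$ loses part of the region. In the proposition, $0{\leq}d_k^c{\leq}1$ holds for all $k{\in}\mathcal{K}$: a user with $d_k^p{=}0$ may still carry common DoF. Under your identification such a user is silent, forcing $d_k^c{=}0$; enlarging $\mathcal{U}$ to include it does not repair this, since that adjoins extra cyclic constraints through user $k$ (e.g., the two-cycle gives $d_j^p{\leq}a_{kj}{+}a_{jk}$ once $d_k^p{=}0$) that are not implied by the smaller $\mathcal{U}$-polytope. The paper instead keeps all users scheduled, $\mathcal{S}{=}\mathcal{K}$, and uses $\mathcal{U}$ only as the support of the private power allocation ($r_k{\leq}0$, effectively $-\infty$ in the DoF sense, off $\mathcal{U}$), then observes $\mathcal{D}_{RS}(\mathcal{S}{,}\mathcal{U}){\subseteq}\mathcal{D}_{RS}(\mathcal{K}{,}\mathcal{U})$ so the union over $\mathcal{S}$ collapses onto the stated union over $\mathcal{U}{\subseteq}\mathcal{K}$. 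You would need this decoupling of the active set from the private-message support for your union step to recover the full $\mathcal{D}_{RS}$.
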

\begin{proof}
  see Appendix A.
\end{proof}

Note that the DoF region $\mathcal{D}_{RS}(\mathcal{U})$ is obtained by scheduling all the users and choosing the following power allocation policy
\begin{IEEEeqnarray}{rcl}
r_k{\leq}0{,}k{\in}\mathcal{K}{\setminus}\mathcal{U}{;}&\quad&
r_k{-}\max_{j{:}j{\in}\mathcal{K}{\setminus}k}(r_j{-}a_{kj})^+{\geq}0{,}k{\in}\mathcal{U}{.}\label{eq:polyhedral}
\end{IEEEeqnarray}

\begin{myremark}\label{rmk:ZFBF_region}
Note that the DoF region achieved by ZFBF with power control can be obtained by removing the inequalities related to the common messages, i.e., \eqref{eq:DRS_KUpc}, and setting $d_k{=}d_k^p$.
\end{myremark}
\begin{figure}[t]
\renewcommand{\captionfont}{\small}
\captionstyle{center}
\centering
\subfigure[CSIT quality table]{
                \centering
                \includegraphics[width=0.2\textwidth,height=2.5cm]{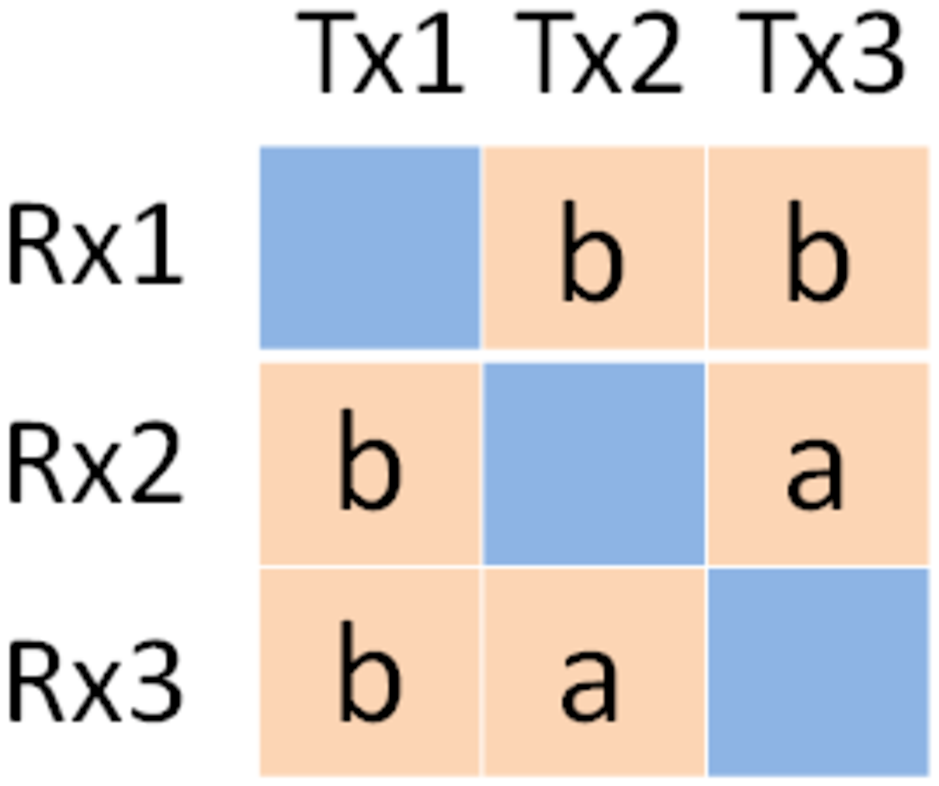}
                \label{fig:3user_het}
        }
        \subfigure[Weighted-sum interpretation]{
                \centering
                \includegraphics[width=0.4\textwidth,height=2.5cm]{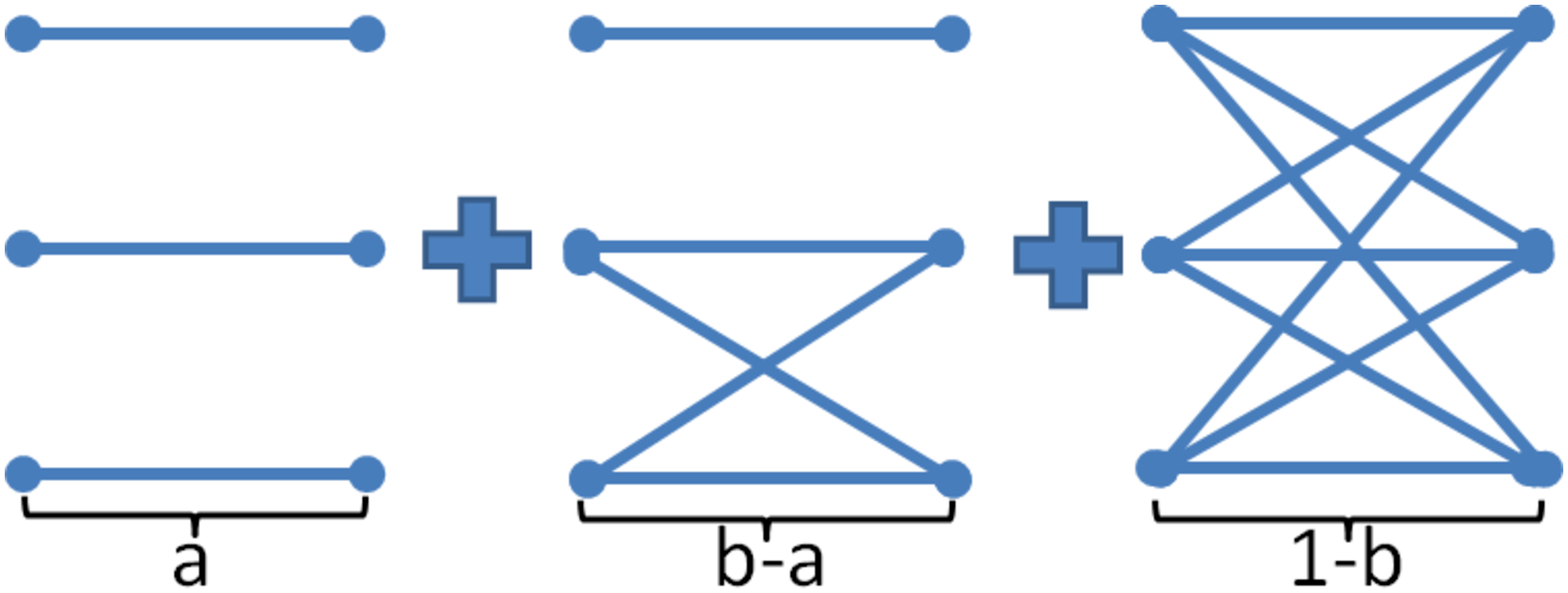}
                \label{fig:3user_het_wsi}
        }
\caption{$3$-cell IC with hierarchical CSIT quality topology, where $0{\leq}a{\leq}b{\leq}1$.}
\end{figure}

To better understand this achievable region, let us look at the example illustrated in Figure \ref{fig:3user_het}, where $0{\leq}a{\leq}b{\leq}1$. For convenience, we let $d^c{=}\sum_{k{=}1}^{3}d_k^c$. For $\mathcal{U}{=}\{1{,}2{,}3\}$, $\{2{,}3\}$, $\{1{,}3\}$, $\{1{,}2\}$, $\{3\}$, $\{2\}$ and $\{1\}$, the corresponding $\mathcal{D}_{RS}(\mathcal{U})$ are given by
\begin{IEEEeqnarray}{rcl}
\mathcal{D}_{RS}(\{1{,}2{,}3\})&{=}&\left\{0{\leq}d_k^c{\leq}1{,}0{\leq}d_k^p{\leq}1{,}0{\leq}d_k^p{+}d^c{\leq}1{,}
{\forall}k{\in}\{1{,}2{,}3\}{,}\right.\nonumber\\
&&\left.d_1^p{+}d_2^p{\leq}2b{,}d_1^p{+}d_2^p{+}d^c{\leq}1{+}b{,}d_1^p{+}d_3^p{\leq}2b{,}d_1^p{+}d_3^p{+}d^c{\leq}1{+}b{,}\right.\nonumber\\
&&\left.d_2^p{+}d_3^p{\leq}2a{,}d_2^p{+}d_3^p{+}d^c{\leq}1{+}a{,}d_1^p{+}d_2^p{+}d_3^p{\leq}2b{+}a{,} d_1^p{+}d_2^p{+}d_3^p{+}d^c{\leq}1{+}b{+}a\right\}{,}\label{eq:Drs_empty}\\
\mathcal{D}_{RS}(\{2{,}3\})&{=}&\left\{d_1^p{=}0{,}0{\leq}d_k^c{\leq}1{,}{\forall}k{\in}\{1{,}2{,}3\}{,} 0{\leq}d_k^p{\leq}1{,}0{\leq}d_k^p{+}d^c{\leq}1{,}{\forall}k{\in}\{2{,}3\}{,}\right.\nonumber\\
&&\left.d_2^p{+}d_3^p{\leq}2a{,}d_2^p{+}d_3^p{+}d^c{\leq}1{+}a\right\}{,}\label{eq:Drs_1}\\
\mathcal{D}_{RS}(\{1{,}3\})&{=}&\left\{d_2^p{=}0{,}0{\leq}d_k^c{\leq}1{,}{\forall}k{\in}\{1{,}2{,}3\}{,} 0{\leq}d_k^p{\leq}1{,}0{\leq}d_k^p{+}d^c{\leq}1{,}{\forall}k{\in}\{1{,}3\}{,}\right.\nonumber\\
&&\left.d_1^p{+}d_3^p{\leq}2b{,}d_1^p{+}d_3^p{+}d^c{\leq}1{+}b\right\}{,}\label{eq:Drs_2}
\end{IEEEeqnarray}
\begin{IEEEeqnarray}{rcl}
\mathcal{D}_{RS}(\{1{,}2\})&{=}&\left\{d_3^p{=}0{,}0{\leq}d_k^c{\leq}1{,} {\forall}k{\in}\{1{,}2{,}3\}{,} 0{\leq}d_k^p{\leq}1{,}0{\leq}d_k^p{+}d^c{\leq}1{,}{\forall}k{\in}\{1{,}2\}{,}\right.\nonumber\\
&&\left.d_1^p{+}d_2^p{\leq}2b{,}d_1^p{+}d_2^p{+}d^c{\leq}1{+}b\right\}{,}\label{eq:Drs_3}\\
\mathcal{D}_{RS}(\{3\})&{=}&\left\{d_1^p{=}d_2^p{=}0{,}0{\leq}d_k^c{\leq}1{,}{\forall}k{\in}\{1{,}2{,}3\}{,} 0{\leq}d_3^p{\leq}1{,}0{\leq}d_3^p{+}d^c{\leq}1\right\}{,}\label{eq:Drs_12}\\
\mathcal{D}_{RS}(\{2\})&{=}&\left\{d_1^p{=}d_3^p{=}0{,}0{\leq}d_k^c{\leq}1{,}{\forall}k{=}\{1{,}2{,}3\}{,} 0{\leq}d_2^p{\leq}1{,}0{\leq}d_2^p{+}d^c{\leq}1\right\}{,}\label{eq:Drs_13}\\
\mathcal{D}_{RS}(\{1\})&{=}&\left\{d_2^p{=}d_3^p{=}0{,}0{\leq}d_k^c{\leq}1{,}{\forall}k{\in}\{1{,}2{,}3\}{,} 0{\leq}d_1^p{\leq}1{,}0{\leq}d_1^p{+}d^c{\leq}1\right\}{,}\label{eq:Drs_23}
\end{IEEEeqnarray}
respectively. Using \eqref{eq:Drs_empty} through to \eqref{eq:Drs_23}, it can be verified that the maximum sum DoF $\sum_{k{=}1}^{3}d_k^c{+}d_k^p$ achieved by RS is $\max\{1{+}2a{,}1{+}b\}$. When $1{+}2a{\geq}1{+}b$, the sum DoF $1{+}2a$ is achievable by taking $r_1{=}r_2{=}r_3{=}a$; otherwise, the sum DoF $1{+}b$ is achieved with $r_1{=}r_2{=}b$ and $r_3{=}0$.

According to Remark \ref{rmk:ZFBF_region}, the DoF region achieved by ZFBF with power control can be obtained by removing the inequalities of $d^c$. Then, it can be verified that the sum DoF achieved by ZFBF with power control is $\max\{2b{,}\min\{1{+}2a{,}2b{+}a\}\}$. When $1{+}2a{\geq}2b{+}a$, the sum DoF $2b{+}a$ is achievable by choosing $(r_1{,}r_2{,}r_3){=}(2b{-}a{,}b{,}b)$; when $2b{\leq}1{+}2a{\leq}2b{+}a$, the sum DoF $1{+}2a$ is achievable by choosing $(r_1{,}r_2{,}r_3){=}(1{,}1{-}b{+}a{,}1{-}b{+}a)$; when $1{+}2a{\leq}2b$, the sum DoF $2b$ is achieved with $(r_1{,}r_2{,}r_3){=}(1{,}1{,}0)$.

By comparing the sum DoF achieved by ZFBF with power control and the sum DoF achieved by RS, we see that RS offers DoF gain except in the case $1{+}b{\leq}1{+}2a{\leq}2b{+}a$.

Next, considering ZFBF with power control and RS with common message multicasting as benchmark schemes, we move on to propose a novel transmission strategy that yields a greater DoF region in the fully connected $K$-cell MISO IC with equal SNR and INR and with an arbitrary CSIT quality topology. 

\section{Topological Rate-Splitting}\label{sec:TRS}
In this section, we firstly introduce the idea of Topological Rate-splitting focusing on the example in Figure \ref{fig:3user_het}. Secondly, we propose the generalized framework of the TRS motivated by a novel weighted-sum interpretation of the fully connected MISO IC with CSIT quality topology $\mathcal{A}$. Then, the sum DoF achieved by the TRS scheme is studied using graph theory tools.

\subsection{Toy Example}
Focusing on the example in Figure \ref{fig:3user_het}, we propose a simple TRS scheme that yields a greater sum DoF than RS and ZFBF with power control. Similar to RS, we consider that each transmitter uses power $P^a$ to unicast the private messages along ZF-precoders (this power allocation policy achieves the maximal sum DoF of RS when $1{+}2a{\geq}1{+}b$). Unlike RS, the remaining power $P{-}P^a$ is further split into two parts $P^b{-}P^a$ and $P{-}P^b$ for common message groupcasting and common message multicasting, respectively. To be specific, the transmission block for the common messages is designed as follows.

Firstly, with power $P^b{-}P^a$, we see that the interference from transmitter $1$ to user $2$ and user $3$, the interference from transmitter $2$ to user $1$ and the interference from transmitter $3$ to user $1$, can be forced within the noise power via ZFBF, because the CSIT quality of those links $a_{21}{=}a_{31}{=}a_{12}{=}a_{13}{=}b$ are sufficiently good. By doing so, the MISO IC becomes a partially connected network with two cross links $\mathbf{h}_{23}$ and $\mathbf{h}_{32}$ as illustrated in Figure \ref{fig:3user_het_wsi} (see the figure in the middle). In such a network, transmitter $1$ can deliver one message $w_1^2$ to user 1 without mixing with the messages transmitted by other transmitters. At the same time, transmitter $2$ and $3$ are able to deliver \emph{group common messages} to be decoded by user $2$ and user $3$, without mixing with $w_1^2$. Here, as we design TRS from a sum DoF perspective, for convenience, we consider that transmitter $2$ delivers a group common message $w_2^2$ while transmitter $3$ does not transmit group common message.

Secondly, with the remaining power $P{-}P^b$, as the CSIT qualities are not good enough, we see that no interference can be drowned into the noise at any user via ZFBF. This fact corresponds to a fully connected network shown in Figure \ref{fig:3user_het_wsi} (the right-most figure). Then, we consider that transmitter $1$ multicasts one common message $w_1^3$ to be decoded by all users.

Accordingly, the transmitted signals write as
\begin{IEEEeqnarray}{rcl}\label{eq:s_3user_HRS}
\mathbf{s}_1&{=}&\underbrace{\mathbf{p}_1^3w_1^3}_{P{-}P^b}{+}\underbrace{\mathbf{p}_1^2w_1^2}_{P^b{-}P^a}{+}
\underbrace{\mathbf{p}_1^1w_1^1}_{P^a}{,}\label{eq:s1_3user_hrs}\\
\mathbf{s}_2&{=}&\underbrace{\mathbf{p}_2^2w_2^2}_{P^b{-}P^a}{+}\underbrace{\mathbf{p}_2^1w_2^1}_{P^a}{,}\label{eq:s2_3user_hrs}\\
\mathbf{s}_3&{=}&\underbrace{\mathbf{p}_3^1w_3^1}_{P^a}{,}\label{eq:s3_3user_hrs}
\end{IEEEeqnarray}
where $\mathbf{p}_1^2{=}\mathbf{p}_1^1{\subseteq}\Span(\hat{\mathbf{h}}_{21}^\bot{,}\hat{\mathbf{h}}_{31}^\bot)$, $\mathbf{p}_2^2{\subseteq}\Span(\hat{\mathbf{h}}_{12}^\bot)$, $\mathbf{p}_2^1{\subseteq}\Span(\hat{\mathbf{h}}_{12}^\bot{,}\hat{\mathbf{h}}_{32}^\bot)$, and $\mathbf{p}_3^1{\subseteq}\Span(\hat{\mathbf{h}}_{13}^\bot{,}\hat{\mathbf{h}}_{23}^\bot)$. The received signals are expressed as
\begin{IEEEeqnarray}{rcl}\label{eq:y_3user_HRS}
\mathbf{y}_1&{=}&\underbrace{\mathbf{h}_{11}^H\mathbf{p}_1^3w_1^3}_{P}{+}
\underbrace{\mathbf{h}_{11}^H\mathbf{p}_1^2w_1^2}_{P^b}{+}\underbrace{\mathbf{h}_{11}^H\mathbf{p}_1^1w_1^1}_{P^a}{+}
\underbrace{\mathbf{h}_{12}^H\mathbf{p}_2^2w_2^2{+}\mathbf{h}_{12}^H\mathbf{p}_2^1w_2^1{+}\mathbf{h}_{13}^H\mathbf{p}_3^1w_3^1}_{P^0}{+}n_1{,}
\label{eq:y1_3user_hrs}\\
\mathbf{y}_2&{=}&\underbrace{\mathbf{h}_{21}^H\mathbf{p}_1^3w_1^3}_{P}{+}
\underbrace{\mathbf{h}_{21}^H\mathbf{p}_1^2w_1^2{+}\mathbf{h}_{21}^H\mathbf{p}_1^1w_1^1}_{P^0}{+}
\underbrace{\mathbf{h}_{22}^H\mathbf{p}_2^2w_2^2}_{P^b}{+}\underbrace{\mathbf{h}_{22}^H\mathbf{p}_2^1w_2^1}_{P^a}{+}
\underbrace{\mathbf{h}_{23}^H\mathbf{p}_3^1w_3^1}_{P^0}{+}n_2{,}
\label{eq:y2_3user_hrs}\\
\mathbf{y}_3&{=}&\underbrace{\mathbf{h}_{31}^H\mathbf{p}_1^3w_1^3}_{P}{+}
\underbrace{\mathbf{h}_{31}^H\mathbf{p}_1^2w_1^2{+}\mathbf{h}_{31}^H\mathbf{p}_1^1w_1^1}_{P^0}{+}
\underbrace{\mathbf{h}_{32}^H\mathbf{p}_2^2w_2^2}_{P^b}{+}\underbrace{\mathbf{h}_{32}^H\mathbf{p}_2^1w_2^1}_{P^0}{+}
\underbrace{\mathbf{h}_{33}^H\mathbf{p}_3^1w_3^1}_{P^a}{+}n_3{,}
\label{eq:y3_3user_hrs}
\end{IEEEeqnarray}
where all the undesired messages are drowned into the noise. The decoding procedure starts from the messages with the highest received power level and then downwards using SIC. The DoF achieved by the common messages are $d_1^2{=}d_2^2{=}b{-}a$ and $d_1^3{=}1{-}b$. Then, it is straightforward that the sum DoF of the common messages $1{+}b{-}2a$ is greater than that achieved in RS $1{-}a$. Counting the DoF achieved by the private messages, the sum DoF achieved by TRS is $1{+}b{+}a$, outperforming $1{+}2a$ achieved by RS.

\begin{myremark}
The beauty of the TRS approach above lies in the multi-layer structure. With ZF-precoders and properly assigned power levels, the CSIT quality topology in Figure \ref{fig:3user_het} is interpreted as a series of network topologies in Figure \ref{fig:3user_het_wsi}. As shown, the left, middle and right figures respectively represent the networks where the private message unicasting, common message groupcasting and common message multicasting are performed. The weights underneath stand for their separations in the power domain. This procedure is called weighted-sum interpretation, which helps us generating common messages to be decoded by a small number of users rather than all users. However, in RS, the common messages are always to be decoded by all the users, which essentially limits the sum DoF performance.
\end{myremark}

\subsection{Building the Generalized Transmission Block}\label{sec:trans_block}
Motivated by the toy example, we present the generalized transmission block of TRS. We describe the TRS approach focusing on the active user subset $\mathcal{S}{\subseteq}\mathcal{K}$, while the remaining users are made silent. 

In TRS, each active transmitter divides the message intended for its corresponding user into $N{=}L{+}2$ parts, i.e., $\mathcal{W}_k{\triangleq}\{w_k^1{,}w_k^2{,}\cdots{,}w_k^N\}$, ${\forall}k{\in}\mathcal{S}$. The definition of $L$ will be introduced later on. Letting $\mathbf{p}_k^i$ denote the precoder and $P_{k{,}i}$ denote the power chosen for a certain message $w_k^i$, the signal transmitted by transmitter $k$ can be expressed as
\begin{IEEEeqnarray}{rcl}
\mathbf{s}_k&{=}&\sum_{i{=}1}^{L{+}2}\underbrace{\mathbf{p}_k^iw_k^i}_{P_{k{,}i}}{,}{\forall}k{\in}\mathcal{S}{.}\label{eq:sk}
\end{IEEEeqnarray}

\underline{\emph{Private message layer}}: Message $w_k^1$ is regarded as a private message intended for user $k$ and is to be decoded by user $k$ only. It is transmitted along a ZF-precoder and is unicast with a fraction of the total power as
\begin{IEEEeqnarray}{rcl}
\mathbf{p}_k^1{\subseteq}\Span(\{\hat{\mathbf{h}}_{jk}^\bot\}_{{\forall}j{\in}\mathcal{S}{\setminus}k}),&\quad&
P_{k{,}1}{=}P^{r_k}{,}{\forall}k{\in}\mathcal{S}{.}\label{eq:Ppk1}
\end{IEEEeqnarray}

\underline{\emph{Common message layer}}: The remaining power $P{-}P^{r_k}$ at each user is employed to deliver the $L{+}1$ common messages $w_k^i{,}i{=}2{,}\cdots{,}L{+}2$. The power allocated to each common message $w_k^i$ and its precoder are obtained based on the CSIT qualities.

Firstly, as only the users in $\mathcal{S}$ are active, we obtain a subset $\mathcal{A}(\mathcal{S}){\subseteq}\mathcal{A}$ such that $a_{kj}{\in}\mathcal{A}(\mathcal{S})$ if and only if $k{,}j{\in}\mathcal{S}$. Secondly, let $\mathcal{A}(\mathbf{r}{,}\mathcal{S}){\subseteq}\mathcal{A}(\mathcal{S})$ denote the set formed by all the elements of $\mathcal{A}(\mathcal{S})$ that are greater than $r_0{\triangleq}\max_{k{\in}\mathcal{S}}r_k$, i.e., $\mathcal{A}(\mathbf{r}{,}\mathcal{S}){\triangleq}\{a_{kj}\}_{{\forall}a_{kj}{\in}\mathcal{A}(\mathcal{S}){,}a_{kj}{>}r_0}$. Thirdly, letting $L$ denote the number of different values of $\mathcal{A}(\mathbf{r}{,}\mathcal{S})$, we represent these $L$ values by $a_{\pi(1)}{,}\cdots{,}a_{\pi(L)}$, which satisfy $a_{\pi(1)}{<}a_{\pi(2)}{<}{\cdots}{<}a_{\pi(L)}$. Besides, for convenience, we define $a_{\pi(L{+}1)}{=}1$. Using these $L{+}1$ variables $a_{\pi(1)}{,}\cdots{,}a_{\pi(L{+}1)}$, we divide the remaining power $P{-}P^{r_k}$ at each transmitter into $L{+}1$ power levels, i.e., $P^{a_{\pi(1)}}{-}P^{r_k}$, $P^{a_{\pi(2)}}{-}P^{a_{\pi(1)}}$, $\cdots$, $P^{a_{\pi(L{+}1)}}{-}P^{a_{\pi(L)}}$.

Then, we assign these $L{+}1$ power levels to the common messages $w_k^i$, $2{\leq}i{\leq}L{+}2$, and choose a ZF-precoder for each of them as
\begin{IEEEeqnarray}{ccl}
P_{k{,}2}{=}P^{a_{\pi(1)}}{-}P^{r_k}{,}\quad
P_{k{,}i}{=}P^{a_{\pi(i{-}1)}}{-}P^{a_{\pi(i{-}2)}}{,}3{\leq}i{\leq}L{+}2{;}\label{eq:Pki}\\
\mathbf{p}_k^i{\subseteq} \Span\left(\{\hat{\mathbf{h}}_{jk}^\bot\}_{{\forall}j{\in}\mathcal{S}{\setminus}\mathcal{R}_k^i(\mathcal{S}{,}\mathbf{r})}\right){,}
\text{ where }
\mathcal{R}_k^i(\mathcal{S}{,}\mathbf{r}){\triangleq} \{j{:}j{\in}\mathcal{S}{\setminus}k,a_{jk}{<}a_{\pi(i{-}1)}\}{\cup}k.\label{eq:pki}
\end{IEEEeqnarray}
Such a precoder and power allocation policy suggest that $w_k^i$ is a \emph{group common message} to be decoded by the group of users $\mathcal{R}_k^i(\mathcal{S}{,}\mathbf{r})$, while it is drowned into the noise at other users ${\forall}j{\in}\mathcal{S}{\setminus}\mathcal{R}_k^i(\mathcal{S}{,}\mathbf{r})$.

With the precoders and power allocation policy given in \eqref{eq:Ppk1} through to \eqref{eq:pki}, the signal received by user $k$ writes as
\begin{IEEEeqnarray}{rcl}
y_k&{=}&\sum_{j{\in}\mathcal{S}}\sum_{i{=}1}^{L{+}2}\mathbf{h}_{kj}^H\mathbf{p}_j^iw_j^i{+}n_k\label{eq:yk}\\
&{=}&\sum_{i{=}2}^{L{+}2}\left(\underbrace{\mathbf{h}_{kk}^H\mathbf{p}_k^iw_k^i}_{P^{a_{\pi(i{-}1)}}}{+}
\sum_{\stackrel{j{:}{\forall}j{\in}\mathcal{S}{\setminus}k{,}}{a_{kj}{<}a_{\pi(i{-}1)}}} \underbrace{\mathbf{h}_{kj}^H\mathbf{p}_j^iw_j^i}_{P^{a_{\pi(i{-}1)}}}{+}
\sum_{\stackrel{j{:}{\forall}j{\in}\mathcal{S}{\setminus}k{,}}{a_{kj}{\geq}a_{\pi(i{-}1)}}} \underbrace{\mathbf{h}_{kj}^H\mathbf{p}_j^iw_j^i}_{P^{a_{\pi(i{-}1)}{-}a_{kj}}}\right){+}
\label{eq:yk2toLplus2}\\
&&\underbrace{\mathbf{h}_{kk}^H\mathbf{p}_k^1w_k^1}_{P^{r_k}}{+}\sum_{{\forall}j{\in}\mathcal{S}{\setminus}k}
\underbrace{\mathbf{h}_{kj}^H\mathbf{p}_j^1w_j^1}_{P^{r_j{-}a_{kj}}}{+}\underbrace{n_k}_{P^0}{,}\label{eq:yk1}
\end{IEEEeqnarray}
where the quantities underneath stand for the approximated received power when $P{\to}\infty$. As expressed in \eqref{eq:yk2toLplus2}, if the CSIT quality of the cross link $\mathbf{h}_{kj}$ is greater than or equal to the allocated power level, i.e., $a_{kj}{>}a_{\pi(i{-}1)}$, the common message $w_j^i{,}i{\geq}2$, is drowned into the noise at user $k$ due to ZFBF; otherwise, $w_j^i$ is received by user $k$ with power $P^{a_{\pi(i{-}1)}}$. As expressed in \eqref{eq:yk1}, the undesired private message $w_j^1{,}{\forall}j{\in}\mathcal{S}{\setminus}k$ is received by user $k$ with power $P^{r_k{-}a_{kj}}$. If $a_{kj}{\geq}r_k$, $w_j^1$ is drowned into the noise; otherwise, $w_j^1$ becomes an undesirable interference overheard by user $k$.

The decoding procedure is performed by SIC. Let us focus on the received signal in \eqref{eq:yk}. Firstly, user $k$ decodes common messages $w_k^{L{+}2}$ and $\{w_j^{L{+}2}\}_{j{:}a_{kj}{<}a_{\pi(L{+}1)}}$ by treating all the other messages as noise. Secondly, after removing those recovered messages, user $k$ decodes $w_k^{L{+}1}$ and $\{w_j^{L{+}1}\}_{j{:}a_{kj}{<}a_{\pi(L)}}$, by treating all the other messages with lower received power as noise. This procedure runs for $L{+}1$ rounds till all the common messages are recovered. At last, user $k$ decodes its desired private message $w_k^1$ by treating the undesired private messages as noise.

For convenience, let us denote the set of common messages decoded by user $k$ in a certain round of SIC by
\begin{IEEEeqnarray}{rcl}
\mathcal{T}_k^i(\mathcal{S}{,}\mathbf{r})&{\triangleq}& w_k^i\cup\{w_j^i\}_{j{:}{\forall}j{\in}\mathcal{S}{\setminus}k{,}a_{kj}{<}a_{\pi(i{-}1)}}{,}\text{ where }2{\leq}i{\leq}L{+}2{.}\label{eq:Tki}
\end{IEEEeqnarray}
Then, the DoF region achieved by the proposed TRS scheme, denoted by $\mathcal{D}_{TRS}$, is stated below.
\begin{myprop}\label{prop:DoFtuple}
In a fully connected $K$-cell IC with equal SNR and INR and with CSIT quality topology $\mathcal{A}$, the DoF region achieved by the proposed TRS scheme lies in
\begin{IEEEeqnarray}{rcl}
\mathcal{D}_{TRS}&{=}&\bigcup_{{\forall}\mathcal{S}{\in}\mathcal{K}{,}{\forall}\mathbf{r}} \mathcal{D}_{TRS}(\mathcal{S}{,}\mathbf{r}){,}
\end{IEEEeqnarray}
where $\mathcal{D}_{TRS}(\mathcal{S}{,}\mathbf{r})$ is the DoF region achieved by TRS with active user subset $\mathcal{S}$ and power allocation policy $\mathbf{r}$ for the private messages. It is the set of the DoF tuples $(d_1{,}\cdots{,}d_K){=}\sum_{i{=}1}^{L{+}2}(d_k^i{,}\cdots{,}d_K^i)$ such that
\begin{IEEEeqnarray}{lcl}
d_k^1{=}0{,}{\forall}k{\in}\mathcal{K}{\setminus}\mathcal{S}{;}\, 0{\leq}d_k^1{\leq}\left(r_k{-}\max_{j{\in}\mathcal{S}{\setminus}k}(r_j{-}a_{kj})^+\right)^+{,}{\forall}k{\in}\mathcal{S}{;}\label{eq:DTRS_1}\\
d_k^2{=}0{,}{\forall}k{\in}\mathcal{K}{\setminus}\mathcal{S}{;}\, 0{\leq}d_k^2{,} \sum_{{\forall}j{:}w_j^2{\in}\mathcal{T}_k^2(\mathcal{S}{,}\mathbf{r})}d_j^2{\leq} a_{\pi(1)}{-}\max\{r_k{,}\max_{j{\in}\mathcal{S}{\setminus}k}r_j{-}a_{kj}\}{,}{\forall}k{\in}\mathcal{S}{;}\label{eq:DTRS_2}\\
d_k^i{=}0{,}{\forall}k{\in}\mathcal{K}{\setminus}\mathcal{S}{;}\, 0{\leq}d_k^i{,} \sum_{{\forall}j{:}w_j^i{\in}\mathcal{T}_k^i(\mathcal{S}{,}\mathbf{r})}d_j^i{\leq}a_{\pi(i{-}1)}{-}a_{\pi(i{-}2)}{,} {\forall}k{\in}\mathcal{S}{,}3{\leq}i{\leq}L{+}2{,}\label{eq:DTRS_3}
\end{IEEEeqnarray}
and $\mathcal{T}_k^i(\mathcal{S}{,}\mathbf{r})$, $2{\leq}i{\leq}L{+}2$, is defined in \eqref{eq:Tki} as a function of $\mathcal{S}$ and $\mathbf{r}$.
\end{myprop}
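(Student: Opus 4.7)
The plan is to derive the three bounds in the statement by analyzing the SIC decoder at each active user $k{\in}\mathcal{S}$ one layer at a time, treating each round as a Gaussian multiple-access channel (MAC). Because the $L{+}1$ common layers are assigned monotonically increasing power levels $P^{a_{\pi(1)}}{\ll}P^{a_{\pi(2)}}{\ll}{\cdots}{\ll}P^{a_{\pi(L{+}1)}}{=}P$ while the private layer occupies only $P^{r_k}{\leq}P^{a_{\pi(1)}}$, the natural decoding order at user $k$ runs from layer $L{+}2$ downwards to layer $1$. The core task is to identify, for each round $i$, which layer-$i$ messages reach user $k$ with dominant received power (precisely the set $\mathcal{T}_k^i(\mathcal{S},\mathbf{r})$ defined in \eqref{eq:Tki}), which are drowned into the noise by the ZF precoders of \eqref{eq:pki} (exactly the layer-$i$ messages of transmitters $j$ with $a_{kj}{\geq}a_{\pi(i-1)}$), and which constitute the residual interference from the still-undecoded lower layers.

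\textbf{Per-layer analysis.} First, I would verify that the precoder and power choices realize the received-power pattern already written out in \eqref{eq:yk2toLplus2}-\eqref{eq:yk1}. For any cross term with $a_{kj}{\geq}a_{\pi(i-1)}$, the ZF constraint $\hat{\mathbf{h}}_{kj}^H\mathbf{p}_j^i{=}0$ combined with the CSIT-quality definition \eqref{eq:akj} gives $\mathbb{E}[|\mathbf{h}_{kj}^H\mathbf{p}_j^i|^2]{\doteq}P^{-a_{kj}}$, so its received contribution at user $k$ is of order $P^{a_{\pi(i-1)}{-}a_{kj}}{\leq}P^0$; for cross terms with $a_{kj}{<}a_{\pi(i-1)}$ the precoder is \emph{not} designed to zero-force $\hat{\mathbf{h}}_{kj}$, so the message lands at power $P^{a_{\pi(i-1)}}$ as asserted. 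Applying the Gaussian MAC capacity to the $|\mathcal{T}_k^i(\mathcal{S},\mathbf{r})|$ messages decoded jointly in round $i$, the sum-rate is bounded by $\log_2$ of the ratio between the aggregate signal power and the interference-plus-noise floor at user $k$. For $i{\geq}3$ that floor is dominated by the undecoded layer-$(i{-}1)$ messages at power $P^{a_{\pi(i-2)}}$, yielding \eqref{eq:DTRS_3}; for $i{=}2$ the floor is set by the still-undecoded private layer, whose strongest term at user $k$ is $\max\{P^{r_k},\max_{j{\neq}k}P^{r_j{-}a_{kj}}\}$, yielding \eqref{eq:DTRS_2}; after removing all common messages, the private-layer round reproduces the ZFBF-with-power-control calculation of \eqref{eq:dk_zfbf} and gives \eqref{eq:DTRS_1}.

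\textbf{Remaining steps and difficulties.} To obtain the union in the statement, I would observe that silencing $k{\notin}\mathcal{S}$ simply sets $d_k^i{=}0$ for all $i$. A key consistency check that I would isolate as a short lemma before the per-layer argument is the symmetry between the encoder-side groupcast set and the decoder-side decodable set: every intended receiver $k'{\in}\mathcal{R}_k^i(\mathcal{S},\mathbf{r})$ prescribed by \eqref{eq:pki} in fact satisfies $w_k^i{\in}\mathcal{T}_{k'}^i(\mathcal{S},\mathbf{r})$, so that the same MAC-achievability argument applies uniformly at every receiver in $\mathcal{R}_k^i$ and the groupcast rate is genuinely decodable by the whole intended group. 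The main obstacle I foresee is book-keeping rather than conceptual: one must carry the $o(\cdot)$ terms in $\mathbb{E}[|\mathbf{h}_{kj}^H\hat{\mathbf{h}}_{kj}^\bot|^2]{=}P^{-a_{kj}}{+}o(P^{-a_{kj}})$ through the whole chain so that the received-power approximations become rigorous $\doteq$-style equalities in the $P{\to}\infty$ limit, and one must handle the degenerate cases separately, namely $\mathcal{A}(\mathbf{r},\mathcal{S}){=}\emptyset$ (in which $L{=}0$ and only the multicast layer survives), $r_k$ exceeding some $a_{\pi(\ell)}$ (in which the corresponding lower common layer is suppressed), and $\max_j(r_j{-}a_{kj})^+{>}r_k$ (in which the private bound \eqref{eq:DTRS_1} becomes vacuous and TRS collapses to the scheme recovered in Remark~\ref{rmk:ZFBF_region}).
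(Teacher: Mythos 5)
Your proposal is correct and follows essentially the same route as the paper's Appendix C: a layer-by-layer SIC analysis in which each round is a Gaussian MAC at user $k$ over the message set $\mathcal{T}_k^i(\mathcal{S},\mathbf{r})$, with the noise floor identified as $P^{a_{\pi(i-2)}}$ for $i{\geq}3$ and $\max\{P^{r_k},\max_{j{\in}\mathcal{S}{\setminus}k}P^{r_j{-}a_{kj}}\}$ for $i{=}2$, and the private layer handled by treating residual interference as noise exactly as in \eqref{eq:dk_zfbf}; the paper formalizes your power-ratio bound via the conditional differential entropies in \eqref{eq:entropy_wki}--\eqref{eq:entropy_value_wki3}. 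Your added encoder--decoder consistency lemma ($k'{\in}\mathcal{R}_k^i \Leftrightarrow w_k^i{\in}\mathcal{T}_{k'}^i$, immediate from \eqref{eq:pki} and \eqref{eq:Tki}) is a sound but minor supplement that the paper leaves implicit.
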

\begin{proof}
  see Appendix C.
\end{proof}

We point out that it is cumbersome to obtain a concise expression of $\mathcal{D}_{TRS}$ by eliminating the variables $\mathbf{r}$. This is because the DoF of the common messages transmitted in each power layer are characterized by $|\mathcal{S}|$ different inequalities, which strongly depend on the CSIT quality topologies (see \eqref{eq:DTRS_2} and \eqref{eq:DTRS_3}).

In the rest of this section, we consider an inner-bound $\bar{\mathcal{D}}_{TRS}(\mathcal{S}{,}\mathbf{r}){\subseteq}\mathcal{D}_{TRS}(\mathcal{S}{,}\mathbf{r})$, obtained by replacing \eqref{eq:DTRS_2} with
\begin{IEEEeqnarray}{rcl}
d_k^2&{=}&0{,}{\forall}k{\in}\mathcal{K}{\setminus}\mathcal{S}{;}\, 0{\leq}d_k^2{,}  \sum_{{\forall}j{:}w_j^2{\in}\mathcal{T}_k^2(\mathcal{S}{,}\mathbf{r})}d_j^2{\leq} a_{\pi(1)}{-}r_0{,}{\forall}k{\in}\mathcal{S}{;}\label{eq:DTRS_2_inner}
\end{IEEEeqnarray}
where $r_0{\triangleq}\max_{k{\in}\mathcal{S}}r_k$. When there is an even power allocation for the private messages, i.e., $r_k{=}r_j{,}{\forall}k{,}j{\in}\mathcal{S}$, we have $\bar{\mathcal{D}}_{TRS}(\mathcal{S}{,}\mathbf{r}){=}\mathcal{D}_{TRS}(\mathcal{S}{,}\mathbf{r})$. Comparing this inner-bound with the DoF region achieved by RS given in \eqref{eq:DcpRS}, we can reach the conclusion that the DoF region achieved by TRS covers that achieved by RS. To see this, let us express any achievable DoF tuple $(d_1^c{,}\cdots{,}d_K^c)$ for which \eqref{eq:DcpRS} holds as $\sum_{i{=}2}^{L{+}2}(d_1^{c{,}i}{,}\cdots{,}d_K^{c{,}i})$, where the DoF tuple $(d_1^{c{,}i}{,}\cdots{,}d_K^{c{,}i})$ are subject to $\sum_{k{\in}\mathcal{S}}d_k^{c{,}i}{\leq}a_{\pi(i{-}1)}{-}a_{\pi(i{-}2)}$ and $d_k^{c{,}i}{=}0$, ${\forall}k{\in}\mathcal{K}{\setminus}\mathcal{S}$. Then, it readily shows that the DoF tuple $(d_1^{c{,}i}{,}\cdots{,}d_K^{c{,}i})$ also lies in \eqref{eq:DTRS_2_inner} and \eqref{eq:DTRS_3}, because the summation of $d_k^i$ is taken over the set ${\forall}j{:}w_j^i{\in}\mathcal{T}_k^i(\mathcal{S}{,}\mathbf{r})$, which is a subset of $\mathcal{S}$. This fact implies that the DoF region achieved by TRS covers that achieved by RS, i.e., $\mathcal{D}_{RS}(\mathcal{S}{,}\mathbf{r}){\subseteq}\bar{\mathcal{D}}_{TRS}(\mathcal{S}{,}\mathbf{r}) {\subseteq}\mathcal{D}_{TRS}(\mathcal{S}{,}\mathbf{r})$.


\subsection{Weighted-Sum Interpretation}\label{sec:wsi}
We note that the construction of the TRS scheme is motivated by a novel weighted-sum interpretation of the CSIT quality topology as a series of network topologies. Specifically, with the power and ZF-precoders chosen for the common messages in \eqref{eq:Pki} and \eqref{eq:pki}, we observe that a transmitter $k$ is only connected to the group of users ${\forall}j{\in}\mathcal{R}_k^i(\mathcal{S}{,}\mathbf{r})$. Besides, as shown by the received signal given in \eqref{eq:yk2toLplus2}, the messages $w_j^i{\in}\mathcal{S}{\setminus}\mathcal{T}_k^i(\mathcal{S}{,}\mathbf{r})$ are forced within the noise power at user $k$. This fact implies that user $k$ is only connected to transmitters ${\forall}j{,}w_j^i{\in}\mathcal{T}_k^i(\mathcal{S}{,}\mathbf{r})$. Accordingly, this topology can be expressed using a connectivity matrix $\mathbf{M}^i(\mathcal{S}{,}\mathbf{r}){\in}\{0{,}1\}^{|\mathcal{S}|{\times}|\mathcal{S}|}$, whose element in row $k$ and column $j$, i.e., $m_{kj}$, is given by
\begin{IEEEeqnarray}{rcl}
m_{kj}^i&{=}&\left\{\begin{array}{ll}
                      1 & \text{if }w_j^i{\in}\mathcal{T}_k^i(\mathcal{S}{,}\mathbf{r}){;} \\
                      0 & \text{otherwise.}
                    \end{array}\right.\label{eq:MiT}
\end{IEEEeqnarray}
Note that the value of $m_{kj}^i$ in \eqref{eq:MiT} represents whether or not $w_j^i$ is decoded by user $k$. 

The DoF tuple \eqref{eq:DTRS_2_inner} and \eqref{eq:DTRS_3} achieved by the common messages transmitted with power layer $i$ can be interpreted as $(a_{\pi(i{-}1)}{-}a_{\pi(i{-}2)}){\times}\hat{\mathcal{D}}_{TRS}^i(\mathcal{S}{,}\mathbf{r})$, where
\begin{IEEEeqnarray}{rcl}
\hat{\mathcal{D}}_{TRS}^i(\mathcal{S}{,}\mathbf{r})&{:}&\,\hat{d}_k^2{=}0{,}{\forall}k{\in}\mathcal{K}{\setminus}\mathcal{S}{;} \, 0{\leq}\hat{d}_k^2{,} {\forall}k{\in}\mathcal{S}{,} \mathbf{M}^i(\mathcal{S}{,}\mathbf{r}){\times}\hat{\mathbf{d}}^i{\leq}\mathbf{1}_{|\mathcal{S}|}{,}2{\leq}i{\leq}L{+}2{,}\label{eq:DTRS_hat_i}
\end{IEEEeqnarray}
represents the set of DoF tuples $\hat{\mathbf{d}}^i{=}(\hat{d}_1^i{,}\cdots{,}\hat{d}_K^i)$ achieved by common message groupcasting in the partially network defined by connectivity matrix $\mathbf{M}^i(\mathcal{S}{,}\mathbf{r})$. The weights $a_{\pi(i{-}1)}{-}a_{\pi(i{-}2)}$, $i{\geq}2$, stand for the fractions of channel use of the partially connected networks in the power domain (Note that we assume $a_{\pi(0)}{=}r_0$). For clarity, let $\{\hat{w}_k^i\}_{k{\in}\mathcal{S}}$ denote the common messages transmitted in the partially connected network defined by topology $\mathbf{M}^i(\mathcal{S}{,}\mathbf{r})$. The achievable DoF of $\hat{w}_k^i$ is represented by $\hat{d}_k^i$. Then, the DoF $d_k^i$ of common message $w_k^i$ transmitted in TRS is obtained by $d_k^i{=}(a_{\pi(i{-}1)}{-}a_{\pi(i{-}2)})\hat{d}_k^i$. Consequently, the DoF region $\bar{\mathcal{D}}_{TRS}^c$ contributed by all the common messages $\sum_{i{=}2}^{L{+}2}(d_1^i{,}{\cdots}d_K^i)$ can be expressed by the weighted-sum of the DoF region achieved in the $L{+}1$ partially connected networks, i.e.,
\begin{IEEEeqnarray}{rcl}
\bar{\mathcal{D}}_{TRS}^c&{=}&\sum_{i{=}2}^{L{+}2}(a_{\pi(i{-}1)}{-}a_{\pi(i{-}2)}){\times} \hat{\mathcal{D}}_{TRS}^i(\mathcal{S}{,}\mathbf{r}){.}
\label{eq:Dweighted}
\end{IEEEeqnarray}

Similarly, when $r_k{\leq}\min_{j{\in}\mathcal{S}{\setminus}k}a_{jk}$, ${\forall}k{\in}\mathcal{S}$, the private message unicasting part is interpreted as a partially connected network formed by $|\mathcal{S}|$ parallel direct links, because all the interference is drowned into the noise.

This weighted-sum interpretation bridges the DoF region achieved TRS with the achievable DoF region in partially connected networks, thus allowing us to employ methodologies applicable for partially connected networks to analyze the DoF region achieved by TRS. Motivated by this, we study the sum DoF achieved TRS in the next subsection.

\subsection{Sum DoF from Graph Theory Perspective}\label{sec:sumDoF}
In this part, we aim to find the maximal sum DoF given the DoF region $\bar{\mathcal{D}}_{TRS}(\mathcal{S}{,}\mathbf{r})$ specified by \eqref{eq:DTRS_1}, \eqref{eq:DTRS_2_inner} and \eqref{eq:DTRS_3}. To do so, it is straightforward that the maximum DoF of the private messages achieved by the TRS scheme is $d_k^1{=}\left(r_k{-}\max_{j{\in}\mathcal{S}{\setminus}k}(r_j{-}a_{kj})^+\right)^+$. Then, the work is reduced to compute the maximum sum DoF contributed by all the common messages. As a consequence of the weighted-sum interpretation in \eqref{eq:Dweighted}, this sum DoF maximization is decoupled into a series of optimization problems
\begin{IEEEeqnarray}{rcl}
\mathcal{P}_i{:}\quad \max &\quad& \hat{d}_s^i(\mathcal{S}{,}\mathbf{r}){\triangleq}\sum_{k{\in}\mathcal{S}}\hat{d}_k^i{,}{\forall}i{=}2{,}\cdots{,}L{+}2 \label{eq:barPi}\\
\text{s.t. }&\quad& (\hat{d}_k^i)_{k{\in}\mathcal{S}}{\in}\hat{\mathcal{D}}_{TRS}^i(\mathcal{S}{,}\mathbf{r})\Rightarrow 0{\leq}\hat{d}_k^i{,}k{\in}\mathcal{S}{,} \mathbf{M}^i(\mathcal{S}{,}\mathbf{r}){\times}\hat{\mathbf{d}}^i{\leq}\mathbf{1}_{|\mathcal{S}|}{.}\label{eq:Pi_cons}
\end{IEEEeqnarray}

For convenience, we drop the variables $(\mathcal{S}{,}\mathbf{r})$ in the following analysis. As explained in Section \ref{sec:wsi}, solving the problem $\mathcal{P}_i$, $i{\geq}2$, is related to maximizing the sum DoF achieved by common message groupcasting in a partially connected network. In recent years, the DoF of a partially connected network has received lots of attentions in \cite{JafarTIM,Maleki14_IA_IndexCoding,Geng13_TIMTIN,Sun13_TIMAlt,Naderializadeh15,Xinping15_TIMCoMP}. Although all of these works look at symmetric DoF as a figure of merit, graph theory methodologies have been identified as a useful means because of its powerful ability to describe whether or not a user's message is interfered with one another. Motivated by that, we solve our problems in a similar way.

We model the partially connected network with connectivity matrix $\mathbf{M}^i$ as a hypergraph $\mathcal{H}^i(\hat{\mathcal{W}}^i{,}\mathcal{T}^i)$, where $\hat{\mathcal{W}}^i{\triangleq}\{\hat{w}_k^i\}_{k{\in}\mathcal{S}}$ is the vertex set of the hypergraph and $\mathcal{T}^i{\triangleq}\{\mathcal{T}_k^i\}_{k{\in}\mathcal{S}}$ with $\mathcal{T}_k^i$ defined in \eqref{eq:Tki} is the hyperedge set of the hypergraph. Note that a member of $\mathcal{T}^i$ is actually a subset of $\hat{\mathcal{W}}^i$. If each member of $\mathcal{T}^i$ has two vertices, e.g., $\mathcal{T}_k^i{=}\{\hat{w}_k^i{,}\hat{w}_j^i\}$, then $\mathcal{T}_k^i$ actually means an edge between $\hat{w}_k^i$ and $\hat{w}_j^i$, and the hypergraph $\mathcal{H}^i(\hat{\mathcal{W}}^i{,}\mathcal{T}^i)$ is actually a graph. When an element of $\mathcal{T}^i$ has more than two elements, i.e., $|\mathcal{T}_k^i|{\geq}3$, then $\mathcal{T}_k^i$ is called an hyperedge with $|\mathcal{T}_k^i|$ vertices.

In the following, focusing on the hypergraph $\mathcal{H}^i(\hat{\mathcal{W}}^i{,}\mathcal{T}^i)$, we interpret the optimization problem $\mathcal{P}_i$ as two classical problems in graph theory, which lead to a sub-optimal solution and the optimal solution.

\subsubsection{Orthogonal Groupcasting} \label{sec:orth_groupcast}
We firstly propose a sub-optimal solution, so called \emph{orthogonal groupcasting}, by assuming that each user only decodes at most one common message. In other words, no two of the common messages $\{\hat{w}_k^i\}_{{\forall}k{\in}\mathcal{S}}$ are received by a single user. This assumption imposes a constraint $\hat{d}_k^i{\in}\{0{,}1\}$ to the optimization problem $\mathcal{P}_i$ in \eqref{eq:barPi}.

Then, a DoF tuple $(\hat{d}_k^i)_{k{\in}\mathcal{S}}$ achieved by orthogonal groupcasting defines a subset $\mathcal{X}^i{\subseteq}\mathcal{W}^i$ which contains all the messages with DoF $1$, i.e., $\mathcal{X}^i{=}\{\hat{w}_k^i\}_{{\forall}k{,}\hat{d}_k^i{=}1}$. The sum DoF is identical to the cardinality of $\mathcal{X}^i$, i.e., $|\mathcal{X}^i|$. According to the definition of orthogonal groupcasting, this subset has the property that no two elements of $\mathcal{X}^i$ are together in the same member of $\mathcal{T}^i$. Therefore, this subset $\mathcal{X}^i{\subseteq}\hat{\mathcal{W}}^i$ is called a \emph{packing} in the hypergraph $\mathcal{H}^i(\hat{\mathcal{W}}^i{,}\mathcal{T}^i)$ \cite{FracGraTheo}. Finding the maximum sum DoF is equivalent to finding the largest size of a packing, and the largest size is defined to be the \emph{packing number} $p(\mathcal{H}^i(\hat{\mathcal{W}}^i{,}\mathcal{T}^i))$ of $\mathcal{H}^i(\hat{\mathcal{W}}^i{,}\mathcal{T}^i)$. Hence, the sum DoF achieved by the orthogonal groupcasting is stated as follows.
\begin{myprop}\label{prop:orth_groupcast}
In a fully connected $K$-cell MISO IC with equal SNR and INR and with CSIT quality topology $\mathcal{A}$, the sum DoF achieved by TRS designed with orthogonal common message groupcasting is
\begin{IEEEeqnarray}{rcl}
d_{s{,}TRS}^{\text{orth}}&{=}& \max_{{\forall}\mathcal{S}{\subseteq}\mathcal{K}{,}{\forall}\mathbf{r}}d_{s{,}TRS}^{\text{orth}}(\mathcal{S}{,}\mathbf{r}){,}
\text{ where}\\
d_{s{,}TRS}^{\text{orth}}(\mathcal{S}{,}\mathbf{r})&{=}& \sum_{k{=}1}^K\left(r_k{-}\max_{j{\in}\mathcal{S}^\prime{\setminus}k}(r_j{-}a_{kj})^+\right)^+{+} \sum_{i{=}2}^{L{+}2} (a_{\pi(i{-}1)}{-}a_{\pi(i{-}2)}){\times}p(\mathcal{H}^i(\hat{\mathcal{W}}^i{,}\mathcal{T}^i(\mathcal{S}{,}\mathbf{r}))){,}
\end{IEEEeqnarray}
where $p(\mathcal{H}^i(\hat{\mathcal{W}}^i{,}\mathcal{T}^i(\mathcal{S}{,}\mathbf{r})))$ refers to the packing number of a hypergraph $\mathcal{H}^i(\hat{\mathcal{W}}^i{,}\mathcal{T}^i(\mathcal{S}{,}\mathbf{r}))$ defined by vertex set $\hat{\mathcal{W}}^i$ and hyperedge set $\mathcal{T}^i(\mathcal{S}{,}\mathbf{r})$ defined in \eqref{eq:Tki}.
\end{myprop}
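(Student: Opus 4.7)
The plan is to prove Proposition \ref{prop:orth_groupcast} by combining three ingredients already established: the DoF region $\bar{\mathcal{D}}_{TRS}(\mathcal{S},\mathbf{r})$ from Proposition \ref{prop:DoFtuple}, the weighted-sum interpretation in \eqref{eq:Dweighted}, and a reformulation of the orthogonality constraint as a hypergraph packing problem. Fix an active user subset $\mathcal{S}\subseteq\mathcal{K}$ and a private-message power allocation $\mathbf{r}$. The sum DoF splits naturally as a sum of the private-layer contribution and the common-layer contribution, and the latter further decouples across power levels by \eqref{eq:Dweighted}.

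First I would handle the private-layer part. By \eqref{eq:DTRS_1} the private-message DoF satisfies $0\leq d_k^1\leq(r_k-\max_{j\in\mathcal{S}\setminus k}(r_j-a_{kj})^+)^+$, and setting $d_k^1$ equal to this upper bound is achievable with the ZF precoders and power levels chosen in \eqref{eq:Ppk1}. Summing over $k\in\mathcal{K}$ (using $d_k^1=0$ for $k\notin\mathcal{S}$) produces precisely the first summation in the statement of Proposition \ref{prop:orth_groupcast}.

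Next, the common-layer contribution decomposes via \eqref{eq:Dweighted} as $\sum_{i=2}^{L+2}(a_{\pi(i-1)}-a_{\pi(i-2)})\hat{d}_s^i(\mathcal{S},\mathbf{r})$, where $\hat{d}_s^i(\mathcal{S},\mathbf{r})=\sum_{k\in\mathcal{S}}\hat{d}_k^i$ is the optimum value of problem $\mathcal{P}_i$ in \eqref{eq:barPi}--\eqref{eq:Pi_cons}. The orthogonal-groupcasting restriction imposes the additional constraint $\hat{d}_k^i\in\{0,1\}$ for all $k\in\mathcal{S}$, encoding the rule that each user decodes at most one common message within power layer $i$. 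Let $\mathcal{X}^i\triangleq\{\hat{w}_k^i:\hat{d}_k^i=1\}\subseteq\hat{\mathcal{W}}^i$; then $\hat{d}_s^i=|\mathcal{X}^i|$. The remaining inequality $\mathbf{M}^i(\mathcal{S},\mathbf{r})\hat{\mathbf{d}}^i\leq\mathbf{1}_{|\mathcal{S}|}$ unfolds row-by-row into $\sum_{j:w_j^i\in\mathcal{T}_k^i(\mathcal{S},\mathbf{r})}\hat{d}_j^i\leq 1$ for every $k\in\mathcal{S}$ by the definition of $m_{kj}^i$ in \eqref{eq:MiT}. Under the binary restriction this says exactly that no hyperedge $\mathcal{T}_k^i$ of $\mathcal{H}^i(\hat{\mathcal{W}}^i,\mathcal{T}^i)$ contains two vertices of $\mathcal{X}^i$, which is by definition the condition for $\mathcal{X}^i$ to be a packing in the hypergraph. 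Maximizing $|\mathcal{X}^i|$ over all such packings yields the packing number, so $\hat{d}_s^i(\mathcal{S},\mathbf{r})=p(\mathcal{H}^i(\hat{\mathcal{W}}^i,\mathcal{T}^i(\mathcal{S},\mathbf{r})))$. Conversely, any maximum packing $\mathcal{X}^i$ is realized by the TRS transmission block of Section \ref{sec:trans_block} by activating the messages $\hat{w}_k^i$ indexed by $\mathcal{X}^i$ and setting the others to zero.

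Combining the two contributions gives $d_{s,TRS}^{\text{orth}}(\mathcal{S},\mathbf{r})$ as claimed, and taking the maximum over all active sets $\mathcal{S}\subseteq\mathcal{K}$ and all power allocations $\mathbf{r}$ yields the stated expression for $d_{s,TRS}^{\text{orth}}$. The only genuine step of substance is the correspondence in the previous paragraph between the $\{0,1\}$-constrained linear program $\mathcal{P}_i$ with matrix $\mathbf{M}^i$ and the hypergraph packing problem on $\mathcal{H}^i$; the rest is bookkeeping driven by \eqref{eq:DTRS_1}, \eqref{eq:Dweighted} and the definitions \eqref{eq:Tki}--\eqref{eq:MiT}. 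I expect this identification to be the main (though mild) obstacle, since it is what legitimizes invoking packing-number terminology and what makes the per-layer decomposition tight under the orthogonality hypothesis.
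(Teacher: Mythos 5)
Your proposal is correct and takes essentially the same route as the paper: the paper's own justification (given inline in Section \ref{sec:orth_groupcast} rather than in an appendix) likewise saturates the private-layer bound \eqref{eq:DTRS_1}, decouples the common layers via the weighted-sum interpretation \eqref{eq:Dweighted} into the per-layer problems $\mathcal{P}_i$, and identifies the $\{0{,}1\}$-restricted problem with finding a maximum packing of $\mathcal{H}^i(\hat{\mathcal{W}}^i{,}\mathcal{T}^i)$. Your row-by-row unfolding of $\mathbf{M}^i(\mathcal{S}{,}\mathbf{r})\hat{\mathbf{d}}^i{\leq}\mathbf{1}_{|\mathcal{S}|}$ via \eqref{eq:MiT} is exactly the correspondence the paper asserts, so there is no gap.
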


\subsubsection{Maximal groupcasting}
To find the optimal solution to problem $\mathcal{P}_i$, let us firstly look at the following problem.
\begin{IEEEeqnarray}{rcl}
\tilde{\mathcal{P}}_i{:}\quad\max&\quad& \sum_{k{\in}\mathcal{S}}\tilde{d}_k^i \label{eq:tildePi}\\
\text{s.t. }&\quad& \mathbf{M}^i{\times}\tilde{\mathbf{d}}^i{\leq}t{\times}\mathbf{1}_{|\mathcal{S}|}{,} \tilde{d}_k^i{\in}\mathbb{Z}^+{,}{\forall}k{\in}\mathcal{S}{,}\label{eq:Mdt}
\end{IEEEeqnarray}
where $t$ is a positive integer. A feasible $(\tilde{d}_k^i)_{k{\in}\mathcal{S}}$ satisfying \eqref{eq:Mdt} defines a \emph{multiset} $\mathcal{X}^i$ which contains $\hat{w}_k^i$ if $\tilde{d}_k^i{>}0$. The multiplicity\footnote{The multiset $\mathcal{X}^i$ may have multiple identical elements. For instance, one has $\mathcal{X}^i{=}\{\hat{w}_1^i{,}\hat{w}_2^i{,}\hat{w}_2^i\}$, and the multiplicity of $\hat{w}_1^i$ is $1$ and the multiplicity of $\hat{w}_2^i$ is $2$.} of $\hat{w}_k^i$ in $\mathcal{X}^i$ is $\tilde{d}_k^i$, and the sum DoF is equal to $|\mathcal{X}^i|$.

In this way, the inequality \eqref{eq:Mdt} can be interpreted as follows. For the vertices in the same member of $\mathcal{T}^i$, the sum of their multiplicity in $\mathcal{X}^i$ is smaller than or equal to $t$. According to \cite{FracGraTheo}, a multiset $\mathcal{X}^i$ with such a property is called a \emph{$t$-fold packing} of hypergraph $\mathcal{H}^i(\hat{\mathcal{W}}^i{,}\mathcal{T}^i)$. When $t{=}1$, the $t$-fold packing collapses to the packing of the hypergraph $\mathcal{H}^i(\hat{\mathcal{W}}^i{,}\mathcal{T}^i)$ that is introduced in Section \ref{sec:orth_groupcast}. Consequently, the optimization problem $\tilde{\mathcal{P}}_i$ is interpreted as finding the largest size of a $t$-fold packing, and the largest size is defined as the $t$-fold packing number $p_t(\mathcal{H}^i(\hat{\mathcal{W}}^i{,}\mathcal{T}^i))$.

So far, we are one-step closer to our objective. According to \cite[Section 1.2]{FracGraTheo}, the optimal result of Problem $\mathcal{P}_i$ in \eqref{eq:barPi} can be found using the result of Problem $\tilde{\mathcal{P}}_i$ in \eqref{eq:tildePi} by taking $t{\to}\infty$ as
\begin{IEEEeqnarray}{rcl}
p_f(\mathcal{H}^i(\hat{\mathcal{W}}^i{,}\mathcal{T}^i))&{=}& \lim_{t{\to}\infty}\frac{p_t(\mathcal{H}^i(\hat{\mathcal{W}}^i{,}\mathcal{T}^i))}{t}{.}
\label{eq:pf}
\end{IEEEeqnarray}
This quantity is called \emph{fractional packing number} of hypergraph $\mathcal{H}^i(\hat{\mathcal{W}}^i{,}\mathcal{T}^i)$. Besides, the DoF of message $\hat{w}_k^i$ is expressed as $\hat{d}_k^i{=}\lim_{t{\to}\infty}\frac{\tilde{d}_k^{i*}}{t}$, where $\tilde{d}_k^{i*}$ is the result of the $t$-fold packing problem $\tilde{\mathcal{P}}_i$.

Therefore, we may state an achievable sum DoF resulted by the maximal groupcasting as follows.
\begin{myprop}\label{prop:MCMM}
In a fully connected $K$-cell MISO IC with equal SNR and INR and with CSIT quality topology $\mathcal{A}$, the sum DoF achieved by TRS with maximal groupcasting is
\begin{IEEEeqnarray}{rcl}
d_{s{,}TRS}^{\max}&{=}& \max_{{\forall}\mathcal{S}{\subseteq}\mathcal{K}{,}{\forall}\mathbf{r}}d_{s{,}TRS}^{\max}(\mathcal{S}{,}\mathbf{r}){,}\text{ where}\\
d_{s{,}TRS}^{\max}(\mathcal{S}{,}\mathbf{r})&{=}& \sum_{k{=}1}^K\left(r_k{-}\max_{j{\in}\mathcal{S}^\prime{\setminus}k}(r_j{-}a_{kj})^+\right)^+{+} \sum_{i{=}2}^{L{+}2} (a_{\pi(i{-}1)}{-}a_{\pi(i{-}2)}){\times}p_f(\mathcal{H}^i(\hat{\mathcal{W}}^i{,}\mathcal{T}^i(\mathcal{S}{,}\mathbf{r}))){,}
\end{IEEEeqnarray}
where $p_f(\mathcal{H}^i(\hat{\mathcal{W}}^i{,}\mathcal{T}^i(\mathcal{S}{,}\mathbf{r})))$ refers to the fractional packing number of a hypergraph $\mathcal{H}^i(\hat{\mathcal{W}}^i{,}\mathcal{T}^i(\mathcal{S}{,}\mathbf{r}))$ defined by vertex set $\hat{\mathcal{W}}^i$ and hyperedge set $\mathcal{T}^i(\mathcal{S}{,}\mathbf{r})$ defined in \eqref{eq:Tki}.
\end{myprop}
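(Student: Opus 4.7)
The plan is to maximize the linear objective $\sum_{k}d_k$ directly over the polyhedron $\bar{\mathcal{D}}_{TRS}(\mathcal{S},\mathbf{r})$ defined by the private-layer constraint \eqref{eq:DTRS_1}, the lowest common-layer constraint \eqref{eq:DTRS_2_inner}, and the higher-layer constraints \eqref{eq:DTRS_3}; the outer $\max$ over $\mathcal{S}\subseteq\mathcal{K}$ and $\mathbf{r}$ then yields $d_{s,TRS}^{\max}$. The key structural observation is that in $\bar{\mathcal{D}}_{TRS}(\mathcal{S},\mathbf{r})$ the variable groups $\{d_k^1\}_k,\{d_k^2\}_k,\ldots,\{d_k^{L+2}\}_k$ appear in disjoint collections of inequalities, one per power layer, so the sum-DoF optimization decouples into $L+2$ independent linear programs that can be solved separately and then added.

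First, the layer-$1$ LP is solved term-wise, giving $d_k^1=(r_k-\max_{j\in\mathcal{S}\setminus k}(r_j-a_{kj})^+)^+$ and hence the first summand of $d_{s,TRS}^{\max}(\mathcal{S},\mathbf{r})$. Next, for each common-message layer $i\geq 2$, I would rescale $\hat{d}_k^i\triangleq d_k^i/(a_{\pi(i-1)}-a_{\pi(i-2)})$, using the convention $a_{\pi(0)}=r_0$ adopted in Section \ref{sec:wsi}, to reduce the $i$-th LP to the normalized problem $\mathcal{P}_i$ already written in \eqref{eq:barPi}--\eqref{eq:Pi_cons}. Since the rows of $\mathbf{M}^i(\mathcal{S},\mathbf{r})$ are, by construction \eqref{eq:MiT}, exactly the incidence vectors of the hyperedges $\mathcal{T}_k^i(\mathcal{S},\mathbf{r})$, this LP is the canonical relaxation of packing on the hypergraph $\mathcal{H}^i(\hat{\mathcal{W}}^i,\mathcal{T}^i(\mathcal{S},\mathbf{r}))$, whose optimum equals the fractional packing number $p_f(\mathcal{H}^i)$ by definition. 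Multiplying this optimum by the weight $a_{\pi(i-1)}-a_{\pi(i-2)}$ and summing over $i$ recovers the second summand.

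What remains is the achievability of the full LP value via a TRS transmission, i.e.\ showing that there is no integrality gap between the LP optimum and what TRS can actually deliver. For this I would appeal to the limit \eqref{eq:pf}: an optimal rational vector $\hat d_k^i=\tilde d_k^{i*}/t$ of the LP corresponds to a $t$-fold packing of $\mathcal{H}^i$ and can be implemented by block-coding over $t$ channel uses, sending $\tilde d_k^{i*}$ parallel group-common codewords from transmitter $k$ in power layer $i$ to the group $\mathcal{R}_k^i(\mathcal{S},\mathbf{r})$. The hyperedge constraint $\sum_{j:w_j^i\in\mathcal{T}_k^i}\tilde d_j^{i*}\leq t$ then guarantees that at user $k$ the total number of superposed codewords within tier $i$ does not exceed what the SIC schedule of Section \ref{sec:trans_block} can decode at the received power level $P^{a_{\pi(i-1)}}$. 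Passing to $t\to\infty$ and using the closedness of the achievable region extends the conclusion from rational to arbitrary LP-feasible tuples.

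The main obstacle I foresee is precisely this last achievability step, because Proposition \ref{prop:DoFtuple} (Appendix C) is stated for a single group common message per transmitter per power tier, whereas a $t$-fold packing in general superposes several such messages with distinct — but overlapping — intended user groups. Concretely, one has to verify that the per-tier rate-splitting bound underlying \eqref{eq:DTRS_2_inner}--\eqref{eq:DTRS_3} continues to hold in the superposed case; this reduces to the standard sum-rate bound for a Gaussian multiple-access channel formed within tier $i$ at each receiver, but it requires a careful bookkeeping of the ZF nulling directions associated with each of the $\tilde d_k^{i*}$ parallel common codewords, which is the only non-routine ingredient beyond Propositions \ref{prop:DoFtuple} and \ref{prop:orth_groupcast}.
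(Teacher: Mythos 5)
Your first two paragraphs follow the paper's own route exactly: decouple the sum-DoF maximization over $\bar{\mathcal{D}}_{TRS}(\mathcal{S}{,}\mathbf{r})$ across the $L{+}2$ layers (the variable groups indeed appear in disjoint constraint sets, with the convention $a_{\pi(0)}{=}r_0$ you correctly adopt), solve the private layer term-wise, and identify each normalized layer problem $\mathcal{P}_i$ in \eqref{eq:barPi}--\eqref{eq:Pi_cons} with the fractional packing LP of $\mathcal{H}^i(\hat{\mathcal{W}}^i{,}\mathcal{T}^i(\mathcal{S}{,}\mathbf{r}))$, since the rows of $\mathbf{M}^i$ in \eqref{eq:MiT} are the incidence vectors of the hyperedges \eqref{eq:Tki}. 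This is precisely why the paper gives no separate appendix for this proposition: once Proposition \ref{prop:DoFtuple} establishes the polyhedron, the per-layer optimum \emph{is} $p_f$ by definition, with \eqref{eq:pf} invoked only as the standard combinatorial characterization of that LP value from \cite{FracGraTheo}.

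The obstacle you flag in your last two paragraphs, however, does not exist, and your diagnosis rests on a misreading of $t$-fold packing. First, there is no integrality gap to close: the region of Proposition \ref{prop:DoFtuple} is derived in Appendix C from the Gaussian MAC sum-rate bounds \eqref{eq:entropy_wki} at each receiver, so the DoF tuples in \eqref{eq:DTRS_2}--\eqref{eq:DTRS_3} are real-valued from the outset. An LP-optimal tuple $(\hat{d}_k^{i})$ is achieved by transmitting the \emph{single} message $w_k^i$ of the existing scheme at rate $\hat{d}_k^{i}(a_{\pi(i{-}1)}{-}a_{\pi(i{-}2)})\log_2P{+}o(\log_2P)$; no block-coding over $t$ channel uses, no limiting argument, and no closedness appeal is needed. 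Second, the multiplicity $\tilde{d}_k^{i*}$ in a $t$-fold packing counts copies of the \emph{same} vertex $\hat{w}_k^i$, whose decode group $\mathcal{R}_k^i(\mathcal{S}{,}\mathbf{r})$ is fixed by the power level and the ZF null-space constraint \eqref{eq:pki}; the scenario you describe --- several superposed common messages from one transmitter in one tier with distinct but overlapping intended groups --- never arises. Even if one did split $w_k^i$ into parallel sub-codewords, they would all draw precoders from the same subspace in \eqref{eq:pki}, leaving the received power profile, and hence the MAC bound \eqref{eq:entropy_wki}, unchanged, so the ``careful bookkeeping of ZF nulling directions'' you anticipate as the non-routine ingredient is vacuous. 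Your proof is therefore correct in substance, but its claimed hard step dissolves once Proposition \ref{prop:DoFtuple} is read as a statement about a continuous polyhedron rather than an integral one.
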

\begin{figure}
  \centering
  \includegraphics[width=0.5\textwidth,height=3cm]{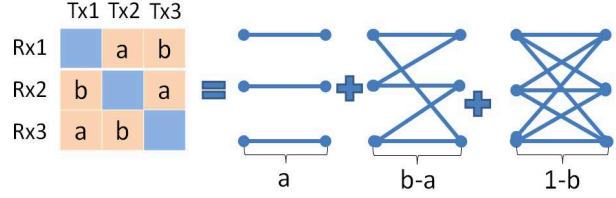}
  \caption{$3$-cell IC with a cyclic CSIT quality topology}\label{fig:3user_cyc}
\end{figure}
Note that both common message groupcasting methods suffice to achieve the sum DoF in the example illustrated in Figure \ref{fig:3user_het}. To highlight the gain offered by the maximal groupcasting, let us focus on the $3$-cell scenario with a cyclic CSIT quality topology illustrated in Figure \ref{fig:3user_cyc}.

Following the footsteps presented in Section \ref{sec:trans_block}, the transmitted signal consists of three power levels, $P^a$, $P^b{-}P^a$ and $P{-}P^b$, which are used for private message unicasting, common message groupcasting and common message multicasting. To highlight the benefit of performing maximal groupcasting, we only discuss the sum DoF achieved by the messages transmitted in the second power level.

With power $P^b{-}P^a$ and ZF-precoders, three interference links can be ``removed'', and the remaining links form a cyclic partially connected network as illustrated in Figure \ref{fig:3user_cyc}. In this network, with the orthogonal groupcasting method, only one message can be successfully transmitted, e.g., $(d_1^2{,}d_2^2{,}d_3^2){=}(b{-}a{,}0{,}0)$, $(0{,}b{-}a{,}0)$ or $(0{,}0{,}b{-}a)$. Otherwise, there will be some users receiving a mixture of two common messages, which contradicts the philosophy of the orthogonal groupcasting method. However, the maximal groupcasting method requires each user to decode multiple common messages. By doing so, although the DoF of each common message decreases, the sum DoF can be enhanced since more common messages can be transmitted. Specifically, since each user receives the mixture of two common messages, it is straightforward that the per common message DoF $\frac{b{-}a}{2}$ is achievable, thus leading to the sum DoF of $\frac{3}{2}(b{-}a)$, which outperforms $b{-}a$ achieved by orthogonal groupcasting.

Counting the DoF $3a$ achieved by the private messages and the DoF $1{-}b$ achieved by common message multicasting with power $P{-}P^b$, the sum DoF achieved by TRS designed with maximal groupcasting is $1{+}\frac{b{+}3a}{2}$. Note that this result outperforms the sum DoF $1{+}2a$ achieved by RS, and the sum DoF $\max\{a{+}b{,}3a\}$ achieved by ZFBF with power control.

Last but not the least, we point out that the sum DoF stated in Proposition \ref{prop:MCMM} yields the best result so far, because it has been shown that the DoF region stated in Proposition \ref{prop:DoFtuple} covers the DoF region achieved by RS and ZFBF with power control. Unfortunately, due to the complicated expression of the sum DoF achieved by TRS, the general sufficient and necessary condition where TRS strictly outperforms RS and ZFBF with power control is yet to be characterized. In an extreme case where the CSIT of the interference links associated to a single user have equal qualities, i.e., $a_{kj}{=}\alpha_k$, ${\forall}k{\in}\mathcal{K}$, ${\forall}j{\in}\mathcal{K}{\setminus}k$, following the footsteps presented in Section \ref{sec:trans_block}, we can see that there always exists a user who has to decode all the common messages. As a result, the sum DoF achieved by TRS is essentially impacted and is no greater than the sum DoF achieved by RS. Therefore, we claim that when the CSIT qualities of the interference links associated to each user have a larger variance, TRS is more likely to strictly outperform RS and ZFBF with power control. 

\section{Realistic Scenarios}\label{sec:real}
\begin{figure}[t]
\renewcommand{\captionfont}{\small}
\captionstyle{center}
\centering
\subfigure[CSIT quality table and network topology]{
                \centering
                \includegraphics[width=0.35\textwidth,height=3cm]{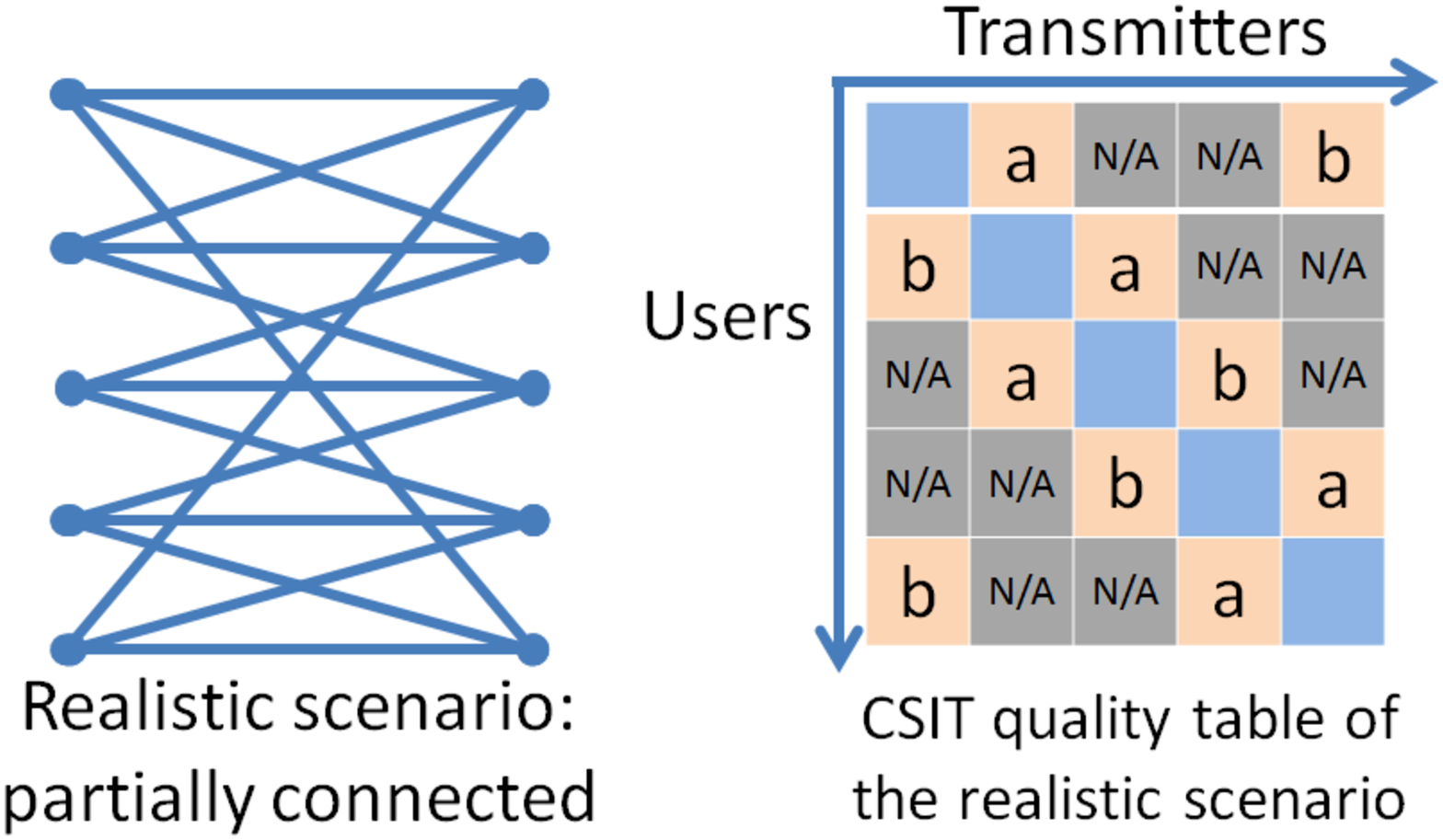}
                \label{fig:5user1}
        }
        \subfigure[Weighted-sum interpretation]{
                \centering
                \includegraphics[width=0.35\textwidth,height=2.5cm]{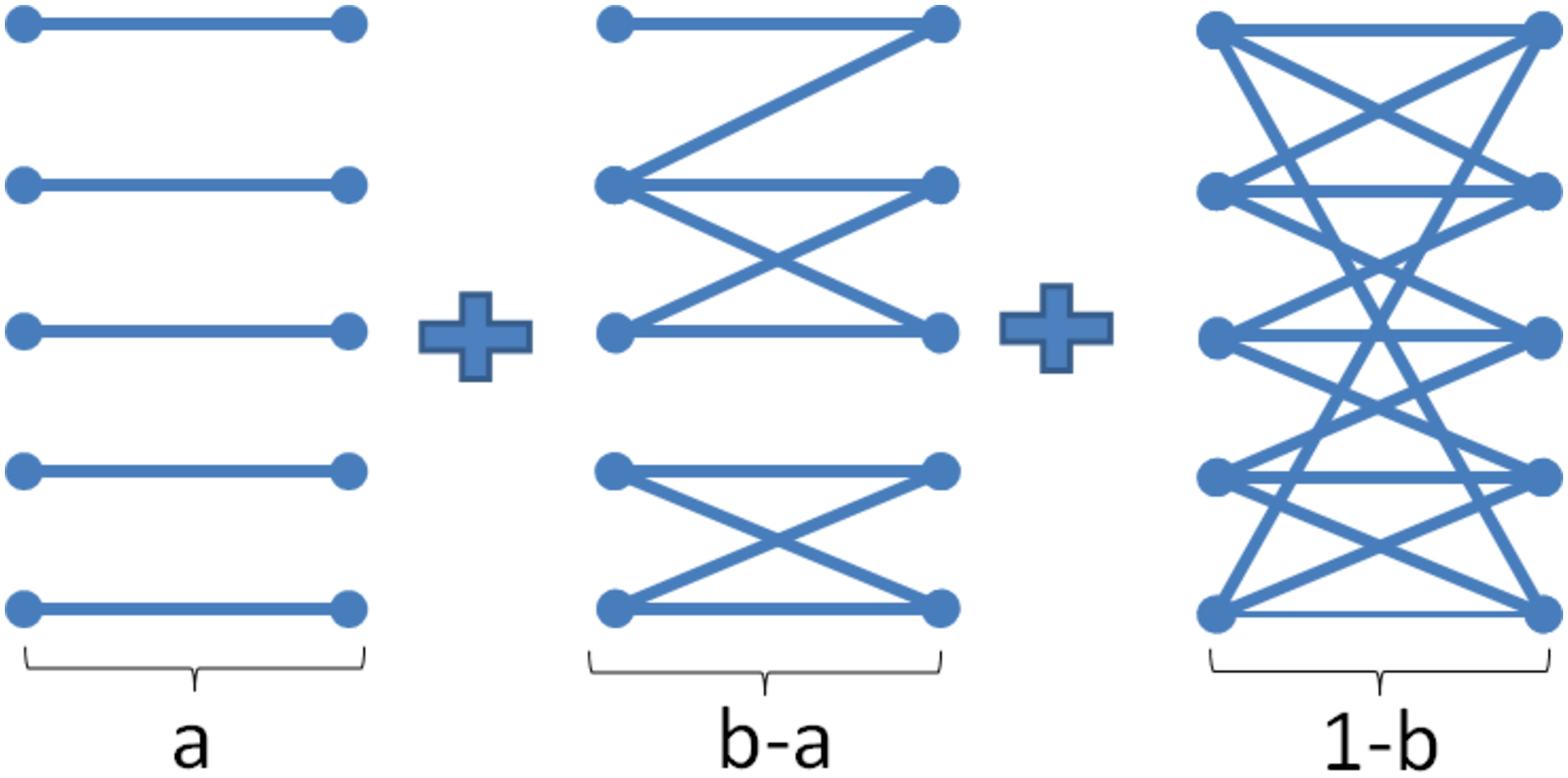}
                \label{fig:5user_wsi}
        }
\caption{$5$-user examples with realistic setting, where $0{\leq}a{\leq}b{\leq}1$.}
\end{figure}
So far, we have identified the achievability of the TRS scheme in the fully connected IC where $g_{kj}{=}1$, ${\forall}k{,}j{\in}\mathcal{K}$. In this section, we show that the philosophy of the TRS scheme is also applicable to partially connected networks with imperfect CSIT. To see this, we switch our attention to a realistic scenario in the homogeneous cellular network \cite{BrunoBook}, where each user typically only receives the signal sent by its serving transmitter, and the signals sent by two adjacent transmitters, i.e., user $k$ only receives $\mathbf{s}_k$, $\mathbf{s}_{k{+}1}$ and $\mathbf{s}_{k{-}1}$. The signals sent out by farther transmitters are assumed to be negligible due to the long distance. Note that it is assumed that user $1$ is connected to transmitter $K$, $1$ and $2$, while user $K$ is connected to transmitter $K{-}1$, $K$ and user $1$.

In the following, we firstly design a TRS approach for a class of CSIT quality topologies, where for each user, one incoming interference link has CSIT quality $b$, while the other interference link has CSIT quality $a$, i.e., either $(a_{k{,}k{+}1}{,}a_{k{,}k{-}1}){=}(a{,}b)$ or $(a_{k{,}k{+}1}{,}a_{k{,}k{-}1}){=}(b{,}a)$, ${\forall}k{\in}\mathcal{K}$. It is assumed that $a{\leq}b$. A $5$-cell example is illustrated in Figure \ref{fig:5user1}. Secondly, we find the closed-form expression of the maximal sum DoF achieved by the proposed TRS. Lastly, we compare the results with the sum DoF achieved by ZFBF with power control.

\subsection{TRS scheme}
Without sum DoF maximization, we design the TRS by the considering that all users are active and each transmitter uses power $P^a$, i.e., $r_k{=}a{,}{\forall}k{\in}\mathcal{K}$, to unicast the private message. Following the footsteps presented in Section \ref{sec:trans_block}, the transmitted signal is expressed as
\begin{IEEEeqnarray}{rcl}
\mathbf{s}_k&{=}&\underbrace{\mathbf{p}_k^3w_k^3}_{P{-}P^b}{+}\underbrace{\mathbf{p}_k^2w_k^2}_{P^b{-}P^a}{+}
\underbrace{\mathbf{p}_k^1w_k^1}_{P^a}{,}\label{eq:sk_real}
\end{IEEEeqnarray}
where $\mathbf{p}_k^3$ is a random precoder, $\mathbf{p}_k^2{\subseteq}\Span(\{\hat{\mathbf{h}}_{jk}^\bot\}_{j{=}k{+}1{,}k{-}1{,}a_{jk}{=}b})$, and $\mathbf{p}_k^1{\subseteq}\Span(\hat{\mathbf{h}}_{k{+}1{,}k}^\bot{,}\hat{\mathbf{h}}_{k{-}1{,}k}^\bot)$ are ZF-precoders. The message $w_k^1$ is a private message intended for user $k$, $w_k^2$ is a common message to be decoded by user $k$ and user $j$ for some $j{=}k{+}1{,}k{-}1{,}a_{jk}{=}a$, while $w_k^3$ is a common message to be decoded by user $k$, $k{-}1$ and $k{+}1$.

The signal received by user $k$ writes as
\begin{IEEEeqnarray}{rcl}
y_k&{=}&\sum_{j{=}k{-}1}^{k{+}1}\underbrace{\mathbf{h}_{kj}^H\mathbf{p}_j^3w_j^3}_{P}{+}
\label{eq:yk_level3}\\
&&\underbrace{\mathbf{h}_{kk}^h\mathbf{p}_k^2w_k^2}_{P^b}{+}
\sum_{j{=}k{-}1{,}k{+}1{,}a_{kj}{=}a}\underbrace{\mathbf{h}_{kj}^H\mathbf{p}_j^2w_j^2}_{P^b}{+}
\sum_{j{=}k{-}1{,}k{+}1{,}a_{kj}{=}b}\underbrace{\mathbf{h}_{kj}^H\mathbf{p}_j^2w_j^2}_{P^0}{+}
\label{eq:yk_level2}\\
&&\underbrace{\mathbf{h}_{kk}^H\mathbf{p}_k^1w_k^1}_{P^a}{+}\sum_{j{=}k{-}1{,}k{+}1}\underbrace{\mathbf{h}_{kj}^H\mathbf{p}_j^1w_j^1}_{P^0}{+}
\underbrace{n_k}_{P^0}{.}\label{eq:yk_level1}
\end{IEEEeqnarray}
The sets of the common messages that are decoded by user $k$ are defined as $\mathcal{T}_k^1{\triangleq}\{w_k^1\}$, $\mathcal{T}_k^2{\triangleq}w_k^2{\cup}w_j^2$, $j{=}k{+}1{,}k{-}1{,}a_{jk}{=}a$ and $\mathcal{T}_k^3{\triangleq}\{w_k^3{,}w_{k{+}1}^3{,}w_{k{-}1}^3\}$. By performing SIC, the achievable DoF lies in
\begin{IEEEeqnarray}{rcl}
d_k^1{\leq}a{,}\quad & \sum_{j{\in}\mathcal{T}_k^2}d_j^2{\leq}b{-}a{,}\quad&
\sum_{j{\in}\mathcal{T}_k^3}d_j^3{\leq}1{-}b{,}{\forall}k{\in}\mathcal{K{.}}\label{eq:Dreal}
\end{IEEEeqnarray}

With the definition of $\mathcal{T}_k^i$, the weighted-sum interpretation of the CSIT quality topology in Figure \ref{fig:5user1} is illustrated in Figure \ref{fig:5user_wsi}. The left, middle and right figures respectively stand for the partially connected networks where the private message unicasting, common message groupcasting and common message multicasting are performed. Next, given the achievable DoF region in \eqref{eq:Dreal}, we study the maximal achievable sum DoF.

\subsection{Sum DoF achieved by the proposed TRS}
Firstly, it is clear that the maximum sum DoF achieved by the private messages $\{w_k^1\}_{k{\in}\mathcal{K}}$ is $Ka$.

Secondly, the maximum sum DoF achieved by the common messages $\{w_k^3\}_{k{\in}\mathcal{K}}$ can be found as follows. The inequalities in \eqref{eq:Dreal} related to $w_k^3$ can be explicitly written as $d_1^3{+}d_2^3{+}d_3^3{\leq}1{-}b$, $d_2^3{+}d_3^3{+}d_4^3{\leq}1{-}b$, $\cdots$, $d_{K{-}1}^3{+}d_K^3{+}d_1^3{\leq}1{-}b$ and $d_{K}^3{+}d_1^3{+}d_2^3{\leq}1{-}b$. Summing these $K$ inequalities yields $3\sum_{k{=}1}^{K}d_k^3{\leq}{K}(1{-}b)$, leading to the sum DoF $\sum_{k{=}1}^{K}d_k^3{\leq}\frac{K}{3}(1{-}b)$. The equality holds by simply taking $d_1^3{=}d_2^3{=}\cdots{=}d_K^3{=}\frac{1{-}b}{3}$.

Thirdly, it remains to compute the maximal achievable DoF of the common messages $\{w_k^2\}_{k{\in}\mathcal{K}}$. To this end, according to the definition of set $\mathcal{T}_k^2$, we obtain a partially connected network with the topology matrix $\mathbf{M}^2$, whose elements are determined following \eqref{eq:MiT}. Specifically, if $j{=}k$ or $a_{kj}{=}a$, we have $m_{kj}{=}1$; otherwise, we have $m_{kj}{=}0$. An example of the obtained partially connected network is illustrated in Figure \ref{fig:5user_wsi}. Then, finding the sum DoF achieved by $\{w_k^2\}_{k{\in}\mathcal{K}}$ subject to \eqref{eq:Dreal} is equivalent to computing the sum DoF $\sum_{k{\in}\mathcal{K}}\hat{d}_k^2$ given $\mathbf{M}^2\hat{\mathbf{d}}^2{\leq}\mathbf{1}_K$, where $\hat{d}_k$ stands for DoF of common message $\hat{w}_k^i$ transmitted in the partially connected network defined by $\mathbf{M}^2$. The DoF $d_k^2$ achieved in TRS is obtained by $(b{-}a)\hat{d}_k^2$.

According the CSIT quality topology mentioned at the beginning of this section, we see that each set $\mathcal{T}_k^2{,}{\forall}k{\in}\mathcal{K}$ has two elements, and each row of $\mathbf{M}^2$ has two ``$1$''s. Then, following the definition introduced in Section \ref{sec:sumDoF}, the hypergraph $\mathcal{H}^2(\hat{\mathcal{W}}^2{,}\mathcal{T}^2)$ is actually a graph. A member of $\mathcal{T}^2$, i.e., $\mathcal{T}_k^i$, refers to an edge between vertex $\hat{w}_k^2$ and its neighbor $\hat{w}_j^2$ if $a_{kj}{=}a$. Next, we characterize the sum DoF $\sum_{k{\in}\mathcal{K}}\hat{d}_k^2$ by evaluating the row rank of $\mathbf{M}^2$.

\begin{figure}[t]
\renewcommand{\captionfont}{\small}
\captionstyle{center}
\centering
\subfigure[$\Rowrk(\mathbf{M}^2){=}5$]{
                \centering
                \includegraphics[width=0.3\textwidth,height=3cm]{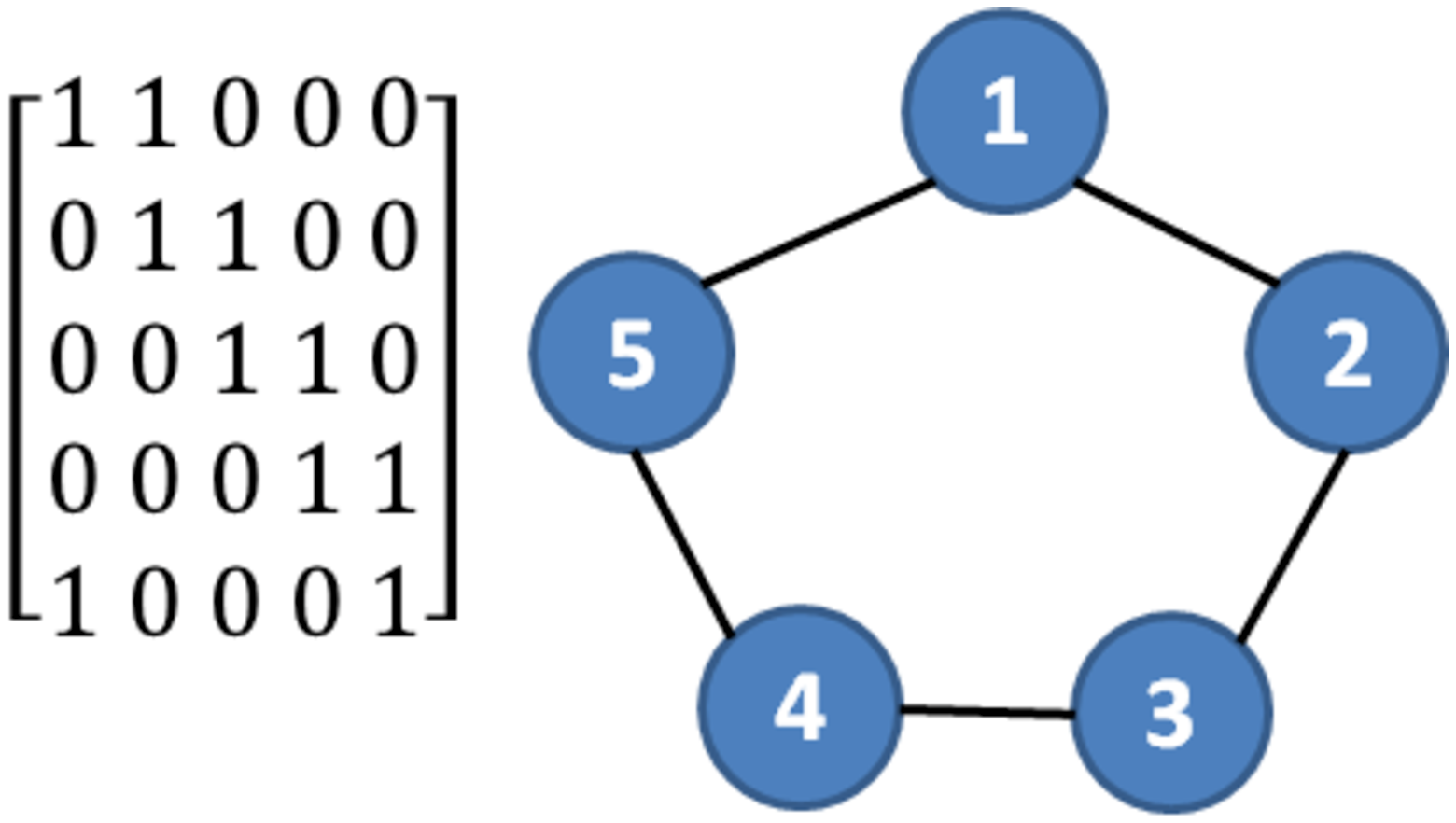}
                \label{fig:5user1_rk5}
        }
        \subfigure[$\Rowrk(\mathbf{M}^2){=}4$]{
                \centering
                \includegraphics[width=0.3\textwidth,height=3cm]{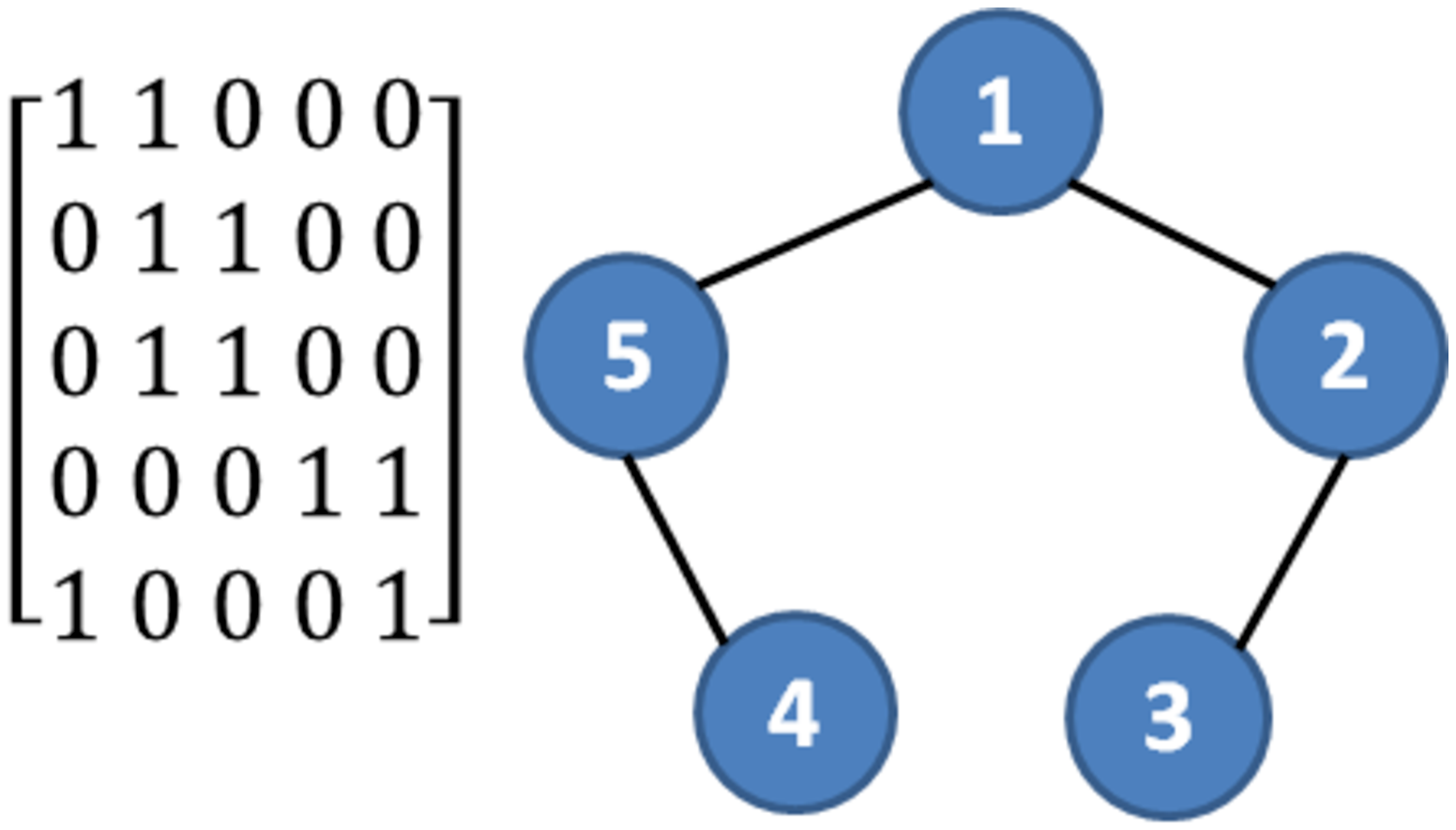}
                \label{fig:5user1_rk4}
        }
        \subfigure[$\Rowrk(\mathbf{M}^2){=}3$]{
                \centering
                \includegraphics[width=0.3\textwidth,height=3cm]{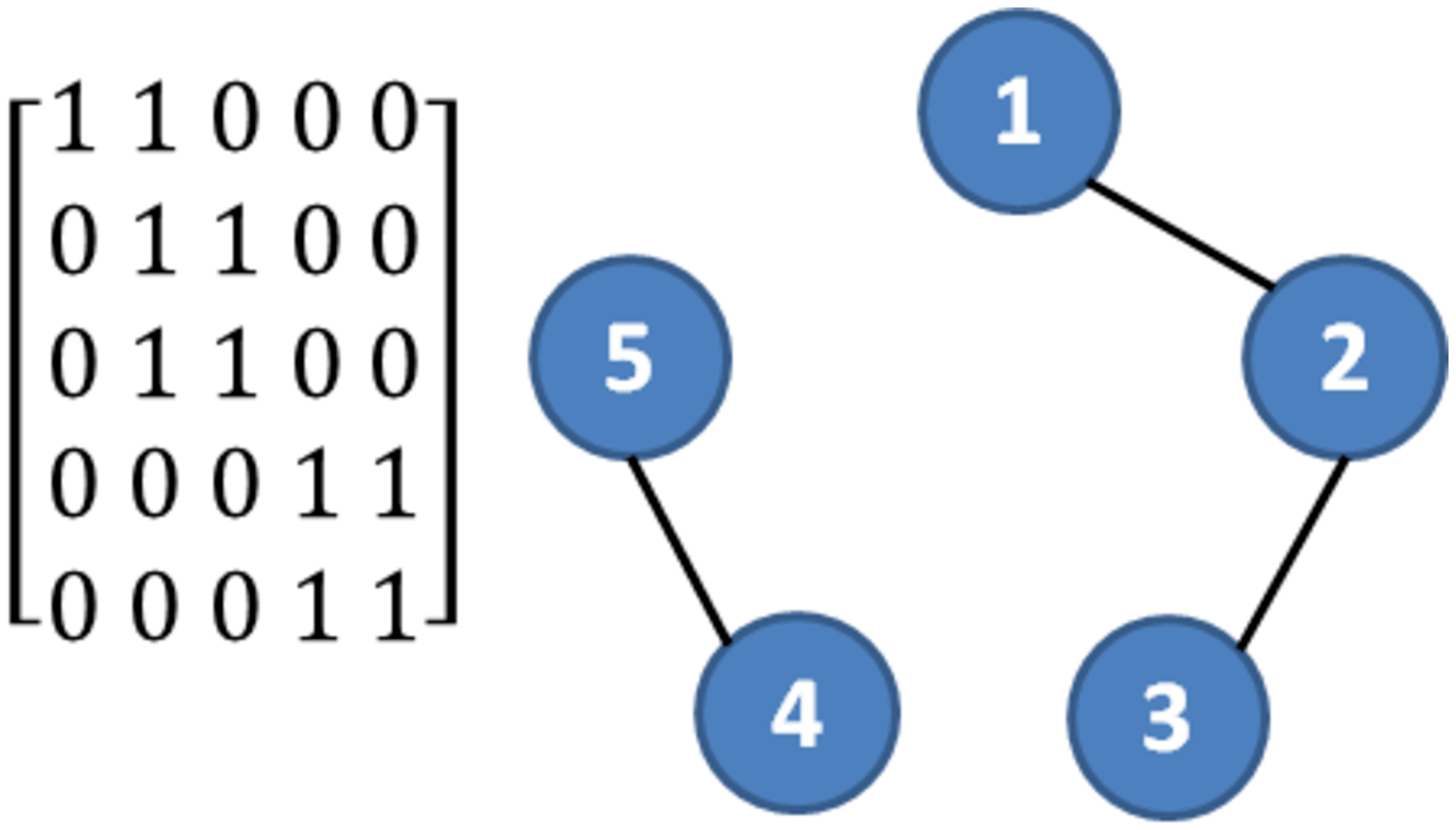}
                \label{fig:5user1_rk3}
        }
\caption{The hypergraph $\mathcal{H}^2(\mathcal{W}^2{,}\mathcal{T}^2)$.}
\end{figure}
When $\mathbf{M}^2$ has a full row rank, it means that there is no redundant inequalities in \eqref{eq:Dreal}. In other words, there is no overlapping edges in $\mathcal{T}^2$. Moreover, since a vertex $\hat{w}_k^2$ can only have an edge with either $\hat{w}_{k{-}1}^i$ or $\hat{w}_{k{+}1}^i$, the graph $\mathcal{H}^2(\hat{\mathcal{W}}^2{,}\mathcal{T}^2)$ is actually a circuit. An example is illustrated Figure \ref{fig:5user1_rk5}. It can be verified that the sum DoF of the common messages $\{\hat{w}_k^2\}_{k{\in}\mathcal{K}}$ is $\frac{K}{2}$ (obtained by adding up all the $K$ inequalities involved in $\mathbf{M}^2\hat{\mathbf{d}}^2{\leq}\mathbf{1}_K$ and dividing the sum by $2$).

When $\mathbf{M}^2$ has a deficient row rank, it means that some edges of the graph $\mathcal{H}^2(\hat{\mathcal{W}}^2{,}\mathcal{T}^2)$ are redundant. This fact breaks the circuit when $\mathbf{M}^2$ has full row rank into pieces. Clearly, if the row rank of $\mathbf{M}^2$ is $\Rowrk(\mathcal{M}^2){=}K{-}1$, the graph is a chain (see Figure \ref{fig:5user1_rk4}); if the row rank of $\mathbf{M}^2$ is $\Rowrk(\mathcal{M}^2){=}K{-}2$, the graph consists of two separated chains (see Figure \ref{fig:5user1_rk3}), then the maximum sum DoF can be computed by adding up the sum DoF achieved in each chain. Hence, when $\Rowrk(\mathcal{M}^2){=}r$, the graph has $K{-}r$ separated chains. The remaining work is to characterize the maximum sum DoF for a single chain.

Intuitively, as two connected vertices correspond to a sum DoF constraint $\hat{d}_k^2{+}\hat{d}_{k{+}1}^2{\leq}1$, the maximum sum DoF for a single chain is equal to the number of disjoint vertices. Hence, denoting the length of a chain by $K_n$, the sum DoF is $\frac{K_n}{2}$ if $K_n$ is an even number and $\frac{K_n{+}1}{2}$ if $K_n$ is an odd number. The rigorous proof is presented in Appendix D.

In general, when $\Rowrk(\mathcal{M}^2){=}r$, the sum DoF of common messages $\{\hat{w}_k^2\}_{k{\in}\mathcal{K}}$ writes as
\begin{IEEEeqnarray}{rcl}
\sum_{k{\in}\mathcal{K}}^{\hat{d}_k^2}&{=}&\left\{\begin{array}{ll}\frac{K}{2} & \text{if }r{=}K\\
\sum_{n{=}1}^{K{-}r}\frac{K_n}{2}1_{K_n\text{ is even}}{+}\frac{K_n{+}1}{2}1_{K_n\text{ is odd}}& \text{if }r{<}K\end{array}\right.
{=}\frac{K}{2}{+}\frac{\epsilon}{2}{,}\label{eq:ds2_scn1}
\end{IEEEeqnarray}
where $\epsilon$ stands for the number of chains that have odd number of vertices. Then, the maximum sum DoF achieved by $\{w_k^2\}_{k{\in}\mathcal{K}}$ transmitted in TRS is $(b{-}a)\left(\frac{K}{2}{+}\frac{\epsilon}{2}\right)$.

According to the above analysis and counting the sum DoF achieved by $\{w_k^i\}_{k{\in}\mathcal{K}}{,}i{=}1{,}3$, we state the maximum achievable sum DoF in the considered scenario as follows.
\begin{myprop}\label{prop:scn1}
In a $K$-cell MISO IC where 1) each user is connected to its closest three transmitters, and 2) the two incoming interference links associated to each user has CSIT quality $a$ and $b$ with $0{\leq}a{\leq}b{\leq}1$, the maximum sum DoF achieved by TRS designed by unicasting private messages with power $P^a$ is
\begin{IEEEeqnarray}{rcl}
d_{s{,}TRS}^{\max}(\mathcal{K}{,}\mathbf{r}{=}\mathbf{a})&{=}& \frac{K}{3}{+}\frac{K}{6}b{+}\frac{K}{2}a{+}\frac{b{-}a}{2}\epsilon{,}\label{eq:ds_scn1}
\end{IEEEeqnarray}
where $\epsilon$ is defined in \eqref{eq:ds2_scn1} and $\mathbf{r}{=}\mathbf{a}$ means that $r_k{=}a{,}{\forall}k{\in}\mathcal{K}$.
\end{myprop}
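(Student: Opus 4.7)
The plan is to compute the three contributions to the sum DoF separately, corresponding to the three power layers in the TRS signal \eqref{eq:sk_real}, and then sum them. By the weighted-sum interpretation of Section \ref{sec:wsi}, the sum DoF decouples as
\begin{IEEEeqnarray*}{rcl}
d_{s,TRS}^{\max}(\mathcal{K},\mathbf{a}) &=& \sum_{k\in\mathcal{K}} d_k^1 + (b-a)\sum_{k\in\mathcal{K}} \hat{d}_k^2 + (1-b)\sum_{k\in\mathcal{K}} \hat{d}_k^3,
\end{IEEEeqnarray*}
where each term corresponds to the fractional packing number of a hypergraph defined by the sets $\mathcal{T}_k^1,\mathcal{T}_k^2,\mathcal{T}_k^3$.

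First I would dispose of the easy layers. Since $r_k=a$ for all $k$ and all interfering private messages are forced into the noise by the ZF-precoder $\mathbf{p}_k^1$, each private message achieves $d_k^1=a$, contributing $Ka$. For the multicast layer $\{w_k^3\}$, each inequality in \eqref{eq:Dreal} has the form $d_{k-1}^3+d_k^3+d_{k+1}^3\leq 1-b$; summing the $K$ such cyclic inequalities each variable is counted three times, giving $\sum_k d_k^3\leq \frac{K}{3}(1-b)$, with equality achieved by the uniform allocation $d_k^3=\frac{1-b}{3}$.

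The core of the proof is the middle layer $\{w_k^2\}$. As argued in the body, the hyperedges $\mathcal{T}_k^2$ all have cardinality two, so $\mathcal{H}^2$ is an ordinary graph whose edges are determined by the positions where $a_{kj}=a$, and each vertex has at most two incident edges. Hence the connected components of $\mathcal{H}^2$ are either a single cycle (if $\Rowrk(\mathbf{M}^2)=K$) or a union of $K-\Rowrk(\mathbf{M}^2)$ disjoint chains. For a single cycle of length $K$ the constraints $\hat{d}_k^2+\hat{d}_{k+1}^2\leq 1$ summed cyclically give $\sum_k\hat{d}_k^2\leq \frac{K}{2}$, attained by the uniform allocation $\hat{d}_k^2=\frac{1}{2}$. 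For a chain of length $K_n$ the fractional packing number equals $\lceil K_n/2\rceil$, i.e., $\frac{K_n}{2}$ if $K_n$ is even and $\frac{K_n+1}{2}$ if $K_n$ is odd; this is the main technical step, and the rigorous argument (induction on chain length, or an explicit dual fractional vertex cover) is deferred to Appendix D. Summing over all chains, with $\epsilon$ denoting the number of odd-length chains, we obtain $\sum_k \hat{d}_k^2 = \frac{K}{2}+\frac{\epsilon}{2}$, and this formula holds uniformly whether $\mathcal{H}^2$ is a cycle or a disjoint union of chains (a pure cycle corresponds to $\epsilon=0$).

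Finally I would assemble the pieces:
\begin{IEEEeqnarray*}{rcl}
d_{s,TRS}^{\max}(\mathcal{K},\mathbf{a}) &=& Ka + (b-a)\Bigl(\tfrac{K}{2}+\tfrac{\epsilon}{2}\Bigr) + (1-b)\tfrac{K}{3}\\
&=& \tfrac{K}{3} + \tfrac{K}{6}b + \tfrac{K}{2}a + \tfrac{b-a}{2}\epsilon,
\end{IEEEeqnarray*}
which matches the claimed expression. The hard part of this program is not the arithmetic but the chain-DoF computation, since one must argue both achievability (exhibiting a fractional packing of weight $\lceil K_n/2\rceil/K_n$ per vertex) and the converse (a matching fractional covering of the edges), and then justify that optima decouple across connected components of $\mathcal{H}^2$.
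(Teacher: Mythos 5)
Your proposal is correct and follows essentially the same route as the paper: the identical three-layer decomposition ($Ka$ from the private messages; $\tfrac{K}{3}(1{-}b)$ from the multicast layer by summing the $K$ cyclic triple inequalities; and $(b{-}a)\bigl(\tfrac{K}{2}{+}\tfrac{\epsilon}{2}\bigr)$ from the middle layer by recognizing $\mathcal{H}^2$ as a cycle or a disjoint union of chains), with the chain sum-DoF of $\tfrac{K_n}{2}$ (even) or $\tfrac{K_n{+}1}{2}$ (odd) deferred just as the paper defers it to Appendix D, where it is established by summing the odd-indexed pairwise constraints and, in the odd case, adding $\hat{d}_K^2{\leq}1$, with achievability via the alternating $0$--$1$ allocation. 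Your explicit remark that one must also justify decoupling across connected components is a point the paper treats only implicitly, but it does not change the argument.
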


Obviously, the sum DoF achieved by the proposed TRS scheme strongly depends on $\epsilon$, i.e., the number of chains with odd number of vertices. Since there are at least two elements in a chain, the shortest length of a chain with odd number of vertices is $3$. Hence, the maximal value of $\epsilon$ is $\epsilon^*{=}\frac{K}{3}$, $\frac{K{-}2}{3}$ and $\frac{K{-}4}{3}$ when ${K}\bmod{3}{=}0$, ${K}\bmod{3}{=}2$ and ${K}\bmod{3}{=}1$, respectively. This indicates that the best topology that yields the greatest sum DoF has the property that in the generated graph there exist $\epsilon^*$ chains with three vertices and $\frac{K{-}3\epsilon^*}{2}$ chains with two vertices. Then, by substituting $\epsilon^*$ into \eqref{eq:ds_scn1}, we find that the best sum DoF is $\frac{K}{3}(1{+}b{+}a){-}\frac{k_m}{3}(b{-}a)$, where $k_m{\triangleq}2K\bmod{3}$. Besides, the worst topology has the property that all the chains have even number of vertices. The worst sum DoF is equal to $\frac{K}{3}(1{+}\frac{b}{2}{+}\frac{3a}{2})$.

\subsection{Discussions}

\begin{figure}[t]
\renewcommand{\captionfont}{\small}
\captionstyle{center}
\centering
\subfigure[All users are active.]{
                \centering
                \includegraphics[width=0.42\textwidth,height=2.75cm]{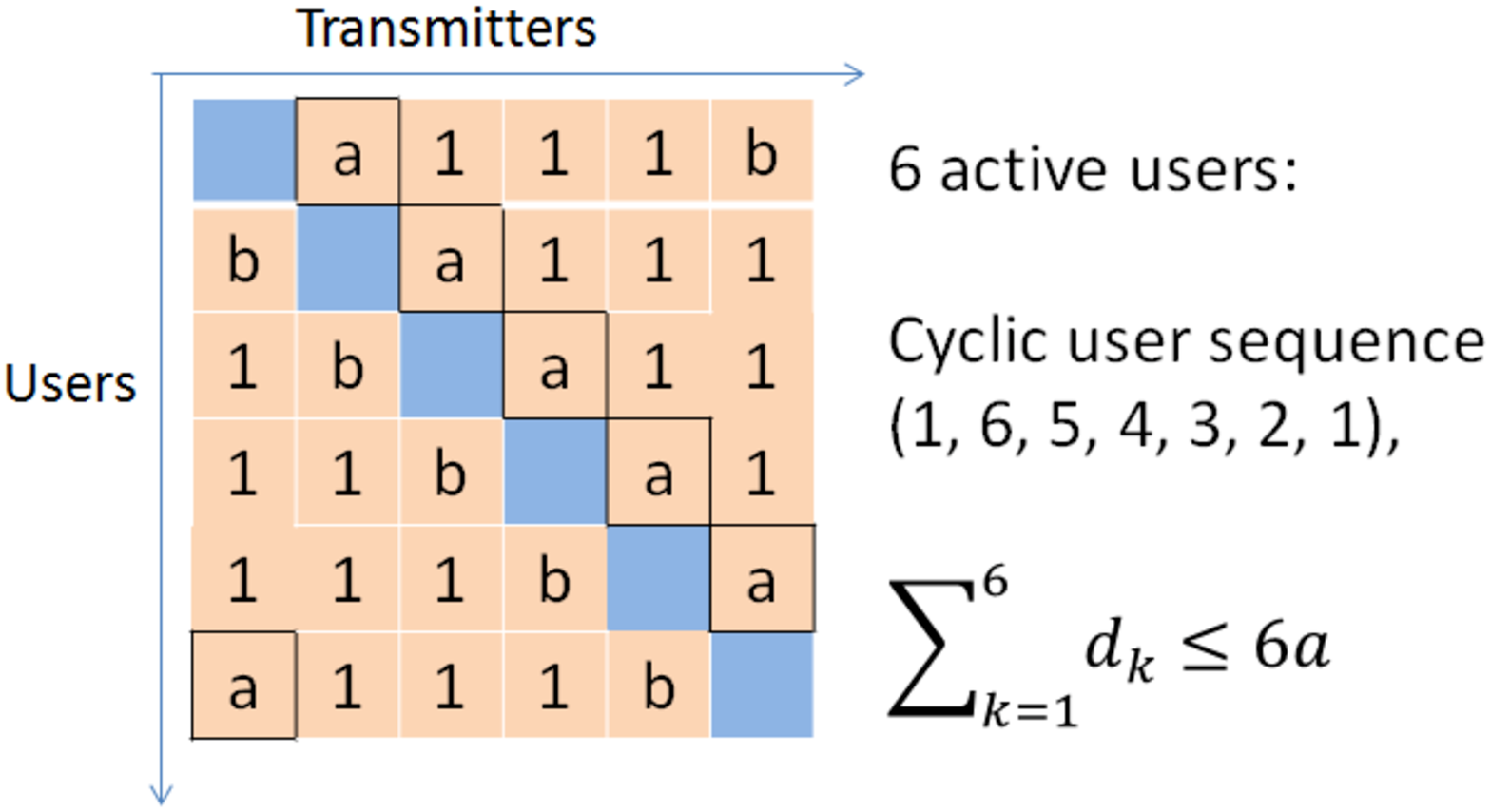}
                \label{fig:6active}
        }
        \subfigure[$5$ active users, $k{=}1{,}2{,}3{,}4{,}5$.]{
                \centering
                \includegraphics[width=0.42\textwidth,height=2.75cm]{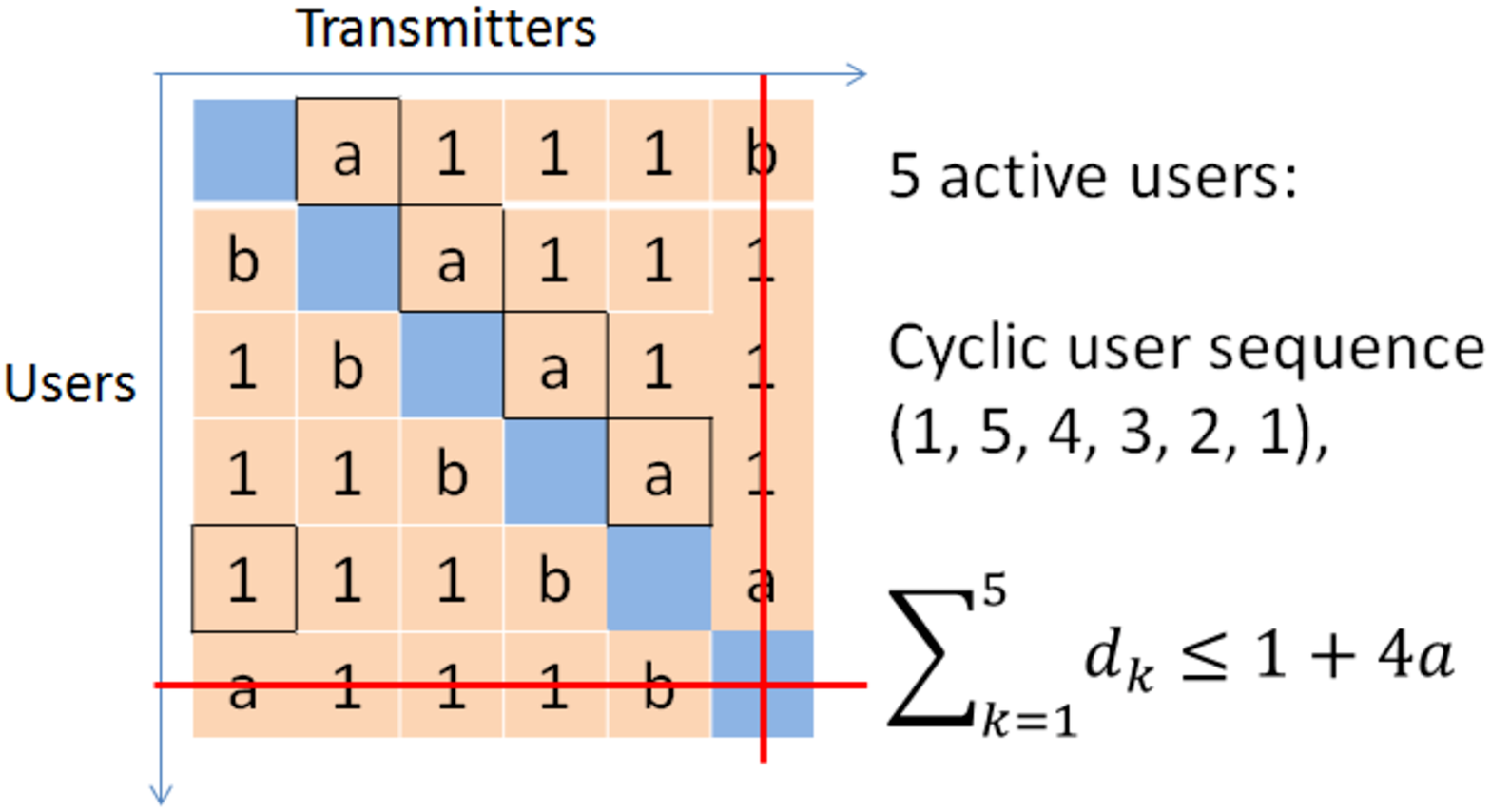}
                \label{fig:5active}
        }\\
        \subfigure[$4$ active users, $k{=}1{,}2{,}3{,}5$.]{
                \centering
                \includegraphics[width=0.42\textwidth,height=2.75cm]{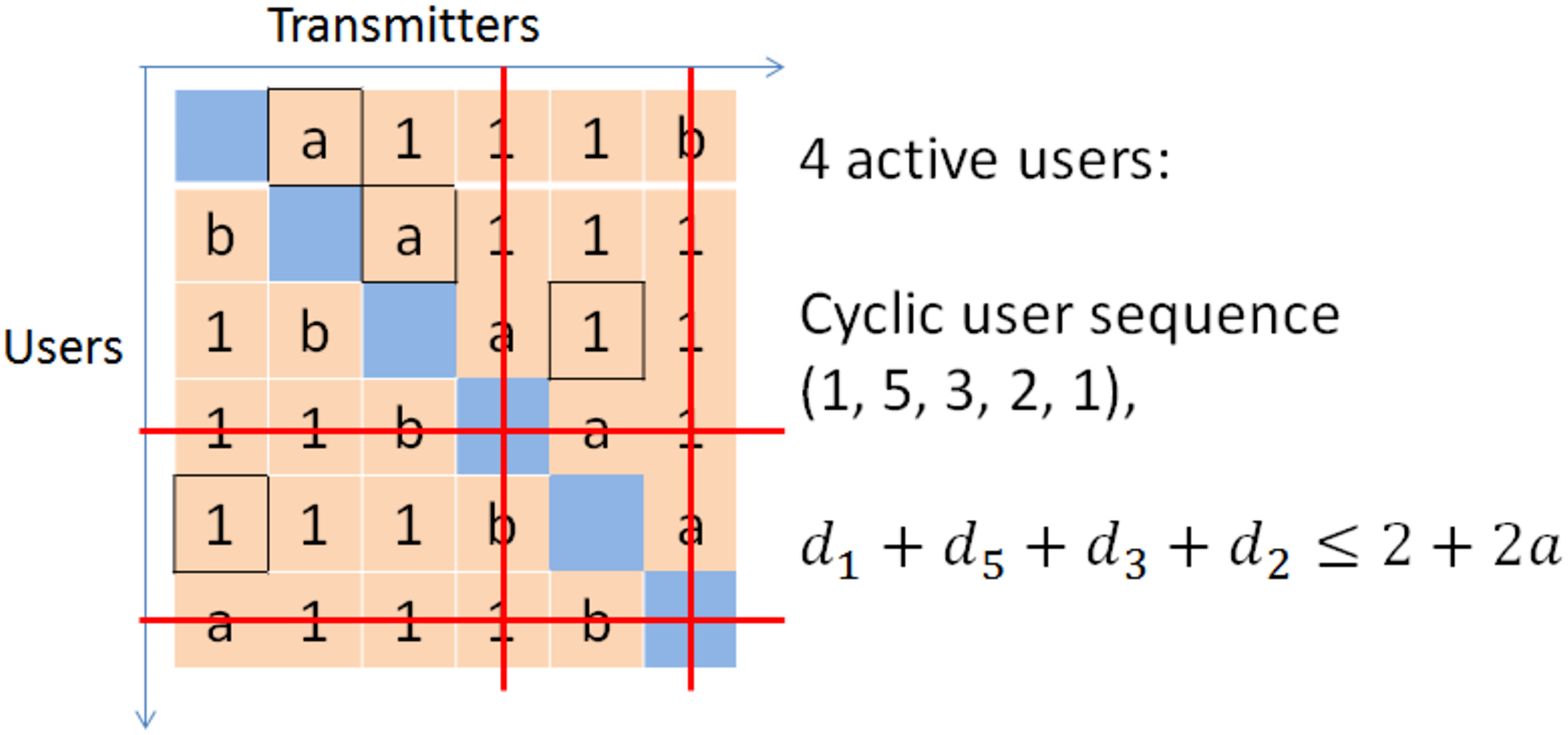}
                \label{fig:4active}
        }
        \subfigure[$3$ active users, $k{=}1{,}3{,}5$.]{
                \centering
                \includegraphics[width=0.42\textwidth,height=2.75cm]{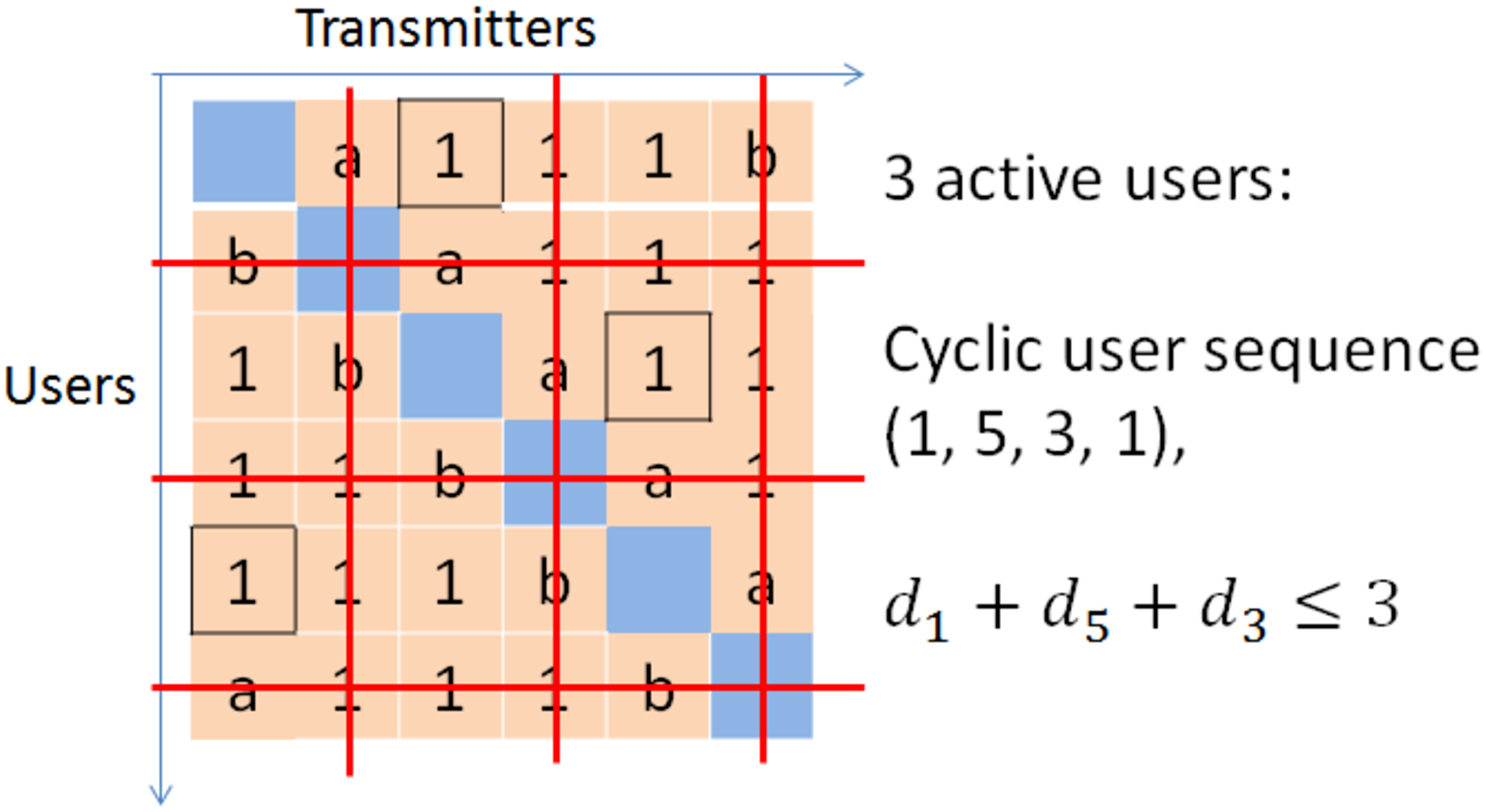}
                \label{fig:3active}
        }
\caption{Illustration of computing the sum DoF achieved by ZFBF with power control}\label{fig:zfbf}
\end{figure}
In this part, we compare the sum DoF achieved by the proposed TRS scheme with the sum DoF achieved by preliminary schemes. In RS, each user employs a fraction of the total power to unicast the private message, while employs the remaining power to transmit the common message. However, unlike the received signal presented in \eqref{eq:ykRS}, in the considered realistic scenarios, each user only decodes three common message transmitted by the dominant transmitters rather than all common messages. This fact implies that the DoF achieved by the common messages are specified by $|\mathcal{S}|$ different inequalities, rather than the single inequality $\sum_{k{\in}\mathcal{S}}d_k^c{\leq}1{-}\max_{j{\in}\mathcal{S}}r_j$ in \eqref{eq:DcpRS}. Hence, the achievable DoF region specified in Proposition \ref{prop:DRS} cannot be used to evaluate the DoF region achieved by RS in the considered realistic scenarios. Instead, we look at the sum DoF achieved by ZFBF with power control.

According to Remark \ref{rmk:ZFBF_region}, finding the maximum sum DoF achieved by ZFBF with power control requires a huge amount of efforts of evaluating the DoF region obtained for all the possible active user set $\mathcal{U}{\subseteq}\mathcal{K}$. To find a tractable result, we focus on the cyclic CSIT quality topology, i.e., $a_{k{,}k{+}1}{=}a$ and $a_{k{,}k{-}1}{=}b$, ${\forall}k{\in}\mathcal{K}$, where the index $k$ is based on modulus $K$. Note that according to the previous analysis, this topology is one of the worst topologies for the proposed TRS and the sum DoF achieved by the proposed TRS is $d_{s{,}TRS}^{\max}(\mathcal{K}{,}\mathbf{a}){=}\frac{K}{3}(1{+}\frac{b}{2}{+}\frac{3a}{2})$.

We evaluate the achievable sum DoF for each possible active user set $\mathcal{U}$ using \eqref{eq:DRS_KUp}, and the maximum of them yields the sum DoF achieved by ZFBF with power control. A $6$-user example is shown in Figure \ref{fig:zfbf}. Note that the $1$ in row $k$ and column $j$ with $j{\neq}k{-}1{,}k{+}1$ is obtained because the DoF achieved by ZFBF in the considered scenario is identical to the DoF achieved by ZFBF in a fully connected MISO IC with CSIT quality $a_{kj}{=}1$, ${\forall}j{\in}\mathcal{K}{\setminus}\{k{,}k{-}1{,}k{+}1\}$. When all the users are active, the achievable sum DoF $6a$ is given by inequality $\sum_{l{=}1}^{m}d_{i_l}^p{\leq}\sum_{l{=}1}^{m}a_{i_{l{-}1}i_l}$, where the cyclic user sequence is $(i_1{,}{\cdots}i_6){=}(1{,}6{,}5{,}4{,}3{,}2)$. Similarly, when there are $5$ active users, the achievable sum DoF is $1{+}4a$, given by the cyclic user sequence $(1{,}5{,}4{,}3{,}2)$. When there are $4$ active users, the best active users set that yields the maximum sum DoF is $\{1{,}2{,}3{,}5\}$ because user $5$ is not interfered with the other three users. The achievable sum DoF is $2{+}2a$. When there are $3$ active users, the sum DoF $3$ can be achieved by simply scheduling user $1$, $3$ and $5$. Hence, the maximum sum DoF achieved by ZFBF with power control is $\max\{6a{,}1{+}4a{,}2{+}2a{,}3\}$.

In general, when $K$ is an even number, by applying the same method as above, the maximum sum DoF achieved by ZFBF with power control is $\max_{n{:}{\frac{K}{2}}{\leq}n{\leq}K}\{K{-}n{+}(2n{-}K)a\}$, where $n$ stands for the number of active users. Thus, if $a{\geq}\frac{1}{2}$, we have $d_{s{,}zfbf}{=}Ka$; otherwise, we have $d_{s{,}zfbf}{=}\frac{K}{2}$. Comparing this result with the sum DoF $d_{s{,}TRS}^{\max}(\mathcal{K}{,}\mathbf{a}){=}\frac{K}{3}(1{+}\frac{b}{2}{+}\frac{3a}{2})$ achieved by the proposed TRS, it can be verified that $d_{s{,}TRS}^{\max}(\mathcal{K}{,}\mathbf{a}){>}d_{s{,}zfbf}$ as long as $b{+}3a{>}1$.

When $K$ is an odd number, the maximum sum DoF is achieved by ZFBF with power control is $\max\{\lfloor\frac{K}{2}\rfloor{,}\lfloor\frac{K}{2}\rfloor{-}1{+}a{+}b{,} 2\lfloor\frac{K}{2}\rfloor{-}n{+}1{+}(2n{-}2\lfloor\frac{K}{2}\rfloor{-}1)a\}$, where $\lfloor\frac{K}{2}\rfloor{+}2{\leq}n{\leq}K$. The number $\lfloor\frac{K}{2}\rfloor{-}1{+}a{+}b$ is the achievable sum DoF when there are $\lfloor\frac{K}{2}\rfloor{+}1$ active users. It is achieved by scheduling user $1$, $3$, $\cdots$, $2\lfloor\frac{K}{2}\rfloor{-}3$ who are not interfered with each other, and scheduling another two adjacent users, i.e., user $2\lfloor\frac{K}{2}\rfloor{-}1$ and user $2\lfloor\frac{K}{2}\rfloor$. The quantity $2\lfloor\frac{K}{2}\rfloor{-}n{+}1{+}(2n{-}2\lfloor\frac{K}{2}\rfloor{-}1)a$ is the achievable sum DoF when there are $n$ active users. It is achieved by scheduling $2\lfloor\frac{K}{2}\rfloor{-}n$ separated users, and $2n{-}2\lfloor\frac{K}{2}\rfloor$ adjacent users. Through some calculations, it can be verified that when $a{\geq}\frac{1}{2}$, we have $d_{s{,}zfbf}{=}Ka$; when $1{-}b{\leq}a{\leq}\frac{1}{2}$, we have $d_{s{,}zfbf}{=}\lfloor\frac{K}{2}\rfloor{-}1{+}a{+}b$; when $a{\leq}1{-}b$ and $a{\leq}\frac{1}{2}$, we have $d_{s{,}zfbf}{=}\lfloor\frac{K}{2}\rfloor$. Comparing with the sum DoF achieved by TRS, we conclude that $d_{s{,}TRS}^{\max}(\mathcal{K}{,}\mathbf{a}){>}d_{s{,}zfbf}$ as long as $b{+}3a{>}\frac{6}{K}\lfloor\frac{K}{2}\rfloor{-}2$.

Remarkably, we clarify that in the cyclic CSIT quality topology, the condition $b{+}3a{>}\frac{6}{K}\lfloor\frac{K}{2}\rfloor{-}2$ is a sufficient condition that TRS yields a sum DoF strictly greater than ZFBF with power control. When this condition does not hold, we can seek for an optimal active user set $\mathcal{S}^*$ and optimal power allocation policy $\mathbf{r}^*$, which maximize the sum DoF achieved by TRS.

\section{Conclusion}\label{sec:conclusion}
This paper, for the first time to our knowledge, studies the DoF of a $K$-cell interference channel where the CSIT of each interference link has an arbitrary quality of imperfectness. We firstly consider a Rate-Splitting approach where each user's data is split into a common part and a private part. The private messages are unicast along ZF-precoders using a fraction of total power, while the common messages are multicast using the remaining power and are to be decoded by all users. With an arbitrary power allocation for the private messages, we characterize the DoF region achieved by RS, and show that it covers the DoF region achieved by ZFBF with power control. Secondly, we propose a novel scheme called Topological RS. Compared to RS, the novelty lies in splitting the power used to transmit common messages into multiple layers. In each layer, with the properly assigned power level and ZF-precoders, we transmit common messages to be decoded by groups of users rather than all users. This multi-layer structure reduces the number of common messages decoded by each user, thus enhancing the DoF achieved by the common messages. The DoF region achieved by TRS is derived and is shown as a superset of the DoF region achieved by RS and ZFBF with power control. Besides, the sum DoF is studied from a graph theory perspective and the sum DoF of a class of realistic scenarios is characterized.

Apart from that, we would like to emphasize the usefulness of the weighted-sum interpretation that is used to design the TRS scheme. It bridges the MISO IC with imperfect CSIT and partially connected networks. By doing so, graph theory methodologies are introduced as powerful tools to analyze the DoF performance. From a sum DoF aspect, the benefit of weighted-sum interpretation is highlighted by deciding the DoF per common message or how many common messages can be transmitted in the corresponding partially connected network. This weighted-sum interpretation can be applied to many other scenarios, such as MISO networks with alternating CSIT qualities.

So far, the optimal DoF region and/or sum DoF of a $K$-cell interference channel with imperfect CSIT remains an open problem due to the lack of tight outer-bound. Our proposed TRS drives the inner-bound one-step further, and the obtained insights are transferrable to practical deployments.

%

\section*{Appendix}
\subsection{Proof of Proposition \ref{prop:DRS}}
The key part of the proof is to show the DoF region achieved by scheduling a subset $\mathcal{S}$ of users. Letting $\mathcal{D}_{RS}(\mathcal{S})$ denote the DoF region achieved by scheduling a subset $\mathcal{S}$ of users, it is obtained by taking the union of the DoF region achieved with all the possible power allocation $\mathbf{r}$, i.e., $\mathcal{D}_{RS}(\mathcal{S}){\triangleq}\bigcup_{{\forall}\mathbf{r}}\mathcal{D}_{RS}(\mathcal{S}{,}\mathbf{r})$.

With the proof presented in Appendix B, $\mathcal{D}_{RS}(\mathcal{S})$ is given by
\begin{IEEEeqnarray}{rcl}
\mathcal{D}_{RS}(\mathcal{S})&{=}&\bigcup_{{\forall}\mathcal{U}{\in}\mathcal{S}}\mathcal{D}_{RS}(\mathcal{S}{,}\mathcal{U}){,}\label{eq:DRS}
\end{IEEEeqnarray}
where $\mathcal{D}_{RS}(\mathcal{S}{,}\mathcal{U})$ is the set of $(d_1{,}\cdots{,}d_K){=}(d_1^c{,}\cdots{,}d_K^c){+}(d_1^p{,}\cdots{,}d_K^p)$ such that
\begin{IEEEeqnarray}{ccl}
d_k^p{=}0{,}{\forall}k{\in}\mathcal{K}{\setminus}\mathcal{U}{;}\,
0{\leq}d_k^p{\leq}1{,}{\forall}k{\in}\mathcal{U}{;}\,\sum_{l{=}1}^{m}d_{i_l}^p{\leq}\sum_{l{=}1}^{m}a_{i_{l{-}1}i_l}{,}
{\forall}(i_1{,}\cdots{,}i_m){\in}\Pi_{\mathcal{U}}{;}\label{eq:DRS_SUp}\\
d_k^c{=}0{,}{\forall}k{\in}\mathcal{K}{\setminus}\mathcal{S}{;}\,0{\leq}d_k^c{\leq}1{,}{\forall}k{\in}\mathcal{S}{;}\,
0{\leq}d_k^p{+}\sum_{j{\in}\mathcal{S}}d_j^c{\leq}1{,} {\forall}k{\in}\mathcal{U}{;}\nonumber\\
\sum_{j{\in}\mathcal{S}}d_j^c{+}\sum_{l{=}1}^{m}d_{i_l}^p{\leq}1{+}\sum_{l{=}2}^{m}a_{i_{l{-}1}i_l}{,} {\forall}(i_1{,}\cdots{,}i_m){\in}\Pi_{\mathcal{U}}{,}\label{eq:DRS_SUpc}
\end{IEEEeqnarray}
and $\Pi_{\mathcal{U}}$ is the set of all possible cyclic sequences of all subsets of $\mathcal{U}$ with cardinality no less than $2$.

In \eqref{eq:DRS_SUpc}, we see that for a certain set $\mathcal{U}$, by setting $d_k^c{=}0{,}{\forall}k{\in}\mathcal{K}{\setminus}\mathcal{S}$, $\mathcal{D}_{RS}(\mathcal{K}{,}\mathcal{U})$ becomes $\mathcal{D}_{RS}(\mathcal{S}{,}\mathcal{U})$. Then, it is immediate that $\mathcal{D}_{RS}(\mathcal{S}{,}\mathcal{U}){\subseteq} \mathcal{D}_{RS}(\mathcal{K}{,}\mathcal{U})$. This fact allows us to obtain the DoF region achieved by RS as $\mathcal{D}_{RS}{=}\bigcup_{{\forall}\mathcal{S}{\subseteq}\mathcal{K}{,}{\forall}\mathcal{U}{\subseteq}\mathcal{S}} \mathcal{D}_{RS}(\mathcal{S}{,}\mathcal{U}){=} \bigcup_{{\forall}\mathcal{U}{\subseteq}\mathcal{K}}\mathcal{D}_{RS}(\mathcal{K}{,}\mathcal{U})$, which completes the proof.

\subsection{Proof of \eqref{eq:DRS}}

The proof follows the footsteps in \cite[Section III.B and Appendix D]{Geng15TIN}. It has two steps. The first step is to characterize $\mathcal{D}_{RS}(\mathcal{S}{,}\mathcal{U})$. As it will be shown later on, the union of $\mathcal{D}_{RS}(\mathcal{S}{,}\mathcal{U})$ over $\mathcal{U}$ is a subset of $\mathcal{D}_{RS}(\mathcal{S})$. The second step is to show $\mathcal{D}_{RS}(\mathcal{S}){\subseteq}{\bigcup}_{{\forall}\mathcal{U}{\in}\mathcal{S}}\mathcal{D}_{RS}(\mathcal{S}{,}\mathcal{U})$.

\subsubsection{Step 1} For user $k{\in}\mathcal{S}{\setminus}\mathcal{U}$, we choose $r_k{=}0$ (Note that this choice is equivalent to $r_k{=}-\infty$ from a DoF perspective). Besides, we consider a polyhedral relaxation on the DoF tuple specified in \eqref{eq:DcpRS} by requiring $r_k{-}\max_{j{:}j{\in}\mathcal{S}{\setminus}k}(r_j{-}a_{kj})^+$ to be non-negative. Then, the achievable DoF region via polyhedral relaxation is the set of the DoF tuples such that
\begin{IEEEeqnarray}{rcl}
0{\leq}d_k^p{\leq}r_k{-}\max_{j{:}j{\in}\mathcal{U}{\setminus}k}(r_j{-}a_{kj})^+{,}{\forall}k{\in}\mathcal{U}{,} &\quad&
\sum_{k{\in}\mathcal{S}}d_k^c{\leq}1{-}\max_{j{\in}\mathcal{S}}r_j{.}\label{eq:dk_poly}
\end{IEEEeqnarray}
The polyhedral relaxation requires that the power exponents $\mathbf{r}$ such that the power of interference overheard by user $k$ is lower than the received power of user $k$'s desired private message. Otherwise, the power exponents $\mathbf{r}$ are regarded as achieving an invalid DoF tuple. However, according to \eqref{eq:DcpRS}, those power exponents actually lead to a valid DoF tuple. Hence, the DoF region is shrinked by the polyhedral relaxation. Now, denoting $d^c{=}\sum_{k{\in}\mathcal{S}}d_k^c$, we rewrite \eqref{eq:dk_poly} as
\begin{IEEEeqnarray}{rcl}
d_k^p&{\leq}&r_k{-}(r_j{-}a_{kj}){\Rightarrow}r_j{-}r_k{\leq}a_{kj}{-}d_k^p{,}{\forall}k{\in}\mathcal{U}{,}{\forall}j{\in}\mathcal{U}{\setminus}k{,} \label{eq:lkj}\\
d_k^p&{\leq}&r_k{\Rightarrow}-r_k{\leq}-d_k^p{,}\label{eq:l0k}\\
d_k^p&{\geq}&0{,}\\
d^c&{\leq}&1{-}r_k{\Rightarrow}r_k{\leq}1{-}d^c{,}{\forall}k{\in}\mathcal{U}{.}\label{eq:lk0}
\end{IEEEeqnarray}

Following the footsteps in \cite[Section III.B]{Geng15TIN}, we define a fully connected directed graph $\mathcal{G}(\mathcal{V}{,}\mathcal{E})$, where $\mathcal{V}{=}\{v_0{,}v_1{,}\cdots{,}v_{|\mathcal{U}|}\}$ is the vertex set and $\mathcal{E}$ is the set of the arcs. The length assigned to the arc from $v_j$ to $v_k$ is $l(v_j{,}v_k){=}a_{kj}{-}d_k^p$ for $i{,}j{\neq}0$, and the length assigned to the arc from $v_k$ to $v_0$ is $l(v_k{,}v_0){=}1{-}d^c$, while the length assigned to the arc from $v_0$ to $v_k$ is $l(v_0{,}v_k){=}-d_k^p$.

As defined in \cite{Comb}, a function $f$ is called a potential if for every two vertices, $a$ and $b$, such that $l(a{,}b){\geq}f(a){-}f(b)$ holds. Then, by setting $f(v_0){=}0$ and $f(v_k){=}r_k$, we see that any achievable DoF tuple such that \eqref{eq:dk_poly} holds, corresponds to a potential function for the directed graph. Moreover, the potential theorem \cite[Theorem 8.2]{Comb} suggests that there exists a potential function for a directed graph if and only if each circuit of $\mathcal{G}$ has a non-negative length. Thus, a DoF tuple is said satisfying \eqref{eq:dk_poly} if and only if each circuit of $\mathcal{G}$ has a non-negative length.
\begin{itemize}
  \item For the circuits $(v_0{,}v_k{,}v_0)$, we have $1{-}d^c{-}d_k^p{\geq}0$, yielding $d^c{+}d_k^p{\leq}1$, ${\forall}k{\in}\mathcal{U}$.
  \item For the circuits $(v_{i_0}{,}\cdots{,}v_{i_m})$ with $i_0{=}i_m$, ${\forall}(i_1{,}\cdots{,}i_m){\in}\Pi_{\mathcal{U}}$, ${\forall}m{\geq}2$, we have $\sum_{l{=}1}^{m}d_{i_l}^p{\leq}\sum_{l{=}1}^{m}a_{i_{l{-}1}i_l}$.
  \item For the circuits $(v_0{,}v_{i_1}{,}\cdots{,}v_{i_m}{,}v_0)$, we have $d^c{+}\sum_{l{=}1}^{m}d_{i_l}^p{\leq}1{+}\sum_{l{=}2}^{m}a_{i_{l{-}1}i_l}$.
\end{itemize}
Consequently, $\mathcal{D}_{RS}(\mathcal{S}{,}\mathcal{U})$ characterized by \eqref{eq:DRS_SUp} and \eqref{eq:DRS_SUpc} is immediate.

\subsubsection{Step 2}
To show $\mathcal{D}_{RS}(\mathcal{S}){\subseteq}{\bigcup}_{{\forall}\mathcal{U}{\in}\mathcal{S}}\mathcal{D}_{RS}(\mathcal{S}{,}\mathcal{U})$, we firstly introduce $\mathcal{D}_{RS}^\prime(\mathcal{S}{,}\mathcal{U})$ as
\begin{IEEEeqnarray}{rcl}
\mathcal{D}_{RS}^\prime(\mathcal{S}{,}\mathcal{U})&{=}& \{(d_1^c{,}\cdots{,}d_K^c{,}d_1^p{,}\cdots{,}d_K^p){\in}\mathcal{D}_{RS}(\mathcal{S}{,}\mathcal{U}){,}d_k^p{>}0{,}{\forall}k{\in}\mathcal{U}\}.
\end{IEEEeqnarray}
Then, it is clear that $\mathcal{D}_{RS}^\prime(\mathcal{S}{,}\mathcal{U}){\subseteq}\mathcal{D}_{RS}(\mathcal{S}{,}\mathcal{U})$, the remaining work is to show $\mathcal{D}_{RS}(\mathcal{S}){\subseteq}{\bigcup}_{{\forall}\mathcal{U}{\in}\mathcal{S}}\mathcal{D}_{RS}^\prime(\mathcal{S}{,}\mathcal{U})$. We aim to show that a DoF tuple lying outside ${\bigcup}_{{\forall}\mathcal{U}{\in}\mathcal{S}}\mathcal{D}_{RS}^\prime(\mathcal{S}{,}\mathcal{U})$ also lies outside $\mathcal{D}_{RS}(\mathcal{S})$. Such a DoF tuple has at least one of the following features:
\begin{itemize}
  \item $d_k^p{<}0$ or $d_k^p{>}1$ or $d^c{+}d_k^p{>}1$ for some user $k{\in}\mathcal{S}$.
  \item $\sum_{l{=}1}^{m}d_{i_l}^p{>}\sum_{l{=}1}^{m}a_{i_{l{-}1}i_l}$ for some cyclic sequence ${\forall}(i_1{,}\cdots{,}i_m){\in}\Pi_{\mathcal{U}}$.
  \item $d^c{+}\sum_{l{=}1}^{m}d_{i_l}^p{>}1{+}\sum_{l{=}2}^{m}a_{i_{l{-}1}i_l}$ for some users ${\forall}(i_1{,}\cdots{,}i_m){\in}\Pi_{\mathcal{U}}$.
\end{itemize}

It has been shown in \cite{Geng15TIN} that the DoF tuple satisfying the first and second feature cannot be included in $\mathcal{D}_{RS}(\mathcal{S})$. It remains to show that the DoF tuple satisfying the third feature cannot belong to $\mathcal{D}_{RS}(\mathcal{S})$. To this end, we employ the similar method in \cite{Geng15TIN}. Assuming the DoF tuple satisfying the third feature lies in $\mathcal{D}_{RS}(\mathcal{S})$. Then, there exists some $r_{i_l}$'s such that
\begin{IEEEeqnarray}{rcl}
d^c{+}\sum_{l{=}1}^{m}r_{i_l}{-}\max_{i_j{\in}\mathcal{S}{\setminus}i_l}(r_{i_j}{-}a_{i_li_j})^+&{>}&1{+}\sum_{l{=}2}^{m}a_{i_{l{-}1}i_l}\\
{\Rightarrow}d^c{-}1{+}\sum_{l{=}1}^{m}r_{i_l}{-}\max_{i_j{\in}\mathcal{S}{\setminus}i_l}(r_{i_j}{-}a_{i_li_j})^+{-}
\sum_{l{=}2}^{m}a_{i_{l{-}1}i_l}&{>}&0{.}
\label{eq:contradict}
\end{IEEEeqnarray}
Since $\max_{i_j{\in}\mathcal{S}{\setminus}i_l}(r_{i_j}{-}a_{i_li_j})^+{\geq}r_{i_{l{-}1}}{-}a_{i_li_{l{-}1}}$, the l.h.s. of \eqref{eq:contradict} can be upper-bounded as
\begin{IEEEeqnarray}{rcl}
d^c{-}1&{+}&r_{i_m}{-}\max_{i_j{\in}\mathcal{S}{\setminus}i_m}(r_{i_j}{-}a_{i_mi_j})^+{+}
\sum_{l{=}2}^{m}r_{i_{l{-}1}}{-}\max_{i_j{\in}\mathcal{S}{\setminus}i_{l{-}1}}(r_{i_j}{-}a_{i_{l{-}1}i_j})^+{-}a_{i_{l{-}1}i_l}\nonumber\\
&{\leq}&d^c{-}1{+}r_{i_m}{-}\max_{i_j{\in}\mathcal{S}{\setminus}i_m}(r_{i_j}{-}a_{i_mi_j})^+{+} \sum_{l{=}2}^{m}r_{i_{l{-}1}}{-}r_{i_{l{-}1}}{+}a_{i_{l{-}1}i_l}{-}a_{i_{l{-}1}i_l}{\leq}0{,}
\end{IEEEeqnarray}
which contradicts \eqref{eq:contradict}. This implies that $\mathcal{D}_{RS}(\mathcal{S}){\subseteq}{\bigcup}_{{\forall}\mathcal{U}{\in}\mathcal{S}}\mathcal{D}_{RS}^\prime(\mathcal{S}{,}\mathcal{U})$, which completes the proof.

\subsection{Proof of Proposition \ref{prop:DoFtuple}}
We firstly show the DoF tuple achieved by common message $\{w_k^i\}_{k{\in}\mathcal{S}}{,}i{\geq}2$ in \eqref{eq:DTRS_2} and \eqref{eq:DTRS_3}, and secondly show the DoF tuple achieved by private messages $\{w_k^1\}_{k{\in}\mathcal{S}}$ in \eqref{eq:DTRS_1}.

For user $k{,}{\forall}k{\in}\mathcal{K}$, when common messages of set $\mathcal{T}_k^i(\mathcal{S}{,}\mathbf{r})$ are decoded, it is assumed the common messages of set $\mathcal{T}_k^l(\mathcal{S}{,}\mathbf{r}){,}{\forall}l{>}i$, have been successively recovered and removed. Then, denoting the noise plus the interferences within the noise power by $\tilde{n}_k$, the received signal is expressed as
\begin{IEEEeqnarray}{rcl}
\tilde{y}_k&{=}&\sum_{j{:}w_j^i{\in}\mathcal{T}_k^i(\mathcal{S}{,}\mathbf{r})} \underbrace{\mathbf{h}_{kj}^H\mathbf{p}_j^iw_j^i}_{P^{a_{\pi(i{-}1)}}}{+}
\sum_{l{=}2}^{i{-}1}\sum_{j{:}w_j^i{\in}\mathcal{T}_k^i(\mathcal{S}{,}\mathbf{r})} \underbrace{\mathbf{h}_{kj}^H\mathbf{p}_j^iw_j^i}_{P^{a_{\pi(l{-}1)}}}{+}
\left(\underbrace{\mathbf{h}_{kk}^H\mathbf{p}_k^1w_k^1}_{P^{r_k}}{+}
\sum_{j{\in}\mathcal{S}{\setminus}k} \underbrace{\mathbf{h}_{kj}^H\mathbf{p}_j^1w_j^1}_{P^{r_j{-}a_{kj}}}\right){+}\underbrace{\tilde{n}_k}_{P^0}{.}
\end{IEEEeqnarray}
This system corresponds to a multiple-access-channel (MAC) where user $k$ wishes to decode messages of set $\mathcal{T}_k^i(\mathcal{S}{,}\mathbf{r})$. Following the capacity region of MAC \cite{BrunoBook}, the sum rate of any non-empty subset $\mathcal{M}{\subseteq}\mathcal{T}_k^i(\mathcal{S}{,}\mathbf{r})$ of messages are given by
\begin{IEEEeqnarray}{rcl}
\sum_{j{:}w_j^i{\in}\mathcal{M}}R_j&{\leq}&I(\mathcal{M}{;}\tilde{y}_k{|}\mathcal{T}_k^i(\mathcal{S}{,}\mathbf{r}){\setminus}\mathcal{M})
{=}h(\tilde{y}_k{|}\mathcal{T}_k^i(\mathcal{S}{,}\mathbf{r}){\setminus}\mathcal{M}){-}h(\tilde{y}_k{|}\mathcal{T}_k^i(\mathcal{S}{,}\mathbf{r})){,}
\label{eq:entropy_wki}
\end{IEEEeqnarray}
Considering that the input are random Gaussian codes, the entropies in \eqref{eq:entropy_wki} are equal to
\begin{IEEEeqnarray}{rcl}
h(\tilde{y}_k{|}\mathcal{T}_k^i(\mathcal{S}{,}\mathbf{r}){\setminus}\mathcal{M})&{=}& a_{\pi(i{-}1)}\log_2P{+}O(1){,}i{\geq}2{,}\label{eq:entropy_value_wki1}\\
h(\tilde{y}_k{|}\mathcal{T}_k^i(\mathcal{S}{,}\mathbf{r}))&{=}&a_{\pi(i{-}2)}\log_2P{+}O(1){,}i{\geq}3{,}\label{eq:entropy_value_wki2}\\
h(\tilde{y}_k{|}\mathcal{T}_k^2(\mathcal{S}{,}\mathbf{r}))&{=}& \max\{r_k{,}\max_{j{\in}\mathcal{S}{\setminus}k}r_j{-}a_{kj}\}\log_2P{+}O(1){,}\label{eq:entropy_value_wki3}
\end{IEEEeqnarray}
where $O(1)$ refers to the terms that do not change with $P$. Substituting \eqref{eq:entropy_value_wki1}, \eqref{eq:entropy_value_wki2} and \eqref{eq:entropy_value_wki3} into \eqref{eq:entropy_wki} and dividing them by $\log_2P$ lead to \eqref{eq:DTRS_2} and \eqref{eq:DTRS_3}.

When user $k$ decodes private message $w_k^1$, all the common messages have been recovered and removed. By treating the undesired private messages as noise, the rate of $w_k^1$ writes as
\begin{IEEEeqnarray}{rcl}
R_k^1&{\leq}&I(w_k^1{;}y_k{|}\mathcal{T}_k^i(\mathcal{S}{,}\mathbf{r}){,}i{=}2{,}\cdots{,}L{+}2)\\
&{=}&h(y_k{|}\mathcal{T}_k^i(\mathcal{S}{,}\mathbf{r}){,}i{=}2{,}\cdots{,}L{+}2){-}
h(y_k{|}\mathcal{T}_k^i(\mathcal{S}{,}\mathbf{r}){,}i{=}2{,}\cdots{,}L{+}2{,}w_k^1)\\
&{=}&h(\sum_{j{\in}\mathcal{S}}\mathbf{h}_{kj}^H\mathbf{p}_j^1w_j^1{+}\tilde{n}_k){-}
h(\sum_{j{\in}\mathcal{S}{\setminus}k}\mathbf{h}_{kj}^H\mathbf{p}_j^1w_j^1{+}\tilde{n}_k)\\
&{=}&r_k\log_2P{-}\max_{j{\in}\mathcal{S}{\setminus}k}(r_j{-}a_{kj})^+\log_2P{+}O(1){.}
\end{IEEEeqnarray}
Then, \eqref{eq:DTRS_1} is immediate.



\subsection{Proof of the sum DoF of the realistic scenario considered in Section \ref{sec:real}}
Without loss of generality, we consider the case $\Rowrk(\mathcal{M}^2){=}K{-}1$ and the edges $\mathcal{T}_k^2{=}\{\hat{w}_k^2{,}\hat{w}_{k{+}1}^2\}$, ${\forall}k{=}1{,}K{-}1$, and $\mathcal{T}_K{=}\mathcal{T}_{K{-}1}$. Clearly, in this scenario, there is one chain with length $K$. The inequality $\mathbf{M}^2\hat{\mathbf{d}}^i{\leq}1$ is explicitly expressed as $\hat{d}_1{+}\hat{d}_2{\leq}1$, $\hat{d}_2{+}\hat{d}_3{\leq}1$, $\hat{d}_3{+}\hat{d}_4{\leq}1$, $\cdots$, $\hat{d}_{K{-}1}{+}\hat{d}_K{\leq}1$. Adding up the inequalities with odd index yields
\begin{IEEEeqnarray}{rcl}
\text{If $K$ is even, }&\quad&\sum_{l{=}1}^{\frac{K}{2}}\hat{d}_{2l{-}1}^2{+}\hat{d}_{2l}^2{=}
\sum_{k{=}1}^{K}\hat{d}_k^2{\leq}\frac{K}{2}{;}\label{eq:Keven}\\
\text{If $K$ is odd, }&\quad&\sum_{l{=}1}^{\frac{K{-}1}{2}}\hat{d}_{2l{-}1}^2{+}\hat{d}_{2l}^2{=}\sum_{k{=}1}^{K{-}1}\hat{d}_k^2{\leq}\frac{K{-}1}{2}
{.}\label{eq:Kodd}
\end{IEEEeqnarray}
Inequality \eqref{eq:Keven} provides an upper-bound on the sum DoF of common messages $\{\hat{w}_k^2\}_{k{\in}\mathcal{K}}$ when $K$ is an even number. The equality holds with $\hat{d}_1^2{=}\hat{d}_3^2{=}\cdots{=}\hat{d}_{K{-}1}^2{=}1$. When $K$ is an odd number, we obtain an upper-bound on the sum DoF of common messages $\{\hat{w}_k^2\}_{k{\in}\mathcal{K}}$ by adding $\hat{d}_K^2$ to both sides of \eqref{eq:Kodd} as
\begin{IEEEeqnarray}{rcl}
\sum_{k{=}1}^{K}\hat{d}_k^2&{\leq}&\frac{K{-}1}{2}{+}\hat{d}_k^2{\leq}\frac{K{+}1}{2}{.}\label{eq:Kodd2}
\end{IEEEeqnarray}
The inequality \eqref{eq:Kodd2} is obtained due to the fact that $\hat{d}_K^2{\leq}1$. Then, using \eqref{eq:Kodd2} we can obtain the maximum sum DoF of common messages $\{\hat{w}_k^2\}_{k{\in}\mathcal{K}}$ as $\frac{K{+}1}{2}$ by taking $\hat{d}_1^2{=}\hat{d}_3^2{=}\cdots{=}\hat{d}_{K{-}2}^2{=}\hat{d}_{K}^2{=}1$.

\bibliographystyle{IEEEtran}

\bibliography{TIM}

\begin{thebibliography}{10}
\providecommand{\url}[1]{#1}
\csname url@samestyle\endcsname
\providecommand{\newblock}{\relax}
\providecommand{\bibinfo}[2]{#2}
\providecommand{\BIBentrySTDinterwordspacing}{\spaceskip=0pt\relax}
\providecommand{\BIBentryALTinterwordstretchfactor}{4}
\providecommand{\BIBentryALTinterwordspacing}{\spaceskip=\fontdimen2\font plus
\BIBentryALTinterwordstretchfactor\fontdimen3\font minus
  \fontdimen4\font\relax}
\providecommand{\BIBforeignlanguage}[2]{{%
\expandafter\ifx\csname l@#1\endcsname\relax
\typeout{** WARNING: IEEEtran.bst: No hyphenation pattern has been}%
\typeout{** loaded for the language `#1'. Using the pattern for}%
\typeout{** the default language instead.}%
\else
\language=\csname l@#1\endcsname
\fi
#2}}
\providecommand{\BIBdecl}{\relax}
\BIBdecl

\bibitem{Ges12}
S.~Yang, M.~Kobayashi, D.~Gesbert, and X.~Yi, ``Degrees of freedom of time
  correlated {MISO} broadcast channel with delayed {CSIT},'' \emph{IEEE Trans.
  Inf. Theory}, vol.~59, no.~1, pp. 315--328, Jan. 2013.

\bibitem{HanKobayashi}
T.~Han and K.~Kobayashi, ``A new achievable rate region for the interference
  channel,'' \emph{IEEE Trans. on Infor. Theory}, vol.~27, no.~1, pp. 49--60,
  Jan 1981.

\bibitem{Etkin08}
R.~Etkin, D.~Tse, and H.~Wang, ``Gaussian interference channel capacity to
  within one bit,'' \emph{IEEE Trans. on Infor. Theory}, vol.~54, no.~12, pp.
  5534--5562, Dec 2008.

\bibitem{Davoodi14}
A.~G. Davoodi and S.~A. Jafar, ``Aligned image sets under channel uncertainty:
  Settling a conjecture by {L}apidoth, {S}hamai and {W}igger on the collapse of
  degrees of freedom under finite precision {CSIT},'' \emph{available on Arxiv:
  http://arxiv.org/abs/1403.1541}, 2014.

\bibitem{RateAnalysis}
C.~Hao, Y.~Wu, and B.~Clerckx, ``{R}ate analysis of two-receiver {MISO}
  broadcast channel with finite rate feedback: A rate-splitting approach,''
  \emph{IEEE trans. on Commn.}, vol.~63, no.~9, pp. 3232--3246, Sept 2015.

\bibitem{HamdiRS}
H.~Joudeh and B.~Clerckx, ``{S}um-rate maximization for linearly precoded
  downlink multiuser {MISO} systems with partial {CSIT}: A rate-splitting
  approach,'' \emph{submitted to publication}, May. 2015.

\bibitem{Mingbo_RS_mMIMO}
M.~Dai and B.~Clerckx, ``{A} rate splitting strategy for massive {MIMO} with
  imperfect {CSIT},'' \emph{submitted to IEEE trans. on Wireless Commn.}, 2015.

\bibitem{Tandon12}
R.~Tandon, S.~Jafar, S.~Shamai~Shitz, and H.~Poor, ``On the synergistic
  benefits of alternating {CSIT} for the {MISO} broadcast channel,'' \emph{IEEE
  Trans. Inf. Theory.}, vol.~59, no.~7, 2013.

\bibitem{icc13freq}
C.~Hao and B.~Clerckx, ``{I}mperfect and unmatched {CSIT} is still useful for
  the frequency correlated {MISO} broadcast channel,'' in \emph{in Proc. of
  2013 IEEE Int. Conf. on Commun. (ICC)}, Budapest, Hungary, June 2013, pp.
  3181--3186.

\bibitem{pimrc2013}
------, ``{MISO} broadcast channel with imperfect and (un)matched {CSIT} in the
  frequency domain: {DoF} region and transmission strategies,'' in \emph{Proc.
  of IEEE Int. Symp. on Personal, Indoor and Mobile Radio Commun. (PIMRC)
  2013}, London, United Kingdom, Sept. 2013.

\bibitem{Jinyuan_evolving_misobc}
J.~Chen and P.~Elia, ``Toward the performance versus feedback tradeoff for the
  two-user {MISO} broadcast channel,'' \emph{IEEE Trans. Inf. Theory}, vol.~59,
  no.~12, pp. 8336--8356, Dec 2013.

\bibitem{Borzoo_Kuser}
B.~Rassouli, C.~Hao, and B.~Clerckx, ``Dof analysis of the mimo broadcast
  channel with alternating/hybrid csit,'' \emph{to be appeared in IEEE Trans.
  on Infor. Theory}, 2015.

\bibitem{xinping_miso_ic}
X.~Yi, D.~Gesbert, S.~Yang, and M.~Kobayashi, ``On the dof of the
  multiple-antenna time correlated interference channel with delayed csit,'' in
  \emph{2012 Conference Record of the Forty Sixth Asilomar Conference on
  Signals, Systems and Computers (ASILOMAR)}, Nov 2012, pp. 1566--1570.

\bibitem{Hao15MIMODoF}
C.~Hao, B.~Rassouli, and B.~Clerckx, ``{A}chievable {DoF} regions of {MIMO}
  networks with imperfect {CSIT},'' \emph{submitted to IEEE Trans. on Infor.
  Theory}, May 2015.

\bibitem{BrunoBook}
B.~Clerckx and C.~Oestges, \emph{MIMO Wireless Networks: Channels, Techniques
  and Standards for Multi-Antenna, Multi-User and Multi-Cell Systems}.\hskip
  1em plus 0.5em minus 0.4em\relax Academic Press, 2013.

\bibitem{Gou12}
T.~Gou and S.~Jafar, ``{O}ptimal use of current and outdated channel state
  information: {D}egrees of {F}reedom of the {MISO BC} with mixed {CSIT},''
  \emph{IEEE Commun. Lett.}, vol.~16, no.~7, pp. 1084 --1087, july 2012.

\bibitem{Geng15TIN}
C.~Geng, N.~Naderializadeh, A.~S. Avestimehr, and S.~A. Jafar, ``{O}n the
  optimality of treating interference as noise,'' \emph{IEEE Trans. on Infor.
  Theory}, vol.~61, no.~4, pp. 1753--1767, April 2015.

\bibitem{JafarTIM}
S.~Jafar, ``Topological interference management through index coding,''
  \emph{IEEE Trans. on Infor. Theory}, vol.~60, no.~1, pp. 529--568, Jan 2014.

\bibitem{Maleki14_IA_IndexCoding}
H.~Maleki, V.~Cadambe, and S.~Jafar, ``Index coding-an interference alignment
  perspective,'' \emph{IEEE Trans. on Infor. Theory}, vol.~60, no.~9, pp.
  5402--5432, Sept 2014.

\bibitem{Geng13_TIMTIN}
C.~Geng, H.~Sun, and S.~Jafar, ``Multilevel topological interference
  management,'' in \emph{2013 IEEE Infor. Theory Workshop (ITW)}, Sept 2013,
  pp. 1--5.

\bibitem{Sun13_TIMAlt}
H.~Sun, C.~Geng, and S.~Jafar, ``Topological interference management with
  alternating connectivity,'' in \emph{2013 IEEE Inter. Symp. on Infor. Theory
  Proceedings (ISIT)}, July 2013, pp. 399--403.

\bibitem{Naderializadeh15}
N.~Naderializadeh and A.~Avestimehr, ``{I}nterference networks with no {CSIT}:
  Impact of topology,'' \emph{IEEE Trans. on Infor. Theory}, vol.~61, no.~2,
  pp. 917--938, Feb 2015.

\bibitem{Xinping15_TIMCoMP}
X.~Yi and D.~Gesbert, ``{T}opological interference management with transmitter
  cooperation,'' \emph{IEEE Trans. on Infor. Theory}, vol.~61, no.~11, pp.
  6107--6130, Nov 2015.

\bibitem{FracGraTheo}
E.~R. Scheinerman and D.~H. Ullman, \emph{Fractional Graph Theory, A Rational
  Approach to the Theory of Graphs}.\hskip 1em plus 0.5em minus 0.4em\relax
  John Wiley and Sons, 2008.

\bibitem{Comb}
A.~Schrijver, \emph{Combinatorial Optimization}.\hskip 1em plus 0.5em minus
  0.4em\relax Springer-Verlag, 2003.

\end{thebibliography}

\end{document}